\documentclass[12pt]{article}

\usepackage{amssymb,latexsym,amsmath,amsthm,bm,amsfonts}
\usepackage{color,xcolor,setspace,multirow,booktabs}
\usepackage[round, authoryear]{natbib}
\usepackage[titletoc,title]{appendix}
\usepackage[top=1in, bottom=1in, left=0.7in, right=0.7in]{geometry}
\usepackage[utf8]{inputenc}
\usepackage{graphicx, float}
\usepackage{stmaryrd, textcomp, enumitem}

\definecolor{darkblue}{rgb}{0.0, 0.0, 0.4}
\usepackage[colorlinks=true,allcolors=darkblue]{hyperref}

\theoremstyle{plain}
\newtheorem{theorem}{Theorem}[section]

\newtheorem{lemma}{Lemma}[section]

\newlist{assumpenum}{enumerate}{1}
\setlist[assumpenum]{leftmargin=3.5em, label=C\arabic*), ref=C\arabic*, font=\bfseries}
\newlist{contrienum}{enumerate}{1}
\setlist[contrienum]{leftmargin=3.5em, label=\arabic*), ref=\arabic*, font=\bfseries}
\newlist{lemmaitemenum}{enumerate}{1}
\setlist[lemmaitemenum]{leftmargin=3.5em, label=(\thelemma.\arabic*), ref=(\thelemma.\arabic*)}

\def\bs#1{\boldsymbol{#1}}

\newcommand\independent{\protect\mathpalette{\protect\independenT}{\perp}}
\def\independenT#1#2{\mathrel{\rlap{$#1#2$}\mkern2mu{#1#2}}}

\usepackage{authblk}

\usepackage[symbol]{footmisc}

\begin{document}
\allowdisplaybreaks
\doublespacing

\baselineskip=18pt

\title{{\bf Improved Efficiency for Cross-Arm Comparisons via Platform Designs}}
\author[1]{Tzu-Jung Huang}
\author[1,2]{Alex Luedtke}
\author[ \hspace{-1ex}]{the AMP Investigators Group\footnote{Complete list of investigators in the supplement.}}

\affil[1]{\small Department of Statistics, University of Washington}
\affil[2]{\small Vaccine and Infectious Disease Division, Fred Hutchinson Cancer Research Center}

\date{\today}
\maketitle

\fontsize{12}{16pt plus.8pt minus .6pt}\selectfont

\begin{abstract}
  Though platform trials have been touted for their flexibility and streamlined use of trial resources, their statistical efficiency is not well understood. 
  We fill this gap by establishing their greater efficiency for comparing the relative efficacy of multiple interventions over using several separate, two-arm trials, 
  where the relative efficacy of an arbitrary pair of interventions is evaluated by contrasting their relative risks as compared to control. In theoretical and numerical studies, we demonstrate that the inference of such a contrast using data from a platform trial enjoys identical or better precision than using data from separate trials, even when the former enrolls substantially fewer participants. This benefit is attributed to the sharing of  
  controls among interventions under contemporaneous randomization, which is a key feature of platform trials. 
  We further provide a novel procedure for establishing the non-inferiority of a given intervention relative to the most efficacious of the other interventions under evaluation, where this procedure is adaptive in the sense that it need not be \textit{a priori} known which of these other interventions is most efficacious. 
  Our numerical studies show that this testing procedure can attain substantially better power when the data arise from a platform trial rather than multiple separate trials. Our results are illustrated using data from two monoclonal antibody trials for the prevention of HIV.
\end{abstract}

\section{Introduction}
This work is motivated by the World Health Organization's (WHO's) Solidarity Trials for Covid-19 vaccines \citep{Krause2020} and treatments \citep{WHOSolidarityTrialConsortium2021}, which seek to concurrently evaluate multiple candidate vaccines and treatments to contain the burden and spread of Covid-19. To achieve this aim, the investigators are using a platform design \citep{Sridhara2015, Saville2016, Woodcock2017}. Such designs make it possible to complete a simultaneous evaluation of various public health interventions within a single trial. In contrast to more traditional multi-arm designs, platform trials allow for the evaluation of candidate interventions that may be available at different times and various international sites.
Having this flexibility increases the chances of generating reliable evidence to determine which interventions work effectively.
Owing to these advantages, platform designs have recently been advocated to evaluate candidate interventions during disease outbreaks \citep{Dean2020}.

Platform trials make more efficient use of resources than do separate, independently-conducted trials by reducing the number of participants enrolled on the control arm \citep{Woodcock2017}. 
Consequently, trial resources can be redistributed to enroll participants on the active intervention arms. For sponsors, this redistribution has the benefit of potentially decreasing the total required sample size, while, for participants, it has the appeal of increasing the chance that they will receive an experimental intervention that they could not receive outside of the trial. Moreover, by using centralized governance, platform trials amortize the cost of establishing study sites across multiple active interventions and can ensure common eligibility criteria and study procedures across all intervention-control comparisons. In addition, as each candidate active intervention is monitored for early evidence of benefit or harm, employing a platform design makes it possible to target trial resources to the study of active interventions that are more likely to be successful \citep{Saville2016}. 

Methods for analyzing platform trial data should be able to accommodate key features of the design, namely that candidate active interventions can be added to the trial once they become available, and those that show a lack of efficacy can be eliminated.
To this end, several authors have advocated that, for each active intervention, controls that are under contemporaneous randomization should be used as comparators \citep{Lee2020, Kopp-Schneider2020, Lee2021}.
Methods that restrict to contemporaneous comparisons can avoid estimation bias and anticonservative confidence interval coverage that may arise due to a potential temporal trend in the outcome or baseline characteristics of the control arm.
Moreover, during periods when multiple active interventions are under randomization simultaneously, the contemporaneous control comparators may be shared between these various arms.

Analysis methods should also be able to accommodate the fact that, once an active intervention has been found to be efficacious, it may serve as an active control in the evaluation of the other candidate active interventions \citep{Woodcock2017, Kaizer2018}. One approach to make this evaluation would involve directly comparing the outcomes on a candidate active intervention to those on the active control. Unfortunately, this approach would be susceptible to the same temporal-trend-induced biases as described above. A less direct, but more robust, method involves comparing the two interventions by contrasting their efficacies relative to their respective contemporaneous control arms. Under a constancy assumption that underlies the validity of many methodologies developed for non-inferiority trials \citep{everson2010bio}, this strategy will yield unbiased comparisons of the efficacies of the two interventions.

Though standard analyses can be used to evaluate the efficacy of each active intervention compared to its contemporaneous control, more care is needed to quantify uncertainty when contrasting the efficacies of various active interventions.
The main challenge is that the overlap in the contemporaneous control groups for different interventions induces a positive correlation between their corresponding efficacy estimates. 
This positive correlation has been used as a means to justify not doing multiplicity adjustments in multi-arm trials \citep{howard2018recommendations}. In particular, the positive correlation induces a reduction in the family-wise error rate compared to independent separate trials, suggesting that multiplicity adjustment should not be required solely due to sharing control data. 
Since multiple testing corrections would not have been employed had separate trials been conducted, platform trials \citep{WHOSolidarityTrialConsortium2021,howard2021platform} have often followed multi-armed trials \citep{freidlin2008multi} in not including them. 
In this work, we follow this precedent and do not consider multiple testing in the platform trial setting. 
In another line of research, recent works have provided Bayesian strategies to describe and accommodate this positive correlation in platform trials when the outcomes are binary \citep{Saville2016,Hobbs2018,Kaizer2018}.
To the best of our knowledge, there is no available frequentist method that properly addresses this issue, nor is there any approach that is applicable when the outcomes may be right censored.
In this work, we introduce a theoretically-grounded framework to account for this correlation when making cross-intervention comparisons with time-to-event outcomes based on platform trial data.

Our contributions are as follows:
\begin{contrienum}
  \item \label{contribution.1} 
  In Section~\ref{sec:inf_asymp_RR}, we establish the joint asymptotic normality of intervention-specific conditional relative risk estimators. The limiting covariance matrix quantifies the positive correlation induced from sharing control participants between the active interventions. 
  
  \item \label{contribution.2} In Section~\ref{sec:efficiency_gain}, we show that natural estimators of the relative efficacy of two interventions are statistically more efficient when applied to platform trial data rather than on data from separate, intervention-specific trials.
  This enables a substantial reduction in the sample size of the platform trials relative to that of the pooled separate trials, without sacrificing precision for evaluating efficacy.
  
  \item \label{contribution.3} In Section~\ref{sec:intersection_test}, we develop a noninferiority test that evaluates the efficacy of a given intervention relative to the most efficacious of the remaining interventions. This test is adaptive in the sense that there is no need to know, in advance, which of the remaining interventions should serve as the benchmark.
\end{contrienum}
The randomization scheme, data structure, and assumptions 
that shape the framework of our proposed methods are introduced in Section~\ref{sec:preliminary}.
Numerical studies that support our theoretical findings are presented in Section~\ref{sec:numerical}, and an illustration of the proposed methods on data from the Antibody Mediated Prevention (AMP) trials \citep{corey2021two} can be found in Section~\ref{sec:data_analysis}. Section~\ref{sec:conclusion} closes with some concluding remarks.


\section{Preliminaries} \label{sec:preliminary}
\subsection{Randomization scheme}
We consider platform designs in which the availability of active interventions may vary over discrete windows defined by time and location. 
Within each location, windows are contiguous and may be of unequal widths: a window starts when an active intervention is placed under or removed from randomization and ends when the next window begins. 
To simplify presentation, we focus on a particular randomization scheme where, within each window, equal numbers of participants are expected to be enrolled to each available active intervention arm and to the control arm. More concretely, within an arbitrary window in which $k$ active interventions are under randomization, each active intervention is assigned with probability $1/(k+1)$, and control is also assigned with probability $1/(k+1)$. 
In settings where there are intervention-specific matched controls, a participant assigned to control is further randomized to one of the $k$ matched controls, so that each matched control is assigned with probability $1/[k(k+1)]$. 

An example of this randomization scheme is illustrated in Table \ref{tab:randomization scheme}, which represents a modification of a figure from the protocol for the WHO Solidarity Trial for Vaccines \citep{WHOR&DBlueprint2020}. 
Three candidate active interventions are considered in this example, labeled as ${\rm A}_1$, ${\rm A}_2$, and ${\rm A}_3$, and corresponding matched controls are under concurrent randomization. Intervention ${\rm A}_1$ and its matched control ${\rm C}_1$ are under randomization in Windows 1-3 and 5; intervention ${\rm A}_2$ and its matched control ${\rm C}_2$ are under randomization in Windows 2-4, and intervention ${\rm A}_3$ and its matched control ${\rm C}_3$ are under randomization in Windows 3-5. 
When assessing the efficacy of a given active intervention, the control group used for comparison consists of all participants who were enrolled to a matched or unmatched control arm during a period in which the active intervention was under randomization.
This union of the matched and unmatched controls is referred to as the shared control arm. 
For example, for intervention ${\rm A}_2$, the shared control arm consists of $({\rm C}_1,{\rm C}_2)$ for window 2, $({\rm C}_1,{\rm C}_2,{\rm C}_3)$ for window 3, and $({\rm C}_2,{\rm C}_3)$ for window 4.


\begin{table}[ht]
\centering
\begin{tabular}[t]{p{3.2cm}lllll}
  \toprule
    & Window 1  &  Window 2  &  Window 3 & Window 4 
    & Window 5\\
  \cmidrule{2-6}     
    \multirow{4}{*}{\parbox{4.5cm}{Intervention :\\
    Matched Control}} & 
    ${\rm A}_1:{\rm C}_1$ & ${\rm A}_1{\rm A}_1:{\rm C}_1$ & ${\rm A}_1{\rm A}_1{\rm A}_1:{\rm C}_1$ &  & ${\rm A}_1{\rm A}_1:{\rm C}_1$ \\
    &   & ${\rm A}_2{\rm A}_2:{\rm C}_2$ & ${\rm A}_2{\rm A}_2{\rm A}_2:{\rm C}_2$ & ${\rm A}_2{\rm A}_2:{\rm C}_2$ & \\
    &   &     & ${\rm A}_3{\rm A}_3{\rm A}_3:{\rm C}_3$ & ${\rm A}_3{\rm A}_3:{\rm C}_3$ & ${\rm A}_3{\rm A}_3:{\rm C}_3$ \\[.5em]
      & (1:1) & (2:1) & (3:1) & (2:1) & (2:1) \\
  \midrule
    \multirow{4}{*}{\parbox{4.5cm}{Intervention :\\
    Shared Control}} & 
    ${\rm A}_1:{\rm C}_1$ & ${\rm A}_1{\rm A}_1:\hspace{0.35em}\rotatebox[origin=b]{270}{\large$\Rsh$}$ & ${\rm A}_1{\rm A}_1{\rm A}_1:\hspace{0.7em}\rotatebox[origin=b]{270}{\large$\Rsh$}$ &  & ${\rm A}_1{\rm A}_1:\hspace{0.35em}\rotatebox[origin=b]{270}{\large$\Rsh$}$ \\
    &   & ${\rm A}_2{\rm A}_2:{\rm C}_1{\rm C}_2$ & ${\rm A}_2{\rm A}_2{\rm A}_2:{\rm C}_1{\rm C}_2{\rm C}_3$ & ${\rm A}_2{\rm A}_2:{\rm C}_2{\rm C}_3$ & $\phantom{{\rm A}_2{\rm A}_2:\;}{\rm C}_1{\rm C}_3$ \\
    &   &     & ${\rm A}_3{\rm A}_3{\rm A}_3:\hspace{0.7em}\rotatebox[origin=c]{270}{\large$\Lsh$}$ & ${\rm A}_3{\rm A}_3:\hspace{0.35em}\rotatebox[origin=c]{270}{\large$\Lsh$}$ & ${\rm A}_3{\rm A}_3:\hspace{0.35em}\rotatebox[origin=c]{270}{\large$\Lsh$}$ \\[0.5em]
      & (1:1) & (1:1) & (1:1) & (1:1) & (1:1) \\
  \bottomrule
\label{tab:randomization scheme}  
\end{tabular}
\caption{Allocation ratios of each active intervention versus its matched control and versus shared control in a platform trial.}
\end{table}

\subsection{Data structure, assumptions, and estimands} \label{sec:data_structure and assumptions}

We now describe the variables that are measured for each participant in the considered platform trial. The window in which a participant enrolls is denoted by $W$.
The baseline covariate to be used for stratification or adjustment is denoted by $Z$. We suppose that the covariate is discrete with finite support --- it is possible that $Z$ arises by combining multiple discrete covariates or by discretizing one or more continuous covariates into categorial subgroups. The randomization arm is indicated via a categorical variable $A$, which takes the value $0$ for control and the value $j$ for intervention $j$. To ease presentation, hereafter we denote the $k$ active interventions by $A=1$, $A=2$, etc., rather than by ${\rm A}_1$, ${\rm A}_2$, etc., as was done in Table \ref{tab:randomization scheme}.
The observed time $X$ is defined as the minimum of a continuous event time $T$ and a censoring time $C$, and $\Delta = 1(T \leq C)$ is a corresponding indicator of having observed the event. Enrollment is treated as time zero, so that $T=0$ indicates that the participant experienced the event exactly at the time of enrollment.  Interventions and windows are labeled sequentially, so that the first active intervention (window) is labeled ``intervention (window) 1'', the second is labeled ``intervention (window) 2'', and so on. We use $k$ to denote the total number of active interventions, $[k]=\{1,\ldots,k\}$ to denote the set of all active interventions, and $q$ to denote the number of windows over the course of the trial. For an active intervention $a$, we let $\mathbf{w}_a$ denote the set containing the windows in which intervention $a$ is under randomization. We observe $n$ independent and identically distributed (iid) copies $U_i=(X_i, \Delta_i, A_i, W_i, Z_i)$, $i=1,\ldots,n$, drawn from a distribution $P$.

We will make use of the following condition, which often holds in randomized trial settings:
\begin{assumpenum}[series=assumptions]
  \item \label{assump:indep_AZ_on_W} \textit{Randomized arm assignment:} The baseline covariate $Z$ is independent of $A$ conditional on $W$.
\end{assumpenum}
We will also make use of a constancy condition that the efficacy of an active intervention, defined in terms of a relative risk, should be stable across windows within specified strata.
These strata are defined via the value of a coarsening $V$ of $Z$. More concretely, $V=g(Z)$ for a known, possibly many-to-one, function $g$; we similarly let $V_i=g(Z_i)$. 
By taking $g$ to be the identity or a constant function, we could make $V$ correspond to the original baseline covariate $Z$ or a degenerate random variable, respectively. Alternatively, if $Z$ results from a combination of several discrete covariates (such as age group and sex), then $g$ could be chosen as a coordinate projection that returns a particular one of those covariates (such as age group). Henceforth, we will take $v$ to be a generic realization of $V$ corresponding to a generic realization $z$ of $Z$, that is, $v=g(z)$. 

As the event rate may vary across windows, so does the survival function $t\mapsto P(T>t|A=a, W=w, V=v)$ conditionally on being randomized to intervention $a$, enrolled in window $w$, and belonging to the stratum $v$. The corresponding conditional relative risk is
\begin{align}
    \gamma(t|a,w,v):=\frac{P(T\le t|A=a, W=w, V=v)}{P(T\le t|A=0, W=w, V=v)}. \label{eq:condRR}
\end{align}
To avoid dividing by zero or considering degenerate cases where it is all but obvious that an intervention is efficacious or not, we suppose throughout that $P(T\le t|A=a, W=w, V=v)>0$ for all possible values of $a$, namely when $a$ is control or any active intervention. 
The constancy condition at a specified time $t$ can be stated as follows.
\begin{assumpenum}[resume=assumptions]
  \item \label{assump:constancy} \textit{Constancy condition at time $t$:} For each active intervention $a$ and stratum $v$, there exists a $\gamma(t|a,v)\in (0,\infty)$ such that, for all windows $w$ for which $P(A=a,V=v|W=w)>0$, $\gamma(t|a,w,v)=\gamma(t|a,v)$.
\end{assumpenum}
The above condition states that the conditional relative risks of the active interventions are invariant across windows, and is plausible in many settings, for example, in vaccine trials \citep[e.g.,][]{Fleming2021, Follmann2021,Tsiatis2021}. In fact, the plausibility of this condition has been systematically evaluated across many disease areas, due to its importance to noninferiority analyses \citep{DAgostino2003,Fleming2008,Mauri2017,Zhang2019,May2020}. In Figure \ref{fig:illustration_of_constancy_assumption}, we provide an illustration of the implications of \ref{assump:constancy} in a particular example. The above is distinct from proportional hazards assumptions that are often employed in analyses based on the Cox model. Indeed, unlike proportional hazards assumptions that require a constant hazard ratio between an intervention and the control, \ref{assump:constancy} allows for the efficacy of each active intervention, quantified in terms of relative risk, to vary over time since enrollment. This flexibility is important, for example, in vaccine studies, where vaccine efficacy is often low shortly after inoculation, ramps as the immune response builds and booster shots are administered, and subsequently wanes over time. 
The results in this work do not rely on a proportional hazards assumption.

Under \ref{assump:constancy} at time $t$, the relative risk estimand further writes as
\begin{align}\label{eq:gammatav}
  \gamma(t|a,v)&= \frac{E[P(T \le t|A=0, W, V)\gamma(t|a,v)|A=a, W \in \mathbf{w}_a, V=v]}{E[P(T \le t|A=0, W, V)|A=0, W \in \mathbf{w}_a, V=v]} \nonumber \\
  &= \frac{E[P(T \le t|A=a, W, V)|A=a, W \in \mathbf{w}_a, V=v]}{E[P(T \le t|A=0, W, V)|A=0, W \in \mathbf{w}_a, V=v]} = \frac{P(T \le t|A=a, W \in \mathbf{w}_a, V=v)}{P(T \le t|A=0, W \in \mathbf{w}_a, V=v)},
\end{align}
where the first equality simply multiplies the left-hand side by one, the second holds by \ref{assump:constancy}, and the third holds by the law of total expectation. 
Note that we require that $P(W \in \mathbf{w}_a, V=v)>0$ for all $a$ and $v$ of interest, which guarantees that each considered relative risk estimand is well-defined. 
In this work, we will have three primary interests regarding the relative risk estimands defined above: (i) estimating the relative risk $\gamma(t|a,v)$ for one or more given interventions $a$ and strata $v$ of $V$, (ii) contrasting $\gamma(t|a,v)$ and $\gamma(t|a',v)$ for two interventions $a$ and $a'$, and (iii) in a noninferiority analysis, contrasting $\gamma(t|a,v)$ against $\min_{a'\not=a}\gamma(t|a',v)$.

The constancy condition \ref{assump:constancy} only applies to active interventions $a$, covariate strata $v$, and windows $w$ for which $P(A=a,V=v|W=w)>0$. This requirement is natural given that, when it fails, the conditional relative risk in \eqref{eq:condRR} is not even well-defined. Indeed, if individuals from covariate stratum $v$ are never enrolled in window $w$ --- say, because individuals from this stratum do not live near the location to which window $w$ pertains --- then it is not even clear how to define the relative risk of an intervention $a$ among those individuals in stratum $v$ who enroll in window $w$. Even if we restrict attention to windows $w$ in which covariate stratum $v$ is enrolled with positive probability, a given intervention $a$ will likely not be under randomization in some of these windows, rendering the conditional relative risk in \eqref{eq:condRR} ill-defined. To define the effect of intervention $a$ in such windows, counterfactual reasoning can be applied \citep[e.g.,][]{hernan2020causal}. For each participant, this involves conceptualizing a counterfactual event time $T(a)$ that would have occurred if, possibly contrary to fact, intervention $a$ had been under randomization at their time of enrollment and they had been randomized to that intervention. Under standard causal assumptions (idem, pages 5-6), the counterfactual relative risk $\gamma^{\textnormal{c}}(t|a,w,v):=P\{T(a)\le t|W=w,V=v\}/P\{T(0)\le t|W=w,V=v\}$ is equal to the relative risk in \eqref{eq:condRR} whenever that quantity is well-defined, that is, whenever $P(A=a,V=v|W=w)>0$. Hence, under the natural counterfactual extension of the constancy condition that says that $\gamma^{\textnormal{c}}(t|a,w,v)=\gamma(t|a,v)$ for all $w$ that are such that $P(V=v|W=w)>0$, the relative risk estimand of interest can be interpreted as defining the effect of intervention $a$ in all windows in which participants with $V=v$ are under enrollment, rather than just in those windows in which participants with $V=v$ are under enrollment and $a$ is under randomization. We leave further consideration of this counterfactual constancy assumption to future work, and instead focus in the remainder on making inference about the quantity $\gamma(t|a,v)$ defined in \ref{assump:constancy}, which is well-defined even in the absence of causal assumptions.

Though the relative risk of interest does not depend on the full collection of baseline covariates $Z$, the conditionally independent censoring condition \ref{assump:censoring} below generally makes it necessary to make use of $Z$ when estimating the relative risk, where $\independent$ is used to denote (conditional) independence:
\begin{assumpenum}[resume=assumptions]
  \item \label{assump:censoring} \textit{Conditionally independent censoring:} For each active intervention $a\in [k]$ and covariate stratum $z$ with $P(Z=z|A=a)>0$, it holds that $C\independent T\mid \{A=a,Z=z\}$ and $C\independent T\mid \{A=0,W\in\mathbf{w}_a,Z=z\}$.
\end{assumpenum}
This condition will often be more plausible than the usual independent censoring assumption made in clinical trial analyses, which states that $C$ is independent of $T$ conditionally on $A$. Indeed, the random variable $Z$ can contain baseline factors that are predictive of both the event and censoring times. For example, in an infectious disease outbreak setting, these factors may include behavioral risk information and the calendar time of enrollment.
The key insight is that, under \ref{assump:censoring}, the conditional survival function $S(t|a,\mathbf{w},z):= P(T>t|A=a, W \in \mathbf{w}, Z=z)$ is identifiable as a functional of the distribution of the observed data \citep{Dabrowska1989}.
Consequently, to estimate $\gamma(t|a,v)$, it suffices to write this quantity as a function of this conditional survival function and the distribution of $(W,Z)$. Lemma~\ref{lem:covAdjEstimand} in the appendix shows that this is indeed possible. In particular, under the randomization of arm assignment \ref{assump:indep_AZ_on_W}, the numerator and denominator on the right-hand side of \eqref{eq:gammatav} can be rewritten to show that
\begin{align} \label{eq:estimand}
  \gamma(t|a,v) &= \frac{1-\sum_{z \in \mathcal{Z}}S(t|a, \mathbf{w}_a, z)\,P(Z=z|W \in \mathbf{w}_a, V=v)}{1-\sum_{z \in \mathcal{Z}}S(t|0,\mathbf{w}_a,z)\,P(Z=z|W \in \mathbf{w}_a, V=v)}.
\end{align}
Even if \ref{assump:constancy} were to fail to hold, $\gamma(t|a,v)$ can be defined as above. In this case, $\gamma(t|a,v)$ represents particular summary of the efficacy of intervention $a$ across windows. Nevertheless, \ref{assump:constancy} is advantageous since, when it holds, contrasts of the relative risks of different interventions can be interpreted as being independent of the particular windows under consideration. 
In Section~\ref{sec:estimation_RR} we will present so-called plug-in estimators that estimate each survival function and conditional probability above, and then insert them into the right-hand side of \eqref{eq:estimand} to estimate the relative risk of interest.

We conclude by introducing some other notation that will appear throughout in this article. 
We denote a $k$-variate column vector $(b_1,\ldots,b_k)$ by $(b_{a})_{a \in [k]}$ and a set $\{b_1,\ldots,b_k\}$ by $\{b_a\}_{a\in [k]}$. Let $\mathbf{W} = \{\mathbf{w}_a\}_{a\in [k]}$. We also let $\mathbf{w}_0$ denote the collection of windows where the control is under randomization --- by design, $\mathbf{w}_0=\cup_{a\in [k]} \mathbf{w}_a$. Moreover, let $\tau$ be the end of follow-up relative to enrollment, $\mathcal{T} = [0, \tau]$ denote the range of possible observed times, $\mathcal{A} = \{0, 1, \ldots, k\}$ denote the set of all arms, and $\mathcal{Z}$ and $\mathcal{V}$ be the support of $Z$ and $V$, respectively. We will write $E$ to denote the expectation operator under $P$. We write $u=(x, \delta, a, w, z)$ to denote a generic realization of $U$.

\begin{figure}[hbt!]
\begin{center}
 \includegraphics[trim={0.1cm 0.1cm 0.2cm 0cm}, clip,  width=0.6\textwidth]{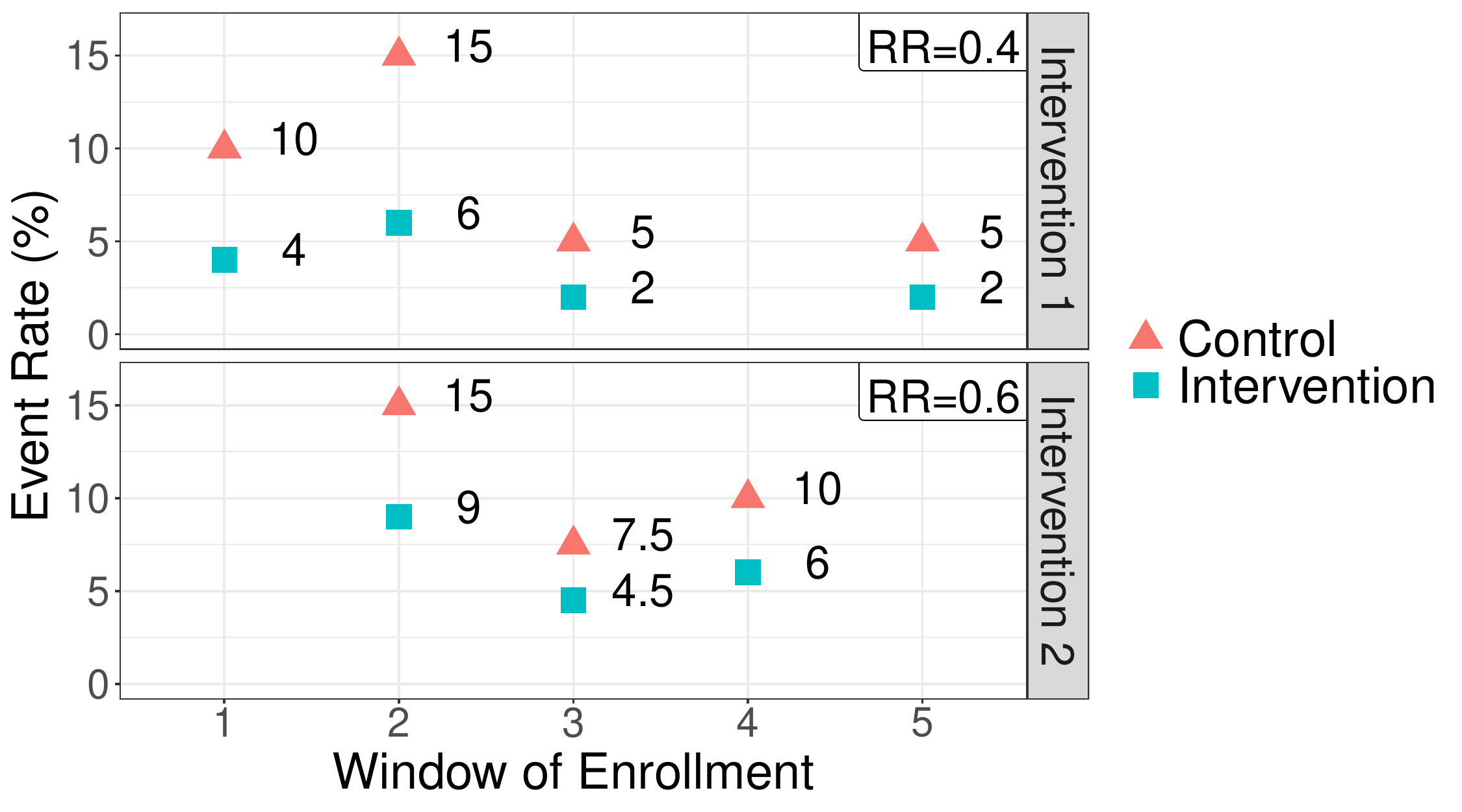}
 \caption{The relative risk (RR) of interventions 1 and 2 stays constant at 0.4 and 0.6, respectively, even though the event rates after randomization vary over windows of enrollment.}
 \label{fig:illustration_of_constancy_assumption}
\end{center}
\end{figure}

\section{Inference and asymptotic properties} \label{sec:inf_asymp_RR}
\subsection{Estimated conditional relative risk} \label{sec:estimation_RR}
Our estimators of the conditional survival function in \eqref{eq:estimand} will be based on the identity $S(t|a,\mathbf{w},z) = \exp[-\Lambda(t|a,\mathbf{w},z)]$, where $\Lambda(t|a,\mathbf{w},z)$ denotes the cumulative hazard function at time $t$ conditional on a randomization arm $a\in\mathcal{A}$, enrollment windows belonging to a specified window set $\mathbf{w}\in\mathbf{W}$, 
and a baseline covariate stratum $z$ such that $P(A=a,W\in\mathbf{w},Z=z)>0$. Specifically, this function is defined as
\begin{align*}
    \Lambda(t|a,\mathbf{w},z):= \int_0^t \frac{-S(dt'|a,\mathbf{w},z)}{S(t'|a,\mathbf{w},z)}.
\end{align*}
Henceforth we will refer to $\Lambda(\,\cdot\,|a,\mathbf{w},z)$ as the conditional CHF.

Though we will be most interested in estimating $\Lambda(\,\cdot\,|a,\mathbf{w}_a,z)$ and $\Lambda(\,\cdot\,|0,\mathbf{w}_a,z)$ for one or more given interventions $a$, it will simplify presentation to define an estimator of $\Lambda(t|a,\mathbf{w},z)$ for a generic arm $a$, set of windows $\mathbf{w}$, and covariate stratum $z$ for which $P(A=a,W\in\mathbf{w},Z=z)>0$. We will estimate this conditional CHF using the conditional Nelson-Aalen estimator. To introduce this estimator, we define the stratified basic counting process in the stratum $(a,\mathbf{w},z)$ and the average size of the risk set at time $t$ as $N_n(t;a,\mathbf{w},z) = n^{-1}\sum_{i=1}^n1(X_i \leq t, \Delta_i=1, A_i=a, W_i \in \mathbf{w}, Z_i=z)$ and $Y_n(t;a,\mathbf{w},z) = n^{-1}\sum_{i=1}^n1(X_i \geq t, A_i=a, W_i \in \mathbf{w}, Z_i=z)$, respectively. The conditional Nelson-Aalen estimator at time $t$ is given by $\Lambda_n(t|a,\mathbf{w},z) = \int_0^t [1/Y_n(s;a, \mathbf{w}, z)]N_n(ds;a, \mathbf{w}, z)$, and the conditional survival function can be estimated by $S_n(t|a,\mathbf{w},z) \equiv \exp[-\Lambda_n(t|a,\mathbf{w},z)]$.
For $\mathbf{w} \in \mathbf{W}$ and $v \in \mathcal{V}$, let $P(Z=z|W \in \mathbf{w}, V=v)$ be estimated by $P_n(z|\mathbf{w},v) = \sum_{i=1}^n1(Z_i=z, W_i \in \mathbf{w}, V_i=v)/\sum_{i=1}^n1(W_i \in \mathbf{w}, V_i=v)$. Inserting these estimators into \eqref{eq:estimand} naturally suggests a plug-in estimator of $\gamma(t|a,v)$:
\begin{align*}
  \gamma_n(t|a,v) = \frac{1-\sum_{z \in \mathcal{Z}}S_n(t|a,\mathbf{w}_a,z)P_n(z|\mathbf{w}_a,v)}{1-\sum_{z \in \mathcal{Z}}S_n(t|0,\mathbf{w}_a,z)P_n(z|\mathbf{w}_a,v)}.
\end{align*}

\subsection{Asymptotic normality} \label{sec:asymp_dist_RR}
To study the large-sample behavior of our proposed estimator of $\gamma(t|a,v)$,
it will be helpful to have characterized the joint asymptotic behavior of the estimators of the conditional survival functions $S(\,\cdot\,|0,\mathbf{w}_a,z)$ and $S(\,\cdot\,|a,\mathbf{w}_a,z)$ across different interventions $a$.
This characterization is simplified by the fact that participants on different randomization arms are mutually exclusive --- as a consequence of this fact, $S_n(t|a,\mathbf{w}_a,z)$ is independent of $S_n(t|0,\mathbf{w}_{a'},z)$ for every intervention $a'$ and of $S_n(t|a',\mathbf{w}_{a'},z)$ for every intervention arm $a' \not= a$. The joint distribution of the conditional survival functions for the shared control arms within window sets in $\mathbf{W}$ is more involved. 
Owing to the control-sharing between interventions $a$ and $a'$,
$S_n(t|0,\mathbf{w}_a,z)$ is not independent of $S_n(t|0,\mathbf{w}_{a'},z)$ unless $a$ and $a'$ are never under contemporaneous randomization --- that is, unless $\mathbf{w}_a\cap \mathbf{w}_{a'}=\emptyset$. Except in degenerate cases, the dependence is otherwise inevitable due to the mutual inclusion of participants who are under contemporaneous randomization in the estimation of $S(t|0,\mathbf{w}_a,z)$ and $S(t|0,\mathbf{w}_{a'},z)$.
To characterize this dependence, we derive the asymptotic behavior of the estimated stratified control-arm survival functions across 
all the intervention-specific window sets, namely
$(S_n(t|0,\mathbf{w}_{a},z))_{a \in [k]}$ --- see Lemma \ref{lemma:asymp_surv_functions} in the appendix for details. In the same lemma, we also establish the asymptotic normality of the estimated stratified intervention-arm survival functions across their designated window sets,
$(S_n(t|a,\mathbf{w}_{a},z))_{a \in [k]}$.

To derive the aforementioned distributional results, we establish that the estimators of the conditional survival functions are asymptotically linear. We recall that an estimator $\hat{\psi}$ of some estimand $\psi$ is called asymptotically linear if there exists a mean-zero, finite-variance function $f$, which typically depends on $P$, such that
\begin{align*}
    \hat{\psi} - \psi&= n^{-1}\sum_{i=1}^n f(U_i) + o_p(n^{-1/2}),
\end{align*}
where we write $o_p(r_n)$ to denote a term that converges to zero in probability as $n\rightarrow\infty$ even once divided by $r_n$. The function $f$ is known as the influence function of $\hat{\psi}$. The joint limiting distribution of several asymptotically linear estimators $\hat{\psi}_1,\ldots,\hat{\psi}_k$ can be derived using Slutsky's lemma and the central limit theorem: indeed, letting $f_1,\ldots,f_k$ denote the influence functions of $\hat{\psi}_1,\ldots,\hat{\psi}_k$, we have that $n^{1/2}(\hat{\psi}_a-\psi_a)_{a \in [k]} \overset{d}{\rightarrow} N(0,\Sigma)$, where 
$\Sigma={\rm Cov}_P[(f_a(U))_{a \in [k]}]$. A delta method is also available for asymptotically linear estimators, which makes it possible to compute the influence function of a real-valued function of one or more asymptotically linear estimators $\hat{\psi}_1,\ldots,\hat{\psi}_k$ via the dot product of the gradient of the function and the influence functions of $\hat{\psi}_1,\ldots,\hat{\psi}_k$.

We use such a delta-method argument to translate the asymptotic linearity of $S_n(t|a,\mathbf{w}_a,z)$, $S_n(t|0,\mathbf{w}_a,z)$, and $P_n(z|\mathbf{w}_a,v)$ over $z\in\mathcal{Z}$ into an asymptotic linearity result for the estimator $\varphi_n(t|a,v) := \log[\gamma_n(t|a,v)]$ of the log-relative risk $\varphi(t|a,v) := \log[\gamma(t|a,v)]$. 
Before presenting the result, we define some needed notation. Let $p_1(t,a,\mathbf{w},z):=P(X \leq t, \Delta=1, A=a, W \in \mathbf{w}, Z=z)$, $p_2(t,a,\mathbf{w},z):=P(X \geq t, A=a, W \in \mathbf{w}, Z=z)$, and 
\begin{align} \label{eq:notations_probs}
  &p(z|\mathbf{w},v) := P(Z=z|W \in \mathbf{w}, V=v) = \frac{P(Z=z,W \in \mathbf{w}, V=v)}{P(W \in \mathbf{w}, V=v)} \equiv \frac{p(z,\mathbf{w},v)}{p(\mathbf{w},v)}, \\
  &Q(t|a,\mathbf{w}, v) := 1/[1-S(t|a,\mathbf{w},v)]. 
\end{align}
To ensure that $Q$ is well-defined, throughout we assume that the cumulative incidence is positive at all $t \in \mathcal{T}$ within each stratum $(a,\mathbf{w}, v)$. 
For $(t',a',\mathbf{w}',z',v') \in \mathcal{T} \times \mathcal{A} \times \mathbf{W} \times \mathcal{Z} \times g(\mathcal{Z})$, we define
\begin{align} 
  \xi(u|t',a',\mathbf{w}',z') &:=
  -S(t'|a',\mathbf{w}',z')\int_0^{t'} \bigg[\frac{1(x \in ds, \delta=1, a=a', w \in \mathbf{w}', z=z')}{p_2(s,a',\mathbf{w}',z')} \label{eq:IF_suvfunc} \\
  &\hspace{4cm} - 1(x \geq s, a=a', w \in \mathbf{w}', z=z') \frac{p_1(ds, a',\mathbf{w}',z')}{[p_2(s, a',\mathbf{w}',z')]^2}\bigg], \nonumber\\
  h(u|\mathbf{w}',z',v')&:=  \frac{1(w \in \mathbf{w}', v=v')}{p(\mathbf{w}',v')}\left[1(z=z') - p(z'|\mathbf{w}',v')\right], \label{eq:IF_z} \\
  f_{\varphi}(u|t',a',\mathbf{w}',v')&:= 
  \frac{Q(t'|0,\mathbf{w}', v')}{\gamma(t|a',v')}
  \sum_{z' \in \mathcal{Z}}\Big\{
  -p(z'|\mathbf{w}',v')\xi(u|t',a',\mathbf{w}',z') \nonumber \\
  &\quad + \big[1-S(t'|a',\mathbf{w}', z')\big]h(u|\mathbf{w}',z',v')+\gamma(t'|a',v')\Big[ p(z'|\mathbf{w}',v')\xi(u|t',0,\mathbf{w}',z')\nonumber \\
  &\quad - \big[1-S(t'|0,\mathbf{w}',z')\big]h(u|\mathbf{w}',z',v') \Big]\Big\}. \label{eq:IF_logRR}
\end{align}

\begin{theorem} \label{Thm:joint_distribution_logRRs}
Given \ref{assump:indep_AZ_on_W}-\ref{assump:censoring}, the stratum $v$ and with $f_{\varphi}$ defined in \eqref{eq:IF_logRR}, 
\begin{align} \label{eq:asymp_linear_logRRs}
  &n^{1/2}\big(\varphi_n(t|a,v)-\varphi(t|a,v)\big)_{a \in [k]}
  = \Big(n^{-1/2}\sum_{i=1}^n f_{\varphi}(U_i|t,a,\mathbf{w}_{a},v)\Big)_{a \in [k]} + o_p(1).
\end{align}
Moreover, $\{n^{1/2}(\varphi_n(t|a,v)-\varphi(t|a,v))_{a \in [k]} : t\in\mathcal{T}\}$ weakly converges to a $k$-variate Gaussian process with mean zero and covariance function
\begin{align*}
  \bs{\Sigma}_{\varphi}(t,t^{\prime}|v)=E\big[(f_{\varphi}(U|t,a,\mathbf{w}_{a},v))_{a \in [k]} \otimes (f_{\varphi}(U|t',a,\mathbf{w}_{a},v))^\top_{a \in [k]}\big],  
\end{align*}
where $\otimes$ denotes the Kronecker product.
\end{theorem}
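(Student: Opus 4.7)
The plan is to derive the asymptotic linearity of each coordinate $\varphi_n(t|a,v)$ by the delta method, stack across $a\in[k]$ to obtain the joint expansion \eqref{eq:asymp_linear_logRRs}, and then promote the finite-dimensional CLT to weak convergence in $t$ via a Donsker argument.

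For the building blocks, under \ref{assump:censoring} the standard counting-process argument (Duhamel's identity applied to $\Lambda_n-\Lambda$, followed by the exponential map) gives, uniformly in $t\in\mathcal{T}$ and for each $a'\in\{0,a\}$ and $z\in\mathcal{Z}$,
\begin{align*}
  n^{1/2}\bigl[S_n(t|a',\mathbf{w}_a,z)-S(t|a',\mathbf{w}_a,z)\bigr]=n^{-1/2}\sum_{i=1}^n \xi(U_i|t,a',\mathbf{w}_a,z)+o_p(1),
\end{align*}
provided $p_2(\cdot,a',\mathbf{w}_a,z)$ is bounded away from zero on $\mathcal{T}$. The ratio $P_n(z|\mathbf{w}_a,v)$ is a smooth transform of empirical means of bounded indicators, and a direct delta calculation yields the influence function $h(\cdot|\mathbf{w}_a,z,v)$ in \eqref{eq:IF_z}; note that $\sum_{z}h(u|\mathbf{w}_a,z,v)\equiv 0$ almost surely.

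Next, write $\gamma(t|a,v)=N(t|a,v)/D(t|a,v)$ with $N(t|a,v):=1-\sum_z S(t|a,\mathbf{w}_a,z)p(z|\mathbf{w}_a,v)$ and $D(t|a,v):=1-\sum_z S(t|0,\mathbf{w}_a,z)p(z|\mathbf{w}_a,v)$, and likewise $N_n,D_n$ with $(S_n,P_n)$ substituted. The product rule applied to $S_n P_n-Sp$ combined with $\sum_z h\equiv 0$ (which lets $S$ be replaced by $S-1$ inside the $h$-contribution without changing the sum) shows that $N_n(t|a,v)$ is asymptotically linear with influence function
\begin{align*}
  \sum_{z}\Bigl\{-p(z|\mathbf{w}_a,v)\,\xi(u|t,a,\mathbf{w}_a,z)+\bigl[1-S(t|a,\mathbf{w}_a,z)\bigr]h(u|\mathbf{w}_a,z,v)\Bigr\},
\end{align*}
and an analogous representation holds for $D_n(t|a,v)$ with arm index $0$ in place of $a$. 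Invoking the identification $N(t|a,v)=1-S(t|a,\mathbf{w}_a,v)$ and $D(t|a,v)=1-S(t|0,\mathbf{w}_a,v)$ from \eqref{eq:estimand}, the log delta $\delta\log\gamma=\delta N/N-\delta D/D$ yields an influence function equal to $Q(t|a,\mathbf{w}_a,v)\,\mathrm{IF}(N_n)-Q(t|0,\mathbf{w}_a,v)\,\mathrm{IF}(D_n)$, and factoring out the common prefactor $Q(t|0,\mathbf{w}_a,v)/\gamma(t|a,v)=Q(t|a,\mathbf{w}_a,v)$ collapses this to exactly $f_{\varphi}$ in \eqref{eq:IF_logRR}. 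Stacking across $a\in[k]$ then gives the joint expansion \eqref{eq:asymp_linear_logRRs} because every coordinate is a smooth functional of the same empirical measure.

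Finally, to promote this to weak convergence of the $k$-variate process in $t$, I would verify that $\{f_{\varphi}(\cdot|t,a,\mathbf{w}_a,v):t\in\mathcal{T},\,a\in[k]\}$ is $P$-Donsker. Since $f_{\varphi}$ is a fixed Lipschitz transform of the $\xi$-processes and the finitely many $h$-variables on the region where $p_2$ and $1-S$ stay away from zero, it suffices to observe that $\{\xi(\cdot|t,a',\mathbf{w}_a,z):t\in\mathcal{T}\}$ is a monotone-in-$t$ class with square-integrable envelope, hence $P$-Donsker by standard bracketing-entropy bounds (e.g., van der Vaart, 1998, Example~19.11). Finite-dimensional convergence follows from the multivariate CLT, and the limit is a mean-zero Gaussian process with covariance function $\bs{\Sigma}_{\varphi}(t,t'|v)$, the outer product of influence-function vectors under $P$. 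The main obstacle is the delta-method bookkeeping in the middle step---tracking how the log, ratio, and product-rule contributions collapse into the compact expression $f_{\varphi}$---together with verifying the uniform-in-$t$ positivity of $p_2$ needed to make the Nelson-Aalen remainder uniformly negligible on $\mathcal{T}$.
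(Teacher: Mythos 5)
Your proposal is correct and follows essentially the same route as the paper: the paper establishes asymptotic linearity of the conditional Nelson--Aalen/survival estimators and of $P_n(z|\mathbf{w},v)$ (Lemmas~\ref{lemma:joint_distribution_CHFs} and \ref{lemma:asymp_surv_functions}), chains delta-method steps through $\gamma_n(t|a,v)$ (Lemma~\ref{lemma:joint_distribution_RRs}) and then $\log\gamma_n$, and obtains the process-level convergence from the Donsker property of the $\xi$-class via Example~19.11 of van der Vaart. Your bookkeeping---including the use of $\sum_{z}h\equiv 0$ to pass from $S$ to $1-S$ and the identity $Q(t|0,\mathbf{w}_a,v)/\gamma(t|a,v)=Q(t|a,\mathbf{w}_a,v)$ that collapses the log-ratio influence function to $f_{\varphi}$---checks out against \eqref{eq:IF_logRR}.
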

This theorem is an immediate consequence of Lemmas \ref{lemma:joint_distribution_CHFs}--\ref{lemma:joint_distribution_RRs} in the appendix, and so its proof is omitted.

\section{Efficiency gains from sharing controls}
\label{sec:efficiency_gain}

\subsection{Motivation}
In this section, we will demonstrate the statistical efficiency that can be gained from running a platform trial with a shared control arm as opposed to using separate control arms, as is done in more traditional clinical trial designs. Two forms of gains can be realized by using a platform design.

The first follows immediately from the design's use of a shared control arm. 
In particular, a platform trial conducted contemporaneously with and in the same population as separate, intervention-specific trials will generally attain the same statistical power for marginal evaluation of each active intervention as can the separate trials, while enrolling fewer participants on control. Indeed, due to the use of a shared control arm, fewer total participants can be enrolled in a platform trial than in separate, intervention-specific trials, while still maintaining the same sample sizes for the comparison of each active intervention versus control. This point is illustrated in Table~\ref{tab:randomization scheme_pairwise comparison}, which compares a 3-arm platform trial to two separate 2-arm trials. The 3-arm platform trial in that table corresponds to the same setting as was illustrated in Table~\ref{tab:randomization scheme}, except for being limited to only two interventions for simplicity and using ${\rm C}$ to label the shared control arm. The corresponding separate trials enroll the same number of individuals to each active intervention in the same windows, but enroll twice as many controls in windows where both active interventions are under randomization. 

\begin{table}[ht]
\centering
\begin{tabular}[t]{p{2cm}lllll}
  \toprule
  Trial & Window 1  &  Window 2  &  Window 3 & Window 4 & 
  Window 5\\
  \midrule \midrule
  \multirow{3}{*}{\parbox{2cm}{3-arm platform trial}} & 
    ${\rm A}_1:{\rm C}$ & ${\rm A}_1\hspace{0.35em}\rotatebox[origin=b]{270}{\large$\Rsh$}$ & ${\rm A}_1\hspace{0.35em}\rotatebox[origin=b]{270}{\large$\Rsh$}$ &  & ${\rm A}_1:{\rm C}$ \\
    &   & ${\rm A}_2:{\rm C}$ & ${\rm A}_2:{\rm C}$ & ${\rm A}_2:{\rm C}$ & \\[0.5em]
    & $(2n_1)$ & $(3n_2)$ & $(3n_3)$ & $(2n_4)$ & $(2n_5)$ \\
  \midrule    
  \multirow{3}{*}{\parbox{2cm}{2-arm separate trials}} &
   ${\rm A}_1:{\rm C}_1$ & ${\rm A}_1:{\rm C}_1$ & ${\rm A}_1:{\rm C}_1$ &  & ${\rm A}_1:{\rm C}_1$ \\
   &   & ${\rm A}_2:{\rm C}_2$ & ${\rm A}_2:{\rm C}_2$ & ${\rm A}_2:{\rm C}_2$ & \\[.5em]
   & $(2n_1)$ & $(4n_2)$ & $(4n_3)$ & $(2n_4)$ & $(2n_5)$ \\
  \bottomrule
\label{tab:randomization scheme_pairwise comparison}
\end{tabular}
\caption{Randomization schemes and the expected enrollment size per active intervention arm in each window under an illustrative 3-arm platform trial versus two 2-arm separate trials, where the expected enrollment size per arm in Window $j$ is denoted by $n_j$, $j=1,\ldots,5$, for both trials.}
\end{table}

The second advantage enjoyed by platform designs, which is a key finding of this work, involves a gain in efficiency for comparisons of the efficacy of different active interventions. Such comparisons are useful, for example, when aiming to evaluate the noninferiority of one intervention relative to another. Before providing theoretical insights as to the reasons for and generality of this gain, we present a simple numerical example illustrating how significant it can be in practice. To do this, we provide simulation results in a simple, binary outcome setting in which no covariates are measured. This setting can be seen to be a special case of the more general right-censored setup studied in this paper by letting $C=\tau$, $Y=1(T \le \tau)$ with $\tau$ denoting the end of follow-up, taking the covariate $Z$ to be a degenerate random variable that only takes the value 0, and taking the function $g$ used to define $V=g(Z)$ to be the identity function. Since $V$ is trivial and there is only one time point $\tau$ of interest, we write $RR(a)$, rather than $\gamma(\tau|a,v)$, to denote the relative risk of $a$ in this example. We compare the statistical power for various hypothesis tests based on data from 3-arm platform trials versus two separate 2-arm trials. 
The mean outcome on control is equal to 0.02, and the active interventions are such that the relative risk on intervention 1, namely $RR(1)$, is equal to 0.35 and the relative risk on intervention 2, namely $RR(2)$, is equal to 0.5. We evaluate statistical power in the following settings:
\begin{enumerate}[noitemsep,label=(\alph*), ref=(\alph*)]
  \item\label{it:standardPlat} a standard 3-arm platform trial with a single window in which both active interventions are under randomization and the sample size is selected to achieve 90\% power for marginal tests of $RR(a)\ge 1$ via 0.025-level Wald tests at the design alternative of $RR(a)=0.5$, $a\in\{1,2\}$,
  \item\label{it:separate} two separate 2-arm trials with sample sizes similarly selected for 90\% power, and
  \item\label{it:expandedPlat} an expanded 3-arm platform trial that contains a single randomization window as \ref*{it:standardPlat} does, but whose total sample size (controls and active interventions combined) is equal to the sum of the total sample sizes of the two separate trials in \ref*{it:separate}.
\end{enumerate}
The power calculations used to determine sample sizes for \ref*{it:standardPlat} and \ref*{it:separate} result in 1750 participants enrolled to each active intervention and 1750 participants enrolled to each control arm, namely the shared control in \ref*{it:standardPlat} and each separate control arm in \ref*{it:separate}. 
With marginal significance levels of 0.025, the null hypotheses to be tested are $RR(1) \ge 1$, $RR(2) \ge 1$, and $RR(1) \ge RR(2) + 0.1$, where Wald tests are used in all settings.
In Table \ref{tab:sim for uncensored case}, we present the power of rejecting these null hypotheses, along with the enrollment sizes of intervention and control arms in each dataset. 
Consistent with the earlier discussion regarding the equal power for marginal tests obtained by platform designs and separate trial designs that enroll the same number of participants to each active intervention, the platform trial \ref*{it:standardPlat} and the separate trial \ref*{it:separate} achieve the same power for the tests of $RR(1)\ge 1$ and $RR(2)\ge 1$. As increasing sample size increases power, it also follows naturally that the expanded platform trial \ref*{it:expandedPlat} attains higher power than \ref*{it:standardPlat}, and therefore \ref*{it:separate} as well, for these marginal tests. It is perhaps more surprising that both platform trials considered attain considerably higher power (19\%-28\% on an absolute scale) for testing $RR(1)\ge RR(2) + 0.1$ than do the separate trials in \ref*{it:separate}. This is true for \ref*{it:standardPlat} in spite of the fact that fewer total participants are enrolled in that trial than in the combined separate trials. In the remainder of this section, we provide analytical arguments establishing the generality of this improvement in power that platform designs enjoy for comparisons of the efficacies of different active interventions. When giving these arguments, we consider the general case where the outcome may be right-censored, covariates may be conditioned upon or adjusted for, and several interventions may be under randomization in any given window.

\begin{table}[ht]
\centering
\small\begin{tabular}[t]{lccccccc}
  \toprule
    &  \multicolumn{3}{c}{Sample size} &   & \multicolumn{3}{c}{Null} \\
  \cmidrule(r){2-4} \cmidrule(r){6-8}
    & total & controls & interventions &  & $RR(1) \ge 1$ & $RR(2) \ge 1$ 
    & $RR(1) \ge RR(2) + 0.1$ \\
  \midrule 
 \ref*{it:standardPlat} Platform & 5250 & 1750 & 3500 &  & 0.99 & 0.90 & 0.69 \\
 \ref*{it:separate} Separate & 7000 & 3500 & 3500 &  & 0.99 & 0.90 & 0.50 \\
 \ref*{it:expandedPlat} Expanded platform & 7000 & 2333 & 4667 &  & 1.00 & 0.95 & 0.78 \\
 \bottomrule
\label{tab:sim for uncensored case} 
\end{tabular}
\caption{Power of various tests under
\ref*{it:standardPlat} a standard 3-arm platform trial with a single window in which powered at 90\% for tests of the marginal nulls, \ref*{it:separate} two 2-arm separate trials with sample sizes similarly selected for 90\% power, and \ref*{it:expandedPlat} an expanded 3-arm platform trial with a single randomization window as in trial \ref*{it:standardPlat}, but whose total sample size (controls and interventions combined) is equal to the sum of the total sample sizes of the two separate trials in \ref*{it:separate}. The total sample size reflects the total number of participants enrolled in the platform trial or across the two separate trials. The arm-specific sample sizes listed reflect the expected number of participants to be on control (shared control for the platform trial, sum of the two control arms for the two separate trials) or active interventions (total participants on intervention 1 and intervention 2).}
\end{table}

\subsection{Theoretical guarantees}\label{sec:theoreticalGuarantees}
We consider the case where the goal is to compare the relative risks of two candidate interventions $a_1$ and $a_2$ within a stratum $v$ and at a given time point $t$. We allow this comparison to be made based on a differentiable contrast function $\Theta \colon (0,\infty)^2 \rightarrow \mathbb{R}$, and we refer to $\Theta(\gamma(t|a_1,v),\gamma(t|a_2,v))$ as the relative efficacy of interventions $a_1$ and $a_2$. For brevity, we let $\theta:= \Theta(\gamma(t|a_1,v),\gamma(t|a_2,v)))$, where the values of $t$, $a_1$, $a_2$, and $v$ are treated as fixed for the remainder of this subsection. 
Our analysis will apply to any contrast function $\Theta$ that satisfies the following condition.
\begin{assumpenum}[resume=assumptions]
  \item\label{assump:negative_product_firstderivatives} 
 $\left[\frac{\partial}{\partial r_1}\Theta(r_1,r_2)\right]\left[\frac{\partial}{\partial r_2}\Theta(r_1,r_2)\right]<0$ for all $(r_1,r_2)\in (0,\infty)^2$.
\end{assumpenum}
The above is satisfied by additive and multiplicative contrasts of the relative risks, namely $\Theta(r_1,r_2)=r_1-r_2$ and $\Theta(r_1,r_2)=r_1/r_2$. 
In Appendix~\ref{app:partialOrder}, we argue that the above condition is in fact natural whenever $\Theta$ is to be used to determine the superiority or noninferiority of $a_1$ relative to $a_2$.

To quantify the efficiency gains that can be realized by running a platform trial, we compare the widths of confidence intervals for $\theta$ based on data from two settings. In the first, the data $\{U_i=(X_i, \Delta_i, A_i, W_i, Z_i)\}_{i=1}^n$ arise as $n$ iid observations in a platform trial (see Section \ref{sec:data_structure and assumptions}). In the second, the pooled data from $k$ separate independent trials are used. Specifically, these pooled data take the form $\cup_{a \in [k]}\{U_{ai}=(X_{ai}, \Delta_{ai}, A_{ai}, Z_{i})\}_{i=1}^{m_{a}}$,
where $\{U_{ai}\}_{i=1}^{m_{a}}$ contains the data from the individual separate trial evaluating active intervention $a$. We suppose that each $\{U_{ai}\}_{i=1}^{m_{a}}$ is an iid sample from some distribution $P_a^\dagger$.   
The data structure observed in each separate trial is similar to that observed in the platform trial, except that no window variables are observed and $A_{ai}$ has support in $\{0_a,a\}$, where $0_a$ denotes the control arm in the separate trial for intervention $a$. 
As in the platform trial, the observed time $X_{ai}$ is the minimum of an event time $T_{ai}$ and a censoring time $C_{ai}$, and $\Delta_{ai}=1(T_{ai}\le C_{ai})$. We similarly suppose conditionally independent censoring and randomization, in this case that $C_{ai}$ is independent of $T_{ai}$ given $(A_{ai},Z_{ai})$ and $A_{ai}$ is independent of $Z_{ai}$. 
The overall size of this pooled dataset is $m = \sum_{a=1}^k m_{a}$.

In our theoretical analysis, we focus on the case where the platform trial and the separate trials are identical in all regards except for the fact that a shared control arm is used in the platform trial, whereas a different control arm is used in each of the separate trials. We, therefore, wish to ensure that the population enrolled, the efficacy of each intervention, and the distribution of the censoring and event times is similar across the two settings. To formalize this, we impose a condition relating the distribution $P_a^\dagger$ that gave rise to data in the separate trial for intervention $a$ to the conditional distribution of $U$ under $P$ conditionally on $A\in\{0,a\}$ and $W\in \mathbf{w}_{a}$. Below we denote this conditional distribution by $P(\,\cdot\,|A\in\{0,a\},W\in \mathbf{w}_{a})$ and use $\,\overset{d}{=}\,$ to mean equality in distribution.
\begin{assumpenum}[resume=assumptions]
    \item \label{assump:identical_distribution} \textit{Platform and separate trials enroll from the same population:} For all $a \in [k]$, $(X_a,\Delta_a,1(A_a=a),Z_a)\overset{d}{=}(X,\Delta,1(A=a),Z)$, where $(X_a,\Delta_a,A_a,Z_a) \sim P_a^\dagger$, and $(X,\Delta,A,W,Z) \sim P(\,\cdot\,|A\in\{0,a\},W\in \mathbf{w}_{a})$.
\end{assumpenum}
Under this condition, the relative risk of intervention $a$ as compared to control through time $t$, conditionally on covariate level $v$, is the same in the platform trial and in the separate trials. This condition also implies that the active intervention is assigned with probability $1/2$ in each separate trial.

We also require that the platform trial and separate trials provide similar relative precision for estimating $\gamma(t|a,v)$ and $\gamma(t|a',v)$ for any active interventions $a$ and $a'$. We formalize this notion in terms of the standard errors for these two quantities. In the separate trials, the standard error $\gamma(t|a,v)$ will be on the order of $m_a^{-1/2}$. In the platform trial, it will be on the order of $n_a^{-1/2}$, where $n_a:= nP(A\in\{0,a\},W\in\mathbf{w}_a)$ denotes the number of observations that are expected to be used in the evaluation of the relative risk of active intervention $a$. The following condition imposes that, asymptotically, the ratio of the (squared) standard errors between intervention $a$ and $a'$ be the same across the two trials.
\begin{assumpenum}[resume=assumptions]
  \item \label{assump:identical_ratios} \textit{Platform and separate trials have the same relative precision across interventions:} For all $a,a'\in [k]$, it holds that $m_a/m_{a'}=n_a/n_{a'}$.
\end{assumpenum}

We now exhibit an estimator for the relative risk $\gamma(t|a,v)$ based on the data from the separate trial for a given active intervention $a$. 
For a covariate level $z$, let $P_{ma}^\dagger(z|v) \equiv \sum_{i=1}^{m_{a}}1(Z_{ai}=z, V_{ai}=v)/\sum_{i=1}^{m_{a}}1(V_{ai}=v)$.
We estimate the relative risk through time $t$ via
\begin{align*}
  \gamma_m^\dagger(t|a,v):=\frac{1-\sum_{z \in \mathcal{Z}}S_{m}^\dagger(t|a,z)P_{ma}^\dagger(z|v)}{1-\sum_{z \in \mathcal{Z}}S_{m}^\dagger(t|0_a,z)P_{ma}^\dagger(z|v)},
\end{align*}
where, for $a'\in\{0_a,a\}$, $S_{m}^\dagger(t|a',z)$ is the stratified Kaplan-Meier estimator of the probability that $T>t$ within the stratum where $(A,Z)$ equals $(a',z)$ based on data from the separate trial for intervention $a$.

We now provide the forms of the confidence intervals for $\theta$ that we consider based on data from the platform trial separate trials. 
Each of these intervals is built based on asymptotic normality results that appear in the appendix. Lemma~\ref{lemma:limiting_dist_platform_trials} shows that the estimator $\theta_n:= \Theta(\gamma_n(t|a_1,v), \gamma_n(t|a_2,v))$ based on platform data satisfies $n^{1/2}\left[\theta_n-\theta\right] \rightsquigarrow \mathcal{N}(0,\sigma_{a_1,a_2}(t|v)^2)$,
where $\rightsquigarrow$ denotes convergence in distribution and the form of $\sigma_{a_1,a_2}(t|v)^2$ is given in \eqref{eq:covariance_VE_p}. 
Similarly, Lemma~\ref{lemma:limiting_dist_separate_trials} establishes that the estimator $\theta_m^\dagger:=\Theta(\gamma_{m}^\dagger(t|a_1,v), \gamma_{m}^\dagger(t|a_2,v))$ based on separate trial data satisfies $m^{1/2}\left[\theta_m^\dagger-\theta\right]\rightsquigarrow \mathcal{N}(0,\sigma^\dagger_{a_1,a_2}(t|v)^2)$,
where $\sigma^\dagger_{a_1,a_2}(t|v)^2$ is defined in \eqref{eq:covariance_VE_s}.  Let $\sigma_n$ and $\sigma_m^\dagger$ denote consistent estimators of $\sigma_{a_1,a_2}(t|v)$ and $\sigma^\dagger_{a_1,a_2}(t|v)$, respectively. Fix a significance level $\alpha\in(0,1)$ and let $z_{1-\alpha/2}$ denote the $1-\alpha/2$ quantile of a standard normal distribution. 
Using data from the platform trial, an asymptotically valid two-sided $100(1-\alpha)\%$ confidence interval for the relative efficacy $\theta$ is given by
\begin{align} \label{eq:Wald_CI_platform}
  \Big[ \theta_n \pm z_{1-\alpha/2}\,\frac{\sigma_n}{n^{1/2}} \Big].
\end{align}
Using data from the separate trials, an analogous interval is given by
\begin{align} \label{eq:Wald_CI_separate}
  \Big[ \theta_m^\dagger
  \pm z_{1-\alpha/2}\,\frac{\sigma_m^\dagger}{m^{1/2}} \Big].
\end{align}
The interval in \eqref{eq:Wald_CI_platform} is a Wald-type interval based on an estimator of $\theta$ that is asymptotically efficient within the nonparametric model where the only assumption made on the platform trial data-generating distribution is that intervention assignment is randomized (see \ref{assump:indep_AZ_on_W}), and the interval in \eqref{eq:Wald_CI_separate} is similarly based on an efficient estimator in the model where the only assumption made on the separate trial data-generating distributions is that intervention assignment is randomized. As a consequence, if the platform trial confidence interval in \eqref{eq:Wald_CI_platform} were to be asymptotically shorter than the separate trial confidence interval in \eqref{eq:Wald_CI_separate}, then the platform trial would enable a more efficient estimation of the contrast $\theta$. The following theorem shows that this is indeed the case. Below ${\rm plim}$ denotes a probability limit.


  
\begin{theorem} \label{Thm:improve_efficiency_sharing_control}
  Suppose that \ref{assump:indep_AZ_on_W}-\ref{assump:identical_ratios} hold and $m,n\rightarrow\infty$ in such a way that $n/m \rightarrow \kappa$ for some $\kappa>0$. Let $\rho:= 1/\sum_{a'=1}^k P(A\in\{0,a'\},W\in \mathbf{w}_{a'})$. Denote the widths of the intervals in \eqref{eq:Wald_CI_platform} and \eqref{eq:Wald_CI_separate} by $\omega_n$ and $\omega_m^\dagger$, respectively.
  \begin{enumerate}[label=\roman*)]
      \item If $\kappa > \rho$, then the platform trial interval is shorter asymptotically, that is, ${\rm plim}_{m,n}\,\omega_n/\omega_m^\dagger<1$.
      \item  If $\kappa = \rho$, then the platform trial interval is no longer asymptotically, that is, ${\rm plim}_{m,n}\,\omega_n/\omega_m^\dagger\le 1$.
  \end{enumerate}
\end{theorem}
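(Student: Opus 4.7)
The width ratio is $\omega_n/\omega_m^\dagger = (\sigma_n/\sigma_m^\dagger)(m/n)^{1/2}$, so by consistency of $\sigma_n,\sigma_m^\dagger$ together with $n/m\to\kappa$, one has ${\rm plim}_{m,n}\,\omega_n/\omega_m^\dagger = \kappa^{-1/2}\,\sigma_{a_1,a_2}(t|v)/\sigma_{a_1,a_2}^\dagger(t|v)$. The theorem therefore reduces to the variance inequality
\[
\sigma_{a_1,a_2}(t|v)^2 \;\le\; \rho\,\sigma_{a_1,a_2}^\dagger(t|v)^2,
\]
since then ${\rm plim}_{m,n}\,\omega_n/\omega_m^\dagger \le \sqrt{\rho/\kappa}$, which is strictly less than $1$ in case (i) and equal to $1$ in case (ii).

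To set up this comparison, I would invoke Lemmas~\ref{lemma:limiting_dist_platform_trials}--\ref{lemma:limiting_dist_separate_trials} (referenced in the statement) and apply the delta method through $\Theta$. Writing $\partial_j := \frac{\partial}{\partial r_j}\Theta(\gamma(t|a_1,v),\gamma(t|a_2,v))$ and $\pi_a := P(A\in\{0,a\},W\in\mathbf{w}_a)$, this yields
\[
\sigma_{a_1,a_2}^2 = \sum_{j=1}^2 \partial_j^2 \,\nu_{a_j}^{(p)} + 2\,\partial_1\partial_2\,C, \qquad \sigma_{a_1,a_2}^{\dagger 2} = \sum_{j=1}^2 \partial_j^2 \,\nu_{a_j}^{(s)},
\]
where $\nu_a^{(p)} := n\cdot\mathrm{AVar}\{\gamma_n(t|a,v)\}$, $\nu_a^{(s)} := m\cdot\mathrm{AVar}\{\gamma_m^\dagger(t|a,v)\}$, and $C := n\cdot\mathrm{ACov}\{\gamma_n(t|a_1,v),\gamma_n(t|a_2,v)\}$; no cross term appears in $\sigma_{a_1,a_2}^{\dagger 2}$ because the separate trials are independent. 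Assumption~\ref{assump:identical_ratios} combined with $\sum_a m_a = m$ gives $m_a = m\rho\pi_a$, while by design $n_a = n\pi_a$. The influence function $f_\varphi(\,\cdot\,|t,a,\mathbf{w}_a,v)$ of Theorem~\ref{Thm:joint_distribution_logRRs} is supported on $\{A\in\{0,a\},W\in\mathbf{w}_a\}$, and assumption~\ref{assump:identical_distribution} makes the conditional law of the data on this event coincide with $P_a^\dagger$; the analogous asymptotically linear expansion of $\gamma_m^\dagger(t|a,v)$ therefore shares the same ``per-effective-observation'' variance, so $n_a\,\mathrm{AVar}\{\gamma_n(t|a,v)\} = m_a\,\mathrm{AVar}\{\gamma_m^\dagger(t|a,v)\}$. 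Dividing by $\pi_a$ yields $\nu_a^{(p)} = \rho\,\nu_a^{(s)}$, and hence
\[
\sigma_{a_1,a_2}^2 \;=\; \rho\,\sigma_{a_1,a_2}^{\dagger 2} + 2\,\partial_1\partial_2\,C.
\]

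It then remains to check that $\partial_1\partial_2\,C \le 0$. Assumption~\ref{assump:negative_product_firstderivatives} supplies $\partial_1\partial_2 < 0$, so it suffices to show $C \ge 0$. Since the intervention-arm pieces of the two influence functions have disjoint support across distinct active arms, $C$ is determined entirely by the control-arm contributions, which can only overlap on $\{A=0,\,W\in\mathbf{w}_{a_1}\cap\mathbf{w}_{a_2}\}$. On this event, inspection of \eqref{eq:IF_suvfunc}--\eqref{eq:IF_logRR} shows that the two control-arm contributions are the same functional of the data (the control Nelson--Aalen influence $\xi(\,\cdot\,|t,0,\cdot,z)$ and the stratum-probability piece $h$), multiplied by positive scalar factors that depend only on marginal population quantities; their product therefore has nonnegative expectation. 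Combining gives the desired variance inequality, which completes the proof.

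The main obstacle is the sign argument for $C$: the influence-function expression \eqref{eq:IF_logRR} mixes covariate-level terms through a sum over $z\in\mathcal{Z}$, and one must carefully track the chain-rule coefficients (the factors $Q(t|0,\mathbf{w}_a,v)/\gamma(t|a,v)$, the Jacobian of the exponential linking $\varphi$ and $\gamma$, etc.) to verify that the control-related contributions align in sign across $a_1$ and $a_2$. The other steps—reducing to the variance comparison via Slutsky and consistency, and matching $\nu_a^{(p)}$ to $\nu_a^{(s)}$ via \ref{assump:identical_distribution}—are routine given the asymptotic linearity already established in Theorem~\ref{Thm:joint_distribution_logRRs}.
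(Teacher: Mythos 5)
Your overall architecture matches the paper's: reduce via consistency of $\sigma_n,\sigma_m^\dagger$ and $n/m\to\kappa$ to the variance inequality $\sigma_{a_1,a_2}(t|v)^2\le\rho\,\sigma^\dagger_{a_1,a_2}(t|v)^2$, relate the marginal variance contributions of the two designs through \ref{assump:identical_distribution} and \ref{assump:identical_ratios} (the paper obtains this comparison as an inequality, by multiplying each separate-trial term by $P(A\in\{0,a_j\},W\in\mathbf{w}_{a_j})\le 1$ and invoking the identity \eqref{eq:identical_ratios_in_prob}, rather than as your exact relation $\nu_a^{(p)}=\rho\,\nu_a^{(s)}$; either version would suffice), and then dispose of the cross term $2\,\partial_1\partial_2\,C$ using \ref{assump:negative_product_firstderivatives}. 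Up to that point the proposal is essentially the paper's proof.

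The genuine gap is the claim $C\ge 0$, which you correctly identify as the main obstacle but then sketch with an argument that would fail. First, the overlap of $f_{\gamma}(\cdot\,|t,a_1,\mathbf{w}_{a_1},v)$ and $f_{\gamma}(\cdot\,|t,a_2,\mathbf{w}_{a_2},v)$ is not confined to $\{A=0,\,W\in\mathbf{w}_{a_1}\cap\mathbf{w}_{a_2}\}$: the stratum-probability pieces $h(\cdot\,|\mathbf{w}_{a_j},z,v)$ are supported on $\{W\in\mathbf{w}_{a_j},V=v\}$ across \emph{all} arms, since $P_n(z|\mathbf{w}_{a_j},v)$ pools every participant enrolled in those windows, so they contribute cross-moments $E[h(U|\mathbf{w}_{a_1},z,v)h(U|\mathbf{w}_{a_2},\tilde{z},v)]$ for all pairs $z,\tilde{z}$. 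Second, these terms carry the coefficients $\eta(t|a_j,\mathbf{w}_{a_j},z,v)=1-S(t|a_j,\mathbf{w}_{a_j},z)-\gamma(t|a_j,v)[1-S(t|0,\mathbf{w}_{a_j},z)]$, which are signed rather than positive, and the $h\times h$ expectations themselves can be negative; so ``same functional times positive scalar factors'' does not determine the sign. The paper's resolution is structural: (i) all $\xi\times h$ cross-moments and the mixed-stratum $\xi\times\xi$ terms vanish by direct computation; (ii) the entire $h\times h$ block cancels because $\sum_{z}\eta(t|a_1,\mathbf{w}_{a_1},z,v)p(z|\mathbf{w}_{a_1},v)=0$ and $\sum_{z}\eta(t|a_1,\mathbf{w}_{a_1},z,v)p(z|\mathbf{w}_{a_1}\cap\mathbf{w}_{a_2},v)=0$, the latter identity requiring the constancy condition \ref{assump:constancy}, which plays no role in your sketch; and (iii) the sole surviving term is $\gamma(t|a_1,v)\gamma(t|a_2,v)p(z|\mathbf{w}_{a_1},v)p(z|\mathbf{w}_{a_2},v)E[\xi(U|t,0,\mathbf{w}_{a_1},z)\xi(U|t,0,\mathbf{w}_{a_2},z)]$, whose nonnegativity is itself established by a martingale predictable-covariation computation over the shared windows $\mathbf{w}_{a_1}\cap\mathbf{w}_{a_2}$. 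Without steps (i)--(iii) the sign of $C$ is not determined, so the proposal as written does not close the proof.
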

The above theorem shows that, under conditions, running a platform trial will never harm the precision of a confidence interval contrasting two different interventions and will improve it in some cases.
To see the lack of harm, note that $\rho$ is never greater than one, and therefore Theorem \ref{Thm:improve_efficiency_sharing_control} implies that the confidence interval based on the platform trial data is never wider than that based on the separate trial data if $n=m$. In fact, a stronger conclusion holds: the platform trial confidence interval is asymptotically no wider than that separate trial confidence interval provided the expected number of participants that are enrolled to each active intervention $a$ is equal in these two settings --- that is, $nP(A=a)=m_{a}P_{a}^\dagger(A=a)$ for all $a\in [k]$; the conclusion can be shown to hold since $\kappa=\rho$ in these cases. Because the platform trial utilizes the shared control arm, enforcing the same number of expected participants enrolled on the active interventions allows for the platform trial to be smaller ($n<m$) whenever there is at least one window in the platform trial in which two or more interventions are under randomization. More generally, when there is at least one such window, the platform trial can yield \textit{shorter} confidence intervals even in cases where it enrolled fewer participants --- this is true, in particular, when $\kappa\in (\rho,1)$. In fact, as we show in our simulations, there are realistic cases where this gain in precision is considerable.

\section{Adaptive noninferiority test} \label{sec:intersection_test}
We now provide a testing procedure to investigate the noninferiority of a specified intervention -- assumed to be intervention 1 here without loss of generality -- as compared to the most efficacious of the other interventions. We call intervention 1 noninferior to intervention $a$ if $\gamma(t|1,v) < \gamma(t|a,v) + \varepsilon$, where $\varepsilon> 0$ is a specified noninferiority margin, and noninferior to the most efficacious of the other interventions if $\gamma(t|1,v) < \min_{a\not=1} \gamma(t|a,v) + \varepsilon$. If it is not \textit{a priori} known whether any of the other interventions are in fact efficacious, noninferiority alone is insufficient to determine that an intervention has clinically meaningful efficacy. To handle such cases, it is natural to further require that the relative risk of intervention 1 falls below some specified threshold $\delta$. This leads to a null hypothesis test of
\begin{align*}
    H_0 : \gamma(t|1,v) \ge \min\{ \delta,\, \min_{a \not= 1}\gamma(t|a,v) + \varepsilon \}\ \ \textnormal{ vs. }\ \ H_1 :\textnormal{not }H_0. 
\end{align*}
The null $H_0$ rewrites as a union of $k$ marginal null hypotheses. In particular, $H_0$ holds if and only if at least one of the following marginal nulls holds: $H_{00} : \gamma(t|1,v) \ge \delta$ or $H_{0a} : \gamma(t|1,v) \ge \gamma(t|a,v) + \varepsilon$ for some $a \in\{2,\ldots,k\}$.
Consequently, the alternative hypothesis $H_1$ corresponds to the intersection of the complementary marginal alternatives, namely $H_{1a} : \textnormal{not }H_{0a}$, with $a$ varying over $\{0,2,3,\ldots,k\}$.

Based on the above observations, the null $H_0$ can be tested at significance level $\alpha$ by running (unadjusted) $\alpha$-level tests of the marginal nulls $H_{0a}$ vs. $H_{1a}$, with $a$ varying over $\{0,2,3,\ldots,k\}$, and rejecting $H_0$ if and only if all of these $k$ marginal tests reject. 
This test of $H_0$ vs. $H_1$, which we refer to as an intersection test, necessarily controls the type I error asymptotically provided the marginal tests do so, since
\begin{align} \label{eq:type_I_error_control}
  &\limsup_n P^n(\mbox{reject } H_0) = \limsup_n P^n\big(\cap_{a \in \mathcal{A}_a}\{\textnormal{reject }H_{0a}\}\big)\le \limsup_n P^n(\mbox{reject }H_{0a})=\alpha,
\end{align}
where $P^n$ denotes the distribution of $n$ independent draws from $P$. One natural implementation of the intersection test, which is the one we employ in our simulations, bases the marginal tests off of Wald-type confidence intervals constructed based on Lemma \ref{lemma:joint_distribution_RRs} in the appendix. In particular, the marginal test of $H_{00}$ rejects if the upper bound of a two-sided $100(1-2\alpha)\%$ confidence interval for $\gamma(t|1,v)$ is smaller than $\delta$ and, for $a\ge 2$,
the marginal test of $H_{0a}$ rejects if the upper bound of a two-sided $100(1-2\alpha)\%$ confidence interval for $\gamma(t|1,v)-\gamma(t|a,v)$ is smaller than $\varepsilon$. 

\section{Numerical studies} \label{sec:numerical}
We now present numerical studies to imitate the evaluation of $k=10$ vaccines in a placebo-controlled platform trial versus in multiple separate trials.  
Within each window, enrollment is uniform over calendar time. 
Table \ref{tab:VE_AR_enrollment in windows for interventions}
summarizes the enrollment timelines of active interventions by windows (until 3 months post trial initiation), along with the window widths and the enrollment size per arm in each window. 
A four-category baseline variable $Z$ is measured for each participant, where the distribution of this variable depends on the enrollment window. In particular, $(P\{Z=z\mid W=w\})_{z=1}^4$ is equal to (0.1, 0.2, 0.3, 0.4) when $w\in \{1,2\}$ and (0.4, 0.3, 0.2, 0.1) otherwise. Within each stratum of $Z$, placebo participants have piecewise-constant hazard functions that change values only at the calendar times indicating a transition between windows (Months 1, 1.5, 2, and 2.5). The strata where $Z\in \{1,2\}$ are the lower-risk strata, and their hazard functions across windows (1, 2, 3, 4, 5) are such that the corresponding 6-month attack rates in the placebo arm are equal to (12\%, 12\%, 6\%, 4\%, 4\%). The strata where $Z\in \{3,4\}$ are the higher-risk strata, and the hazard functions are chosen so that the 6-month attack rates are twice those as in the lower-risk strata. Though the event is somewhat rare, all vaccine-vs.-shared-placebo comparisons include 150 events shortly after enrollment opens --- the median across Monte Carlo repetitions ranges from a median 2.5 months for intervention 4 to 4.5 months for intervention 5. The hazard ratio vaccine efficacy --- defined as one minus the hazard ratio of vaccine versus placebo recipients --- is presented in Table~\ref{tab:VE_AR_enrollment in windows for interventions}. For simplicity, this vaccine efficacy is made to be constant over time.  
The time for loss to follow-up is taken to follow a ${\rm Uniform}(0,120)$ distribution that is independent of all other variables under consideration, so that there is 10\% annual loss to follow-up during the study, which runs for a total of 18 calendar months.

We consider estimators in two cases: (i) reducing $V$ to a constant variable and estimating the marginal relative risk, and (ii) taking $V=Z$ and estimating the conditional relative risk. All results in the main text pertain to the estimation of the marginal relative risk, and results for estimators of the covariate-stratified relative risk are reported in Appendix~\ref{app-sec:supp_figures}. 
We compare the statistical efficiency of using data from a platform trial (total sample size of 40,400) rather than separate trials (total sample size of 69,800) in the estimation for relative risk ratios of intervention 7 versus other interventions at $t= 3$ or 6, under moderate loss to follow-up and administrative censoring at 6, 9, 12 or 18 calendar months. The efficiency gain is measured by confidence-interval-width ratios --- the lower the values of these ratios are, the more efficient the platform trial will be as compared to the separate trials. 

Similar coverage was observed for the confidence intervals based on platform trial data and separate trial data. 
In particular, the empirical percentiles (0\%, 25\%, 50\%, 75\%, 100\%) of the confidence interval coverage across the total of 72 scenarios considered were (91.8\%, 93.7\%, 94.5\%, 95\%, 95.7\%) for the platform trial and (91.4\%, 93.4\%, 94\%, 94.5\%, 95.6\%) for the separate trials.
Figure \ref{fig:hist_ciwidth_ratios} displays the efficiency gain of running a platform trial over running separate trials for the marginal and conditional estimands. Across all settings considered, analyses based on the platform trial were at least as efficient, and often more efficient, than those based on the separate trial data. This is true in spite of the fact that the platform trial enrolled over 40\% fewer participants than the separate trials. This result is consistent with our theoretical guarantees in Section~\ref{sec:theoreticalGuarantees}. Further details on these efficiency gains, broken down by evaluation time $t$ and administrative censoring time, can be found in Figures~\ref{fig:hist_cRRR_ciwidth_ratios} and \ref{fig:hist_mRRR_ciwidth_ratios} in the appendix.

\begin{table}[hbt!]
\centering
\begin{tabular}[t]{ccccccc}
\toprule
       & VE & Window 1  &  Window 2  &  Window 3 &  Window 4 & Window 5\\
Width  &    & (0, 1]    &  (1, 1.5]  &  (1.5, 2] &  (2, 2.5] & (2.5,  3]\\
Enrollment size per arm &   & 1000    &  900       &  1200      & 1400      & 1000 \\ 
\midrule
  Vaccine    &            &             &             &            &        &   \\
  1 & 0\%  &            & \checkmark  &             & \checkmark &  \checkmark  \\
  2 & 10\% & \checkmark &             &             &            &  \checkmark \\
  3 & 20\% & \checkmark &             &             & \checkmark &  \checkmark   \\
  4 & 30\% & \checkmark & \checkmark  & \checkmark  &            &          \\
  5 & 40\% &            & \checkmark  & \checkmark  &            &          \\
  6 & 50\% &            & \checkmark  & \checkmark  & \checkmark &          \\
  7 & 60\% & \checkmark & \checkmark  & \checkmark  &            & \checkmark  \\
  8 & 60\% & \checkmark &             &             & \checkmark & \checkmark  \\
  9 & 70\% & \checkmark & \checkmark  & \checkmark  & \checkmark &          \\
 10 & 70\% & \checkmark & \checkmark  & \checkmark  & \checkmark & \checkmark \\
\bottomrule
\end{tabular}
\caption{The hazard ratio
vaccine efficacy (VE) and windows of enrollment (with corresponding calendar times, in months) for the 10 vaccines in the platform trial.
}
\label{tab:VE_AR_enrollment in windows for interventions}
\end{table}



\begin{figure}[hbt!]
\begin{center}
 \includegraphics[trim={0cm 0.2cm 0.2cm 0cm}, clip, width=0.7\textwidth]
 {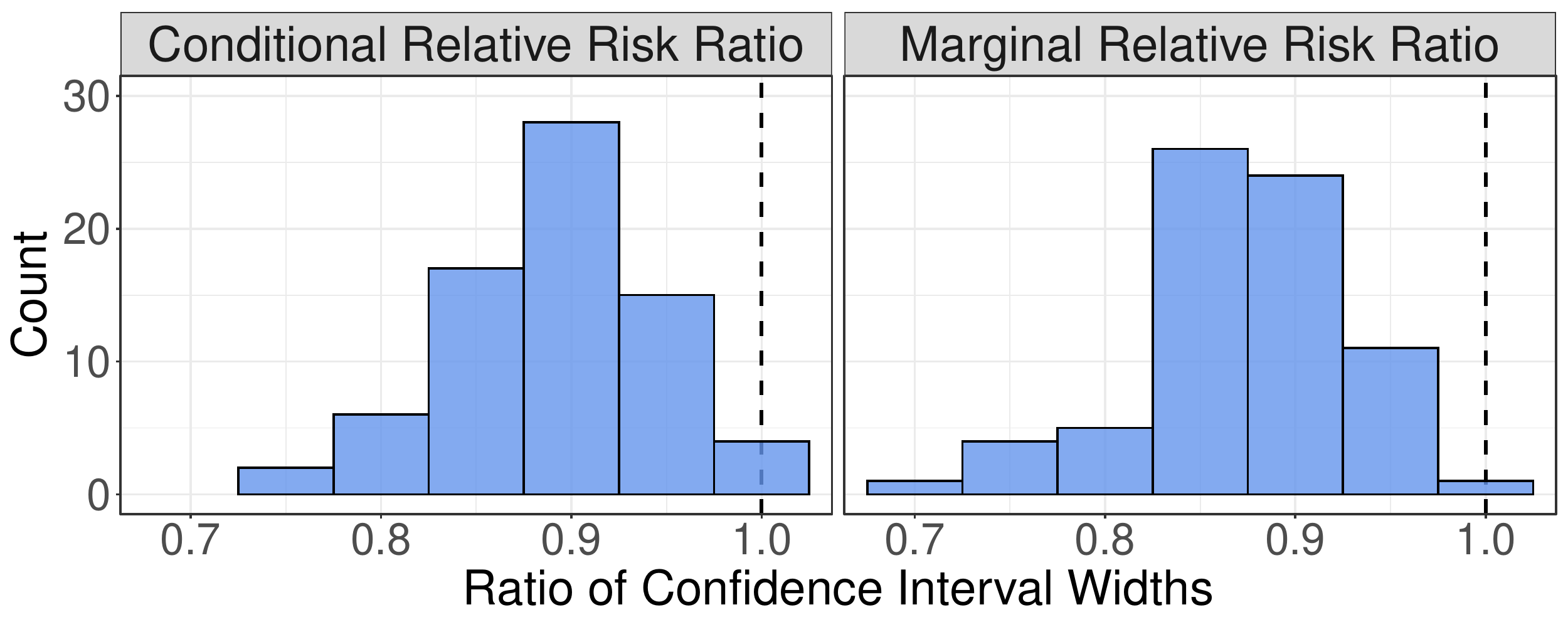}
 \caption{The ratios of the widths of confidence intervals based on data from the platform trial versus from separate trials for both marginal and conditional relative risks. Results for all comparator interventions (interventions other than the reference intervention, namely intervention 7), both evaluation times ($t= 3$ and 6), and all administrative censoring times (calendar months 6, 9, 12, and 18) are summarized in these histograms.}
 \label{fig:hist_ciwidth_ratios}
\end{center}
\end{figure}

We evaluated the performance of the adaptive noninferiority test described in Section~\ref{sec:intersection_test}. The evaluation of this test contains two scenarios: one with intervention 7 serving as the prespecified candidate whose noninferiority will be evaluated, and the other with intervention 9 serving as this candidate. Relative efficacy was quantified using the differences of the marginal relative risk of the pre-specified candidate and that of the other active interventions. We also evaluate the performance of a likelihood ratio type test, which is described in Appendix~\ref{app-sec:LRT}. 
 
We set the significance level at $\alpha = 0.025$ and the efficacy threshold at $\delta = 0.7$. 
In Figure \ref{fig:rejrates_tol0.7_ltfr10_comparison}, we report the empirical rejection rates (over 1000 Monte Carlo simulation runs) to evaluate the power and type I error control based on the data from the platform trial and the data from separate independent trials, with all the observations censored at $t=6$ and under moderate loss to follow-up. From the figure, we see that type I errors of both the intersection test and the likelihood ratio type test are conservatively controlled by the test data from either trial. The tests based on the data from the platform trial yields higher power than the one based on separate independent trials, and the intersection tests attain considerably higher power than the likelihood ratio type tests. 

We also evaluated an oracle noninferiority test that tested a simpler null hypothesis than the intersection test, namely the hardest-to-be-rejected null among the $k$ total marginal null hypothesis tests (of $H_{00}$ and $H_{0a'}$, $a' \not=$ the pre-specified intervention $a$) described in Section~\ref{sec:intersection_test}. The hardness of a marginal null hypothesis was quantified via the statistical power of testing this marginal null. The oracle test imitates an idealized setting where it is possible to set \textit{a priori} a single benchmark intervention for the noninferiority test. 
This setting is certainly unrealizable in practice since evaluating it relies on knowing the true operating characteristics of tests of the $k$ total marginal null hypotheses. As the intersection test must reject all $k$ of these null hypotheses, it will necessarily have lower power than the oracle. Nevertheless, the intersection test achieved only slightly lower power than did the oracle test in all considered scenarios. In particular, the power of the intersection test relative to that of the oracle test ranged from 0.4\% to 11.6\% lower on an absolute scale when intervention 7 was pre-selected and from 0\% to 4.3\% when intervention 9 was pre-selected.


\begin{figure}[hbt!]
\begin{center}
 \includegraphics[trim={0.1cm 0.4cm 0.2cm 0.1cm}, clip, width=0.65\textwidth]{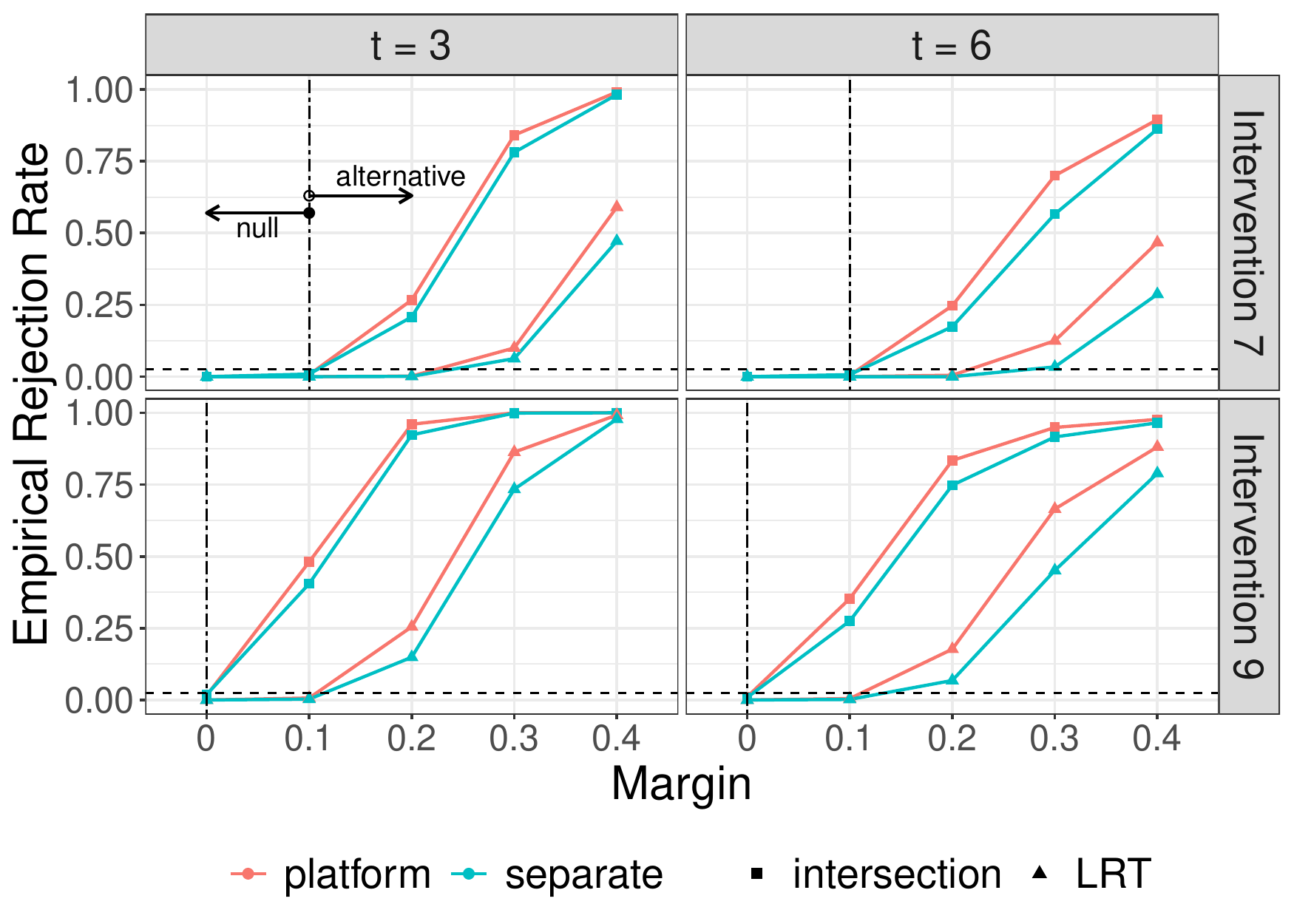}
 \caption{The empirical rejection rates of the intersection test and the likelihood ratio type test for different pre-selected active interventions based on data from the platform trial and separate trials, evaluated at $t=3, 6$ with varying margin values, the significance level of $\alpha=0.025$ and the efficacy threshold $\delta=0.7$, when all the observations are censored at $t=6$ and subject to moderate loss to follow-up.}
 \label{fig:rejrates_tol0.7_ltfr10_comparison}
\end{center}
\end{figure}


\section{Data illustration} \label{sec:data_analysis}
In two parallel Antibody Mediated Prevention (AMP) trials, referred to as HVTN 703 and HVTN 704, participants were randomly assigned in a 1:1:1 ratio to control group, low-dose intervention group, and high-dose intervention group \citep{corey2021two}. The primary endpoint was the post-trial days to type-1 human immunodeficiency virus (HIV-1) infection through the week 80 study visit. 
The annual loss to follow-up rates were 6.3\% in HVTN 703 and 9.4\% in HVTN 704. 
The two trials were conducted in different populations, with HVTN 703 enrolling at-risk women in sub-Saharan Africa and HVTN 704 enrolling at-risk cisgender men and transgender persons in the Americas and Europe. 
Following \cite{corey2021two}, we stratified our analyses by 
trial, thereby reporting separate prevention efficacy estimates for HVTN 703 and HVTN 704. 

To illustrate our developed testing procedure, we created open platform trial datasets by subsampling data from the parallel, multi-arm AMP trials as described below. 
A subset of the participants in the original trial were enrolled in the platform trials we considered. In these platform trials, we fixed the total number of participants enrolled to each active intervention to be approximately 320 in HVTN 703 and 450 in HVTN 704, which is approximately half the total number of participants enrolled to the active arms in the original trials. 
To enroll participants, the data for each trial were divided into four chronological windows, each containing a quarter of the sample; the resulting sample sizes for intervention groups over windows are presented in Figure \ref{fig:sample_dist_arms_windows_protocols} in the appendix. Participants were resampled, without replacement, from HVTN 703 or HVTN 704 to generate platform trial datasets that have a desired level of overlap in the control arm. More concretely, we sought to fix the proportion of controls shared at a value ranging from 0.25 to 0.50, where this proportion is defined as the number of controls that are shared by the two arms divided by the total number of controls enrolled in the platform trial. Details can be found in Supplementary Appendix \ref{app:dataIllustration}. 





We report inference for the relative efficacy of the low-dose intervention relative to that of the high-dose intervention (on an additive scale), where lower relative efficacy values indicate more favorable performance of the low-dose intervention. Because the trial results were already public when we received the data, we were not able to prespecify a noninferiority margin $\varepsilon$ or efficacy threshold $\delta$. Therefore, here we focus on reporting the widths of 95\% confidence intervals for the relative efficacy through week 80, averaged across the 
datasets considered. 
Figure~\ref{fig:noninferiority_ciwidths_stratified_low_doses} displays these widths.
In both trials, there is a trend towards more shared control data between two dose groups yielding tighter confidence intervals.


\begin{figure}[hbt!]
\begin{center}
 \includegraphics[trim={0.15cm 0.2cm 0.2cm 0.1cm}, clip, width=0.6\textwidth]{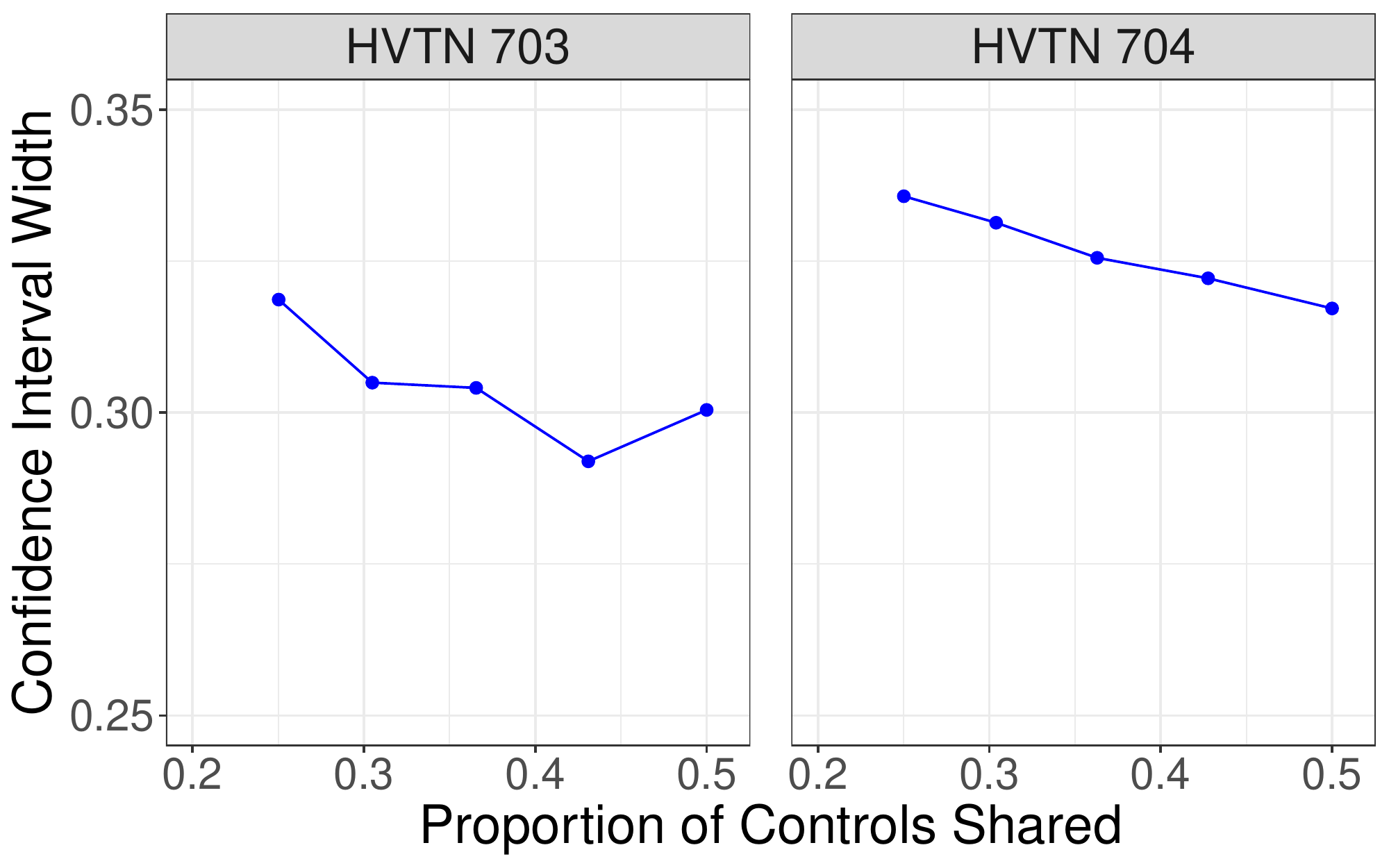}
 \caption{The confidence interval widths of the relative efficacy (on an additive scale) of the low-dose intervention relative to the high-dose intervention in HVTN 703 and HVTN 704, evaluated at week 80.} 
 \label{fig:noninferiority_ciwidths_stratified_low_doses}
\end{center}
\end{figure}


\section{Conclusion}\label{sec:conclusion}
This article established that platform trials can lead to more precise estimation of the relative efficacy of two different interventions. To this end, the joint distributions of estimators of the efficacy of multiple active interventions versus contemporaneous control were established. Such joint distributions were also shown to enable adaptive tests of noninferiority wherein the identity of the most efficacious comparator intervention need not be known in advance. Simulation studies were provided to support this asymptotic theory. In this study, a platform trial enrolled approximately 67\% as many participants as would have separate, independent trials and yet yielded up to 25\% narrower confidence intervals for contrasts of the efficacies of two different interventions.

This article only allows for the incorporation of discrete baseline covariates to account for the possibility of informative censoring. In future work, it would be of interest to extend our approach to incorporate continuous or high-dimensional discrete covariates. One possible area of future work in this direction would be to construct a targeted minimum loss-based estimator \citep{van2011targeted} that allows for the use of supervised machine learning methods to estimate the needed conditional probabilities that involve these covariates.

Though the positive correlation that we established for the joint distribution of the efficacy estimates across the different arms is advantageous for contrasting the efficacy of different interventions, it also has a disadvantage. To see why, note that, due to the use of a single shared placebo arm, there is a possibility that, by random chance, an unusually high or low number of events will be observed on the placebo arm. As a consequence, many of the arm-specific efficacy estimators may provide unusually high or low estimates of efficacy. This can, in turn, lead to scenarios where many interventions are falsely suggested to be efficacious 
or inefficacious \citep{howard2018recommendations}.  
Though such challenges may be avoided by employing an appropriate multiple testing correction, such procedures have previously been viewed as undesirable for platform trials that involve many stakeholders since they can disincentivize participation in these economical designs in favor of conducting more costly, intervention-specific trials. Therefore, as mentioned in the Introduction, recent platform trials have followed multi-armed trials in not using a multiplicity adjustment \citep{WHOSolidarityTrialConsortium2021,howard2021platform}. An alternative means to avoid elevated false (non-)discovery rates in platform designs is to simply increase the size of the placebo arm. Rather than reducing the correlation between arm-specific estimates, this instead directly reduces the variance of these estimates. Since the efficiency gains that we showed in this paper allowed the platform trials to be substantially smaller than the corresponding separate trials, the placebo arm could be substantially increased in size while still resulting in a smaller-sized trial than would have been two separate trials and improved precision for comparing the efficacy of multiple interventions. In any given setting, pre-trial Monte Carlo studies can be conducted to determine whether modifying the platform trial allocation ratio in this manner yields a design with the investigators' desired operating characteristics.

\section*{Acknowledgements}
The authors are grateful to Peter Gilbert for helpful discussions and to the participants and sponsors of the HVTN 703 and 704 trials. This work was supported by the NIH through award numbers DP2-LM013340 and 5UM1AI068635-09. The
content is solely the responsibility of the authors and does not necessarily represent the official views of
the NIH.

\newpage
\section*{Appendix}
\begin{appendix}

\section{List of Investigators in the AMP Investigator Group}

\textit{\textbf{AMP Leadership:}} 

\noindent Lawrence Corey, Myron Cohen, Srilatha Edupuganti, Nyaradzo Mgodi, Shelly Karuna, Philip Andrew
\\[8pt]
\noindent 
\textit{\textbf{HVTN 703/HPTN 081 Site Investigators of Record:}} 

\noindent \textbf{Botswana: Gaborone CRS} - Joseph Makhema. \textbf{Kenya: KISUMU CRS} – Grace Mboya. \textbf{Malawi: Blantyre CRS} –  Johnstone Kumwenda; \textbf{Malawi CRS (Lilongwe)} – Mina Hosseinipour. \textbf{South Africa: Aurum Institute Klerksdorp CRS} - Craig Innes; \textbf{Botha’s Hill CRS} – Elizabeth Spooner; \textbf{CAPRISA eThekwini CRS} – Nigel Garrett; \textbf{Chatsworth CRS} – Logashvari Naidoo; \textbf{Groote Schuur HIV CRS} – Catherine Orrell; \textbf{Rustenburg CRS} – William Brumskine; \textbf{Soweto HVTN CRS} – Fatima Laher; \textbf{Synexus Stanza (Mamelodi) Clinical Research Centre} – Sheena Kotze; \textbf{Vulindlela CRS} – Halima Dawood; \textbf{Wits RHI Ward 21 CRS} – Sinead Delany-Moretlwe.  \textbf{Zimbabwe: Milton Park CRS (formerly Parirenyatwa CRS)} – Pamela Mukwekwerere; \textbf{Seke South CRS} – Portia Hunidzarira; \textbf{Spilhaus CRS} – Shorai Mukaka. 
\\[8pt]
\noindent \textit{\textbf{HVTN 704/HPTN 085 Site Investigators of Record:}}

\noindent \textbf{United States: Alabama CRS} – Paul Goepfert; \textbf{Bridge HIV CRS} – Susan Buchbinder; \textbf{Bronx Prevention Center CRS} – Jessica Justman;  \textbf{Case Clinical Research Site} – Jeffrey M. Jacobson; \textbf{Chapel Hill CRS} – Cynthia Gay; \textbf{Columbia P\&S CRS} – Magdalena E. Sobieszczyk; \textbf{Fenway Health CRS} – Ken Mayer; \textbf{George Washington Univ. CRS} – Marc Siegel; \textbf{Harlem Prevention Center CRS} – Sharon Mannheimer; \textbf{New Jersey Medical School Clinical Research Center CRS} – Shobha Swaminathan; \textbf{New York Blood Center CRS} – Hong Van Tieu; \textbf{Penn Prevention CRS} – Ian Frank; \textbf{Seattle Vaccine and Prevention CRS} – Juliana McElrath; \textbf{The Hope Clinic of the Emory Vaccine Center CRS} – Srilatha Edupuganti; \textbf{The Ponce de Leon Center CRS} – Carlos del Rio; \textbf{UCLA Vine Street Clinic CRS} – Jesse Clark; \textbf{University of Rochester Vaccines to Prevent HIV Infection CRS} – Michael Keefer. \textbf{Peru: ACSA CRS} –  Juan Carlos Hinojosa; \textbf{Asociacion Civil Impacta Salud y Educacion, Barranco CRS} – Javier R. Lama; \textbf{Centro de Investigaciones Tecnológicas, Biomédicas y Medioambientales CRS - UNMSM} – Jorge Sanchez; \textbf{Asociacion Civil Impacta Salud y Educacion, San Miguel CRS} – Pedro Gonzales; \textbf{Via Libre CRS} – Robinson Cabello.

\section{Proofs}
\subsection{Preliminaries and notations}\label{app-sec:prelim_notations} 
The lemma below is useful for establishing \eqref{eq:estimand}.
\begin{lemma}\label{lem:covAdjEstimand}
 Under \ref{assump:indep_AZ_on_W}, the following hold for every time $t$, active intervention $a$, and possible realization $v$ of $V$:
 \begin{align*}
  P(T\le t|A=0, W \in \mathbf{w}_a, V=v)&= 1-\sum_{z\in\mathcal{Z}} S(t|0,\mathbf{w}_a,z) \,P(Z=z|W\in\mathbf{w}_a,V=v),\\
  P(T\le t|A=a, W \in \mathbf{w}_a, V=v)&= 1-\sum_{z\in\mathcal{Z}} S(t|a,\mathbf{w}_a,z) \,P(Z=z|W\in\mathbf{w}_a,V=v).
 \end{align*}
\end{lemma}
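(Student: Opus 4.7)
The plan is to handle both identities via a single template by fixing $a' \in \{0, a\}$, so that the choice $a'=0$ recovers the first displayed identity and $a'=a$ the second. My first step is to apply the law of total probability, conditioning on $Z$:
\[
    P(T\le t \mid A=a', W\in\mathbf{w}_a, V=v) = \sum_{z\in\mathcal{Z}} P(T\le t \mid A=a', W\in\mathbf{w}_a, V=v, Z=z)\,P(Z=z \mid A=a', W\in\mathbf{w}_a, V=v).
\]
Because $V = g(Z)$ is deterministic in $Z$, summands with $g(z)\ne v$ drop out since the joint event $\{Z=z, V=v\}$ is empty there. For $z$ with $g(z)=v$, the conditioning on $V=v$ is redundant given $Z=z$, so the inner factor collapses to $P(T\le t \mid A=a', W\in\mathbf{w}_a, Z=z) = 1 - S(t\mid a', \mathbf{w}_a, z)$ by the very definition of $S$.

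The second step is to replace the arm-conditional covariate law with the arm-free one, that is, to show $P(Z=z \mid A=a', W\in\mathbf{w}_a, V=v) = P(Z=z \mid W\in\mathbf{w}_a, V=v)$. The windowwise randomization condition $Z\independent A\mid W$ from \ref{assump:indep_AZ_on_W}, combined with $V=g(Z)$, yields $P(Z=z\mid A=a', W=w, V=v) = P(Z=z\mid W=w, V=v)$ for every $w$. A Bayes-rule computation then lifts this to the pooled event $W\in\mathbf{w}_a$, at which point substituting into the previous display and using $\sum_z P(Z=z\mid W\in\mathbf{w}_a, V=v) = 1$ produces
\[
    P(T\le t \mid A=a', W\in\mathbf{w}_a, V=v) = 1 - \sum_{z\in\mathcal{Z}} S(t\mid a', \mathbf{w}_a, z)\,P(Z=z \mid W\in\mathbf{w}_a, V=v),
\]
which is the desired identity for either choice of $a'$.

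The hard part will be the second step, since $Z\independent A\mid W$ at the level of individual windows does not automatically imply $Z\independent A\mid (W\in\mathbf{w}_a, V=v)$: the arm-conditional distribution reweights the windows by $P(A=a'\mid W=w)$, while the unconditional pool reweights by $P(W=w\mid W\in\mathbf{w}_a, V=v)$. The cleanest way through is to observe that the paper's randomization scheme assigns $a'=0$ and $a'=a$ with a common probability $1/(k_w+1)$ on each $w \in \mathbf{w}_a$, so the window-specific weight factor appears identically in the numerator and denominator of the Bayes expression and cancels, delivering the set-level independence needed to close the argument.
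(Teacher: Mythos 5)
Your overall route is the same as the paper's: condition on $Z$ via the law of total probability, collapse the inner term to $1-S(t|a',\mathbf{w}_a,z)$, and then replace $P(Z=z\mid A=a', W\in\mathbf{w}_a, V=v)$ by $P(Z=z\mid W\in\mathbf{w}_a, V=v)$. You are also right that the only delicate point is that last replacement, since \ref{assump:indep_AZ_on_W} is stated windowwise while the lemma needs independence given the pooled event $\{W\in\mathbf{w}_a\}$. The paper's proof performs this replacement in one line, citing \ref{assump:indep_AZ_on_W} at the level of that pooled event; it does not attempt the derivation from the windowwise statement that you undertake.

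The problem is that your proposed derivation of the set-level statement does not go through. Writing $\pi_w:=P(A=a'\mid W=w)$ and using the windowwise independence, for $g(z)=v$ one gets
\begin{align*}
P(Z=z\mid A=a', W\in\mathbf{w}_a, V=v)&=\frac{\sum_{w\in\mathbf{w}_a}\pi_w\,P(Z=z\mid W=w)\,P(W=w)}{\sum_{w\in\mathbf{w}_a}\pi_w\,P(V=v\mid W=w)\,P(W=w)},
\end{align*}
whereas $P(Z=z\mid W\in\mathbf{w}_a, V=v)$ is the same ratio with every $\pi_w$ deleted. The factors $\pi_w=1/(k_w+1)$ sit \emph{inside} the sums over $w$ and multiply different window-specific terms in the numerator and denominator, so they cancel only if they are constant over $w\in\mathbf{w}_a$ --- which fails whenever the number of concurrently randomized interventions changes across the windows of $\mathbf{w}_a$ (e.g., intervention ${\rm A}_1$ in Table 1 has allocation probabilities $1/2$, $1/3$, $1/4$, $1/3$ across its four windows). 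What the equal within-window allocation of control and intervention $a$ actually buys is the different statement $P(Z=z\mid A=0, W\in\mathbf{w}_a, V=v)=P(Z=z\mid A=a, W\in\mathbf{w}_a, V=v)$, i.e., the two arms being contrasted carry the same covariate mix; it does not identify either with the arm-free weights $P(Z=z\mid W\in\mathbf{w}_a, V=v)$ appearing on the lemma's right-hand side. To close your argument you would need either to read \ref{assump:indep_AZ_on_W} as asserting $Z\independent A\mid 1(W\in\mathbf{w}_a)$ directly (as the paper's one-line step effectively does), or to add a hypothesis such as constancy of the allocation probability across the windows of $\mathbf{w}_a$, or constancy of $P(Z=z\mid W=w,V=v)$ over $w\in\mathbf{w}_a$.
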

\begin{proof}[Proof of Lemma~\ref{lem:covAdjEstimand}]
We prove the first equality. The proof of the second equality is essentially identical. As shown in the below display, the first equality follows by the law of total expectation, \ref{assump:indep_AZ_on_W}, and the definition of $S$:
\begin{align*}
    &P(T\le t|A=0, W \in \mathbf{w}_a, V=v) = E\left[P(T\le t|A=0, W \in \mathbf{w}_a, Z)\middle|A=0, W \in \mathbf{w}_a, V=v\right] \\
    &\quad = E\left[P(T\le t|A=0, W \in \mathbf{w}_a, Z)\middle|W \in \mathbf{w}_a, V=v\right]
    = 1-E\left[S(t|0,\mathbf{w}_a,Z)\middle|W \in \mathbf{w}_a, V=v\right] \\
    &\quad = 1-\sum_{z\in\mathcal{Z}} S(t|0,\mathbf{w}_a,z) \,P(Z=z|W\in\mathbf{w}_a,V=v).
\end{align*}
\end{proof}

Here we collect all the needed notations that will appear in the remainder of this section. Let $p_1(t,a,\mathbf{w},z):=P(X \leq t, \Delta=1, A=a, W \in \mathbf{w}, Z=z)$, $p_2(t,a,\mathbf{w},z):=P(X \geq t, A=a, W \in \mathbf{w}, Z=z)$, and 
\begin{align} \label{eq:IF_Lambda}
  &f_{\Lambda}(u|t',a',\mathbf{w}',z') :=
  \int_0^{t'} \bigg[\frac{1(x \in ds, \delta=1, a=a', w \in \mathbf{w}', z=z')}{p_2(s,a',\mathbf{w}',z')}\\
  &\hspace{5cm} - 1(x \geq s, a=a', w \in \mathbf{w}', z=z') \frac{p_1(ds,a',\mathbf{w}',z')}{[p_2(s,a',\mathbf{w}',z')]^2}\bigg]. \nonumber
\end{align}
For $a,a' \in [k]$ for which $a \not= a'$, we also define
\begin{align} \label{eq:notations}
  \sigma^2_{\mathbf{w}_{a}, 0}(t,t^{\prime}|z) &:= E[f_{\Lambda}(U|t,0,\mathbf{w}_{a},z)f_{\Lambda}(U|t^{\prime},0,\mathbf{w}_{a},z)], \nonumber \\
  \sigma^2_{\mathbf{w}_{a},a}(t,t^{\prime}|z) &:=E[f_{\Lambda}(U|t,a,\mathbf{w}_{a},z)f_{\Lambda}(U|t^{\prime},a,\mathbf{w}_{a},z)], \nonumber \\
  \vartheta_{\mathbf{w}_{a}\mathbf{w}_{a'},0}(t,t^{\prime}|z) &:= E[f_{\Lambda}(U|t,0,\mathbf{w}_{a},z)f_{\Lambda}(U|t^{\prime},0,\mathbf{w}_{a'},z)].
\end{align}
Note that $\sigma^2_{\mathbf{w}_{a}, 0}(t,t^{\prime}|z) = \sigma^2_{\mathbf{w}_{a}, 0}(t^{\prime},t|z)$, but $\vartheta_{\mathbf{w}_{a}\mathbf{w}_{a'}, 0}(t,t^{\prime}|z) \not= \vartheta_{\mathbf{w}_{a}\mathbf{w}_{a'},0}(t^{\prime},t|z)$ for $t \not= t^{\prime}$.

Throughout we use $\otimes$ to denote the Kronecker product and $\mathcal{G}_k(\bs{0}_k,\sigma^2)$ to denote a $k$-variate mean-zero Gaussian process in some space $\ell^\infty(\mathcal{B})$ with covariance matrix function $\sigma^2 : \mathcal{B} \times \mathcal{B} \rightarrow \mathbb{R}^{k \times k}$, where the value of $\mathcal{B}$ will be clear from context. We also let $\mathcal{G}=\mathcal{G}_1$. We also recall that, for a given covariate value $z$, we let $v=g(z)$ denote a coarsening of $z$.

\subsection{Proofs for Section \ref{sec:inf_asymp_RR}}
\begin{lemma} \label{lemma:joint_distribution_CHFs}
Fix $z\in\mathcal{Z}$ and suppose that \ref{assump:censoring} holds. For all $t \in \mathcal{T}$,
\begin{align*}
  &n^{1/2}\big(\Lambda_n(t|0,\mathbf{w}_{a},z)-\Lambda(t|0,\mathbf{w}_{a},z)\big)_{a \in [k]}
  = \Big(n^{-1/2}\sum_{i=1}^nf_{\Lambda}(U_i|t,0,\mathbf{w}_{a},z)\,
    \Big)_{a \in [k]} + o_p(1), \nonumber \\
  &n^{1/2}\big(\Lambda_n(t|a,\mathbf{w}_{a},z)-\Lambda(t|a,\mathbf{w}_{a},z)\big)_{a \in [k]} =
   \Big(n^{-1/2}\sum_{i=1}^nf_{\Lambda}(U_i|t,a,\mathbf{w}_{a},z)
   \Big)_{a \in [k]} + o_p(1).
\end{align*}
In addition, the processes $\{n^{1/2}(\Lambda_n(t|0,\mathbf{w}_{a},z)-\Lambda(t|0,\mathbf{w}_{a},z))_{a \in [k]} : t \in \mathcal{T}\}$ and $\{n^{1/2}(\Lambda_n(t|a,\mathbf{w}_{a},z)-\Lambda(t|a,\mathbf{w}_{a},z))_{a \in [k]} : t \in \mathcal{T}\}$ weakly converge to a $k$-variate Gaussian processes with mean zero and covariance functions
\begin{align*}
  \bs{\Sigma}_{\Lambda,0}(t,t^{\prime}|z) & = E\big[(f_{\Lambda}(U|t,0,\mathbf{w}_{a},z))_{a \in [k]} \otimes (f_{\Lambda}(U|t',0,\mathbf{w}_{a},z))^\top_{a \in [k]}\big]\ \textnormal{ and }\  \\
  \bs{\Sigma}_{\Lambda}(t,t^{\prime}|z) & = E\big[(f_{\Lambda}(U|t,a,\mathbf{w}_{a},z))_{a \in [k]} \otimes (f_{\Lambda}(U|t',a,\mathbf{w}_{a},z))^\top_{a \in [k]}\big],
\end{align*}
respectively. 
\end{lemma}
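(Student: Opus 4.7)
The plan is to establish the influence-function representation for each estimator $\Lambda_n(t|a',\mathbf{w}_a,z)$ separately via a Hadamard-derivative expansion of the Nelson--Aalen functional, then stack these representations across $a\in[k]$ to obtain the joint asymptotic linearity, and finally upgrade pointwise convergence in $t$ to weak convergence in $\ell^\infty(\mathcal{T})^k$ by Donsker-class empirical process arguments. The conditional-independent-censoring assumption~\ref{assump:censoring} plays the identification role: it ensures that the probability limit of $\Lambda_n(t|a',\mathbf{w},z)$ equals the true cumulative hazard $\Lambda(t|a',\mathbf{w},z)$ of $T$ in each stratum of interest.

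First, at a fixed triple $(a',\mathbf{w},z)$ satisfying $\inf_{s\in\mathcal{T}} p_2(s,a',\mathbf{w},z)>0$, I would carry out the usual rearrangement
\begin{equation*}
  \Lambda_n(t|a',\mathbf{w},z) - \Lambda(t|a',\mathbf{w},z)
   = \int_0^t \frac{d(N_n - p_1)(s;a',\mathbf{w},z)}{p_2(s,a',\mathbf{w},z)} - \int_0^t \frac{Y_n(s;a',\mathbf{w},z) - p_2(s,a',\mathbf{w},z)}{p_2(s,a',\mathbf{w},z)^2}\,dp_1(s,a',\mathbf{w},z) + R_n(t),
\end{equation*}
with $R_n(t) = -\int_0^t (Y_n-p_2)/(Y_n p_2)\,d(N_n-p_1) + \int_0^t (Y_n-p_2)^2/(Y_n p_2^2)\,dp_1$. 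Because the processes $n^{1/2}(N_n - p_1)$ and $n^{1/2}(Y_n - p_2)$ are each $O_p(1)$ uniformly in $s\in\mathcal{T}$ --- the underlying classes of indicators are monotone in $s$, hence VC-subgraph and $P$-Donsker --- a routine product-order argument gives $R_n(t) = o_p(n^{-1/2})$ uniformly in $t\in\mathcal{T}$. The two leading terms sum to $n^{-1}\sum_{i=1}^n f_\Lambda(U_i|t,a',\mathbf{w},z)$ for $f_\Lambda$ as in \eqref{eq:IF_Lambda}.

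Second, to obtain the joint statement over $a\in[k]$, I would apply this expansion with $(a',\mathbf{w})=(0,\mathbf{w}_a)$ for each $a\in[k]$, yielding the first display of the lemma, and with $(a',\mathbf{w})=(a,\mathbf{w}_a)$ for each $a\in[k]$, yielding the second; stacking the influence functions into a $k$-vector and applying the multivariate CLT delivers the finite-dimensional Gaussian limits. To upgrade to weak convergence in $\ell^\infty(\mathcal{T})^k$, I would appeal to the functional delta method applied to the Nelson--Aalen map $(N,Y)\mapsto \int Y^{-1}\,dN$, which is Hadamard-differentiable at $(p_1,p_2)$ tangentially to continuous functions (e.g., van der Vaart \& Wellner, 1996, Lemma~3.9.17), composed with the joint weak convergence of $(n^{1/2}(N_n-p_1),\,n^{1/2}(Y_n-p_2))$ to a tight Gaussian limit guaranteed by the Donsker property of the relevant index classes. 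The limiting covariance is then $E\bigl[(f_\Lambda(U|t,\cdot,\mathbf{w}_a,z))_{a\in[k]}\otimes(f_\Lambda(U|t',\cdot,\mathbf{w}_a,z))^\top_{a\in[k]}\bigr]$, which is precisely the Kronecker-product form of $\bs{\Sigma}_{\Lambda,0}(t,t'|z)$ and $\bs{\Sigma}_{\Lambda}(t,t'|z)$ stated in the lemma.

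The main obstacle is controlling $R_n$ uniformly in $t\in\mathcal{T}$; this reduces to the positivity requirement $\inf_{s\in\mathcal{T}} p_2(s,a',\mathbf{w},z)>0$, a standard regularity condition that is already implicit in the positive-cumulative-incidence assumption in force throughout the paper. The remainder of the argument is essentially bookkeeping: since $\{A_i=a\}$ and $\{A_i=a'\}$ are disjoint whenever $a\ne a'$, we have $f_\Lambda(U|\cdot,a,\mathbf{w}_a,z)\,f_\Lambda(U|\cdot,a',\mathbf{w}_{a'},z) = 0$ almost surely and the intervention-arm covariance $\bs{\Sigma}_{\Lambda}$ is block-diagonal, whereas the nonempty intersections $\mathbf{w}_a\cap\mathbf{w}_{a'}$ produce the nonzero off-diagonal entries of $\bs{\Sigma}_{\Lambda,0}$ that embody the shared-control dependence highlighted in Section~\ref{sec:asymp_dist_RR}.
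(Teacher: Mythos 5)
Your proposal is correct and follows essentially the same route as the paper: both arguments linearize the Nelson--Aalen functional around $(p_1,p_2)$ to identify the influence function $f_{\Lambda}$ (the paper invokes Hadamard differentiability and the functional delta method, while you additionally write out the remainder $R_n$ explicitly and control it directly, which is the same argument unpacked), then stack across $a\in[k]$ and use a Donsker-class argument to obtain weak convergence of the $k$-variate process. The only minor imprecision is attributing the requirement $\inf_{s\in\mathcal{T}}p_2(s,a',\mathbf{w},z)>0$ to the positive-cumulative-incidence assumption, which concerns $1-S>0$ rather than the at-risk probability; but this positivity is a standard regularity condition that the paper likewise takes for granted.
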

Later we will use that, due to the independence of the participants on all active intervention arms and the fact that independence implies uncorrelatedness, $\bs{\Sigma}_{\Lambda}(t,t^{\prime}|z)$ reduces to ${\rm Diag}\big((\sigma^2_{\mathbf{w}_a, a}(t,t^{\prime}|z))_{a \in [k]}\big)$ for time $t$ and $t^{\prime}$, with
$\sigma^2_{\mathbf{w}_{a},a}(t,t^{\prime}|z)$ defined in \eqref{eq:notations}.
\begin{proof}[Proof of Lemma~\ref{lemma:joint_distribution_CHFs}]
Below we present the proof for the asymptotic results of the estimated control arm stratified CHFs, and the analogous arguments apply for the proof of that of the estimated intervention-specific stratified CHFs.
Recall that $N_n(t;0,\mathbf{w},z) = n^{-1}\sum_{i=1}^n1(X_i \leq t, \Delta_i=1, A_i=0, W_i \in \mathbf{w}, Z_i=z),$ and $Y_n(t;0,\mathbf{w},z) = n^{-1}\sum_{i=1}^n1(X_i \geq t, A_i=0, W_i \in \mathbf{w}, Z_i=z)$.
Noting that
\begin{align*}
  &p_1(t,0,\mathbf{w},z) = P(X \leq t, \Delta=1, A=0, W \in \mathbf{w}, Z=z); \nonumber \\
  &p_2(t,0,\mathbf{w},z) = P(X \geq t, A=0, W \in \mathbf{w}, Z=z),
\end{align*}
we readily see that $E[N_n(t; 0,\mathbf{w},z)] = p_1(t,0,\mathbf{w},z)$ and
$E[Y_n(t; 0,\mathbf{w},z)] = p_2(t,0,\mathbf{w},z)$.
Let $U_i = (X_i, \Delta_i, A_i, W_i, Z_i, V_i)$ denote the $i$-th replication of $U = (X, \Delta, A, W, Z, V)$. We have that
\begin{align}\label{eq:asymp_linear_counting_process}
  \big(N_n(t; 0, \mathbf{w}_{a}, z)-E[N_n(t; 0, \mathbf{w}_{a}, z)]\big)_{a \in [k]}
  &=\Big(n^{-1}\sum_{i=1}^nf_{N}(U_i|t,0,\mathbf{w}_{a},z)\Big)_{a \in [k]}, \nonumber \\
  \big(Y_n(t; 0, \mathbf{w}_{a}, z)-E[Y_n(t; 0, \mathbf{w}_{a}, z)]\big)_{a \in [k]}
  &=\Big(n^{-1}\sum_{i=1}^nf_{Y}(U_i|t,0,\mathbf{w}_{a},z)\Big)_{a \in [k]},
\end{align}
where for $u = (x,\delta,a,w,z,v)$ and $(t',\mathbf{w}',z') \in \mathcal{T} \times \mathbf{W} \times \mathcal{Z}$,
\begin{align*}
  &f_{N}(\cdot|t',0,\mathbf{w}',z') \colon u \mapsto 1(x \leq t', \delta=1, a=0, w \in \mathbf{w}', z=z')-p_1(t',0,\mathbf{w}',z'),\\
  &f_{Y}(\cdot|t',0,\mathbf{w}',z') \colon u \mapsto 1(x \geq t', a=0, w \in \mathbf{w}', z=z')-p_2(t',0,\mathbf{w}',z').
\end{align*}

From the definition of $\Lambda_n(\cdot|0,\mathbf{w},z)$ for any given window set $\mathbf{w}$ and the asymptotic linearity in \eqref{eq:asymp_linear_counting_process}, 
together with the Hadamard differentiability of $\Lambda$, applying $\alpha_n\beta_n-\alpha\beta = (\alpha_n-\alpha)\beta + (\beta_n-\beta)\alpha_n$ and
the functional delta method implies the following asymptotic linear approximation:
\begin{align*}
  \big(\Lambda_n(t|0,\mathbf{w}_{a}, z)-\Lambda(t|0,\mathbf{w}_{a}, z)\big)_{a \in [k]}
  =\Big(n^{-1}\sum_{i=1}^nf_{\Lambda}(U_i|t,0,\mathbf{w}_{a},z)\Big)_{a \in [k]} + o_p(1/n^{1/2}),       
\end{align*}
where $f_{\Lambda}$ has been defined in \eqref{eq:IF_Lambda} and the $o_p(n^{-1/2})$ term above converges to zero in probability uniformly over $t\in\mathcal{T}$.
Then we can define a class $\mathcal{F}=\{f_{\Lambda}(\cdot|t',a',\mathbf{w}',z') : (t',a',\mathbf{w}',z') \in \mathcal{T} \times \mathcal{A} \times \mathbf{W} \times \mathcal{Z}\}$. By Example 19.11 in \cite{Vaart2000}, $\mathcal{F}$ is a $P$-Donsker class. Therefore Donsker's theorem implies that
\begin{align*}
  \left\{n^{1/2}\big(\Lambda_n(t|0,\mathbf{w}_{a},z)-\Lambda(t|0,\mathbf{w}_{a},z)\big)_{a \in [k]} : t\in\mathcal{T}\right\}
  \rightsquigarrow \mathcal{G}_k(\bs{0}_k, \bs{\Sigma}_{\Lambda, 0}(t,t^{\prime}|z)).
\end{align*}
\end{proof}

Based on the developed asymptotic distribution of the estimated control arm stratified CHFs and that of the estimated intervention-specific stratified CHFs in Lemma \ref{lemma:joint_distribution_CHFs},
below we further establish the limiting distribution of the estimated stratified cumulative hazard ratios of the active interventions to their corresponding shared control. In this result, we let $(t,t^{\prime})\mapsto \bs{\Sigma}_{\Lambda R}(t,t^{\prime}|z)$ denote a covariance function that returns a $k \times k$ covariance matrix with $a$-th diagonal element 
\[
  \frac{1}{\Lambda(t|0,\mathbf{w}_{a},z)\Lambda(t^{\prime}|0,\mathbf{w}_{a},z)}\sigma^2_{\mathbf{w}_{a},a}(t,t^{\prime}|z) + 
  \frac{\Lambda(t|a,\mathbf{w}_{a},z)\Lambda(t^{\prime}|a,\mathbf{w}_{a},z)}{\Lambda^2(t|0,\mathbf{w}_{a},z)\Lambda^2(t^{\prime}|0,\mathbf{w}_{a},z)}\sigma^2_{\mathbf{w}_{a},0}(t,t^{\prime}|z),
\]
and off-diagonal element at the $(a,a^{\prime})$-th entry
\[ 
 \frac{\Lambda(t|a,\mathbf{w}_{a},z)\Lambda(t^{\prime}|a',\mathbf{w}_{a'},z)}{\Lambda^2(t|0,\mathbf{w}_{a},z)\Lambda^2(t^{\prime}|0,\mathbf{w}_{a'},z)}\vartheta_{\mathbf{w}_{a}\mathbf{w}_{a'}, 0}(t, t^{\prime}|z).
\]
\begin{lemma} \label{lemma:joint_distribution_CHRs}
  Fix $z\in\mathcal{Z}$. Under \ref{assump:censoring},
  \begin{align*}
    &n^{1/2}\big(\Lambda_n(t|a,\mathbf{w}_{a},z)/\Lambda_n(t|0,\mathbf{w}_{a},z)-\Lambda(t|a,\mathbf{w}_{a},z)/\Lambda(t|0,\mathbf{w}_{a},z)
    \big)_{a \in [k]}
  \end{align*} weakly converges to
  a $k$-variate Gaussian process with mean zero and covariance function $\bs{\Sigma}_{\Lambda R}(\cdot,\cdot|z)$.
\end{lemma}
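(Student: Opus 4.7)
The plan is to reduce this statement to Lemma~\ref{lemma:joint_distribution_CHFs} via the functional delta method applied to the map $(\lambda_a,\lambda_0)\mapsto \lambda_a/\lambda_0$, and then to read off the covariance using the mutual exclusivity of the arm indicators built into $f_{\Lambda}$. First I would stack the $2k$-dimensional process
\[
  \Bigl(n^{1/2}\bigl(\Lambda_n(t|a,\mathbf{w}_{a},z)-\Lambda(t|a,\mathbf{w}_{a},z),\ \Lambda_n(t|0,\mathbf{w}_{a},z)-\Lambda(t|0,\mathbf{w}_{a},z)\bigr)\Bigr)_{a\in[k]},
\]
which, by Lemma~\ref{lemma:joint_distribution_CHFs} together with the joint asymptotic linearity of the numerators and denominators (both are written as empirical averages of the same $f_{\Lambda}$-class, so joint Donsker-ness follows from the same $P$-Donsker argument used there), converges weakly in $\ell^{\infty}(\mathcal{T})^{2k}$ to a mean-zero Gaussian process whose covariance we can read off from $E[f_{\Lambda}(U|\cdot)f_{\Lambda}(U|\cdot)]$.

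Next I would apply the Hadamard-differentiable map $\Phi\colon (\lambda_a,\lambda_0)_{a\in[k]}\mapsto (\lambda_a/\lambda_0)_{a\in[k]}$, which is valid since each $\Lambda(t|0,\mathbf{w}_a,z)$ is bounded away from $0$ on $\mathcal{T}$ (this is implicit in the positivity condition that makes $Q$ well-defined). The functional delta method then gives the asymptotic linear expansion
\[
  n^{1/2}\Bigl[\tfrac{\Lambda_n(t|a,\mathbf{w}_a,z)}{\Lambda_n(t|0,\mathbf{w}_a,z)}-\tfrac{\Lambda(t|a,\mathbf{w}_a,z)}{\Lambda(t|0,\mathbf{w}_a,z)}\Bigr]_{a\in[k]} = \Bigl(n^{-1/2}\sum_{i=1}^n f_R(U_i|t,a,z)\Bigr)_{a\in[k]} + o_p(1)
\]
uniformly in $t\in\mathcal{T}$, where
\[
  f_R(u|t,a,z) := \frac{f_{\Lambda}(u|t,a,\mathbf{w}_a,z)}{\Lambda(t|0,\mathbf{w}_a,z)} - \frac{\Lambda(t|a,\mathbf{w}_a,z)}{\Lambda^2(t|0,\mathbf{w}_a,z)}\,f_{\Lambda}(u|t,0,\mathbf{w}_a,z).
\]
Weak convergence of the resulting $k$-variate process to a centered Gaussian process then follows, with covariance function $E[f_R(U|t,a,z)f_R(U|t',a',z)]_{a,a'\in[k]}$.

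It remains to check that this covariance function coincides with $\bs{\Sigma}_{\Lambda R}(t,t'|z)$ as defined before the lemma. The key simplifying observation is that the indicators appearing in $f_{\Lambda}(u|\cdot,a,\mathbf{w},z)$ force the sample to lie on arm $a$, so for $a\neq a'$ (or for any pairing of a control influence function with an intervention influence function on the same individual) the pointwise products vanish $P$-almost surely. Concretely, for each individual at most one of $\mathbf{1}(A=a,\cdot)$ and $\mathbf{1}(A=a',\cdot)$ can be nonzero, and similarly $\mathbf{1}(A=a,\cdot)\cdot \mathbf{1}(A=0,\cdot)\equiv 0$. Expanding $E[f_R(U|t,a,z)f_R(U|t',a',z)]$ into four terms and discarding those that vanish by this mutual exclusivity leaves, on the diagonal $a=a'$, only the two ``intervention-intervention'' and ``control-control'' cross-terms, which reproduce the stated diagonal entry using the definitions in \eqref{eq:notations}; on the off-diagonal $a\neq a'$, only the ``control-control'' term survives, which reproduces the off-diagonal entry $\Lambda(t|a,\mathbf{w}_a,z)\Lambda(t'|a',\mathbf{w}_{a'},z)/[\Lambda^2(t|0,\mathbf{w}_a,z)\Lambda^2(t'|0,\mathbf{w}_{a'},z)]\,\vartheta_{\mathbf{w}_a\mathbf{w}_{a'},0}(t,t'|z)$.

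The main (minor) obstacle is simply the bookkeeping in the last step: one must keep track of which of the four cross-products of $f_{\Lambda}$-terms survive the arm-exclusivity reduction and verify that the surviving terms match the stated formula exactly. The delta-method and Donsker steps themselves are standard given Lemma~\ref{lemma:joint_distribution_CHFs}.
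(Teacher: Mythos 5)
Your proposal is correct and follows essentially the same route as the paper: the paper likewise applies the functional delta method to the ratio map on top of Lemma~\ref{lemma:joint_distribution_CHFs}, obtains exactly the influence function you call $f_R$ (see the expansion in \eqref{eq:asymp_linear_hazard_ratios}), and reads off $\bs{\Sigma}_{\Lambda R}$ from the arm-exclusivity of the indicators in $f_{\Lambda}$. Your explicit remarks on the Donsker stacking, the positivity of $\Lambda(\cdot|0,\mathbf{w}_a,z)$, and the term-by-term covariance bookkeeping are all consistent with (and slightly more detailed than) the paper's argument.
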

\begin{proof}[Proof of Lemma~\ref{lemma:joint_distribution_CHRs}]
Based on the asymptotic linearity of $\Lambda_n-\Lambda$ in Lemma \ref{lemma:joint_distribution_CHFs}, the functional delta method gives that
\begin{align}\label{eq:asymp_linear_hazard_ratios}
  &\bigg(\frac{\Lambda_n(t|a,\mathbf{w}_{a},z)}{\Lambda_n(t|0,\mathbf{w}_{a},z)}-\frac{\Lambda(t|a,\mathbf{w}_{a},z)}{\Lambda(t|0,\mathbf{w}_{a},z)}\bigg)_{a \in [k]} \\
  &=\bigg(n^{-1}\sum_{i=1}^n\bigg[\frac{1}{\Lambda(t|0,\mathbf{w}_{a},z)}f_{\Lambda}(U_i|t,a,\mathbf{w}_{a},z)
  -\frac{\Lambda(t|a,\mathbf{w}_{a},z)}{\Lambda^2(t|0,\mathbf{w}_{a},z)}f_{\Lambda}(U_i|t,0,\mathbf{w}_{a},z)\bigg] \bigg)_{a \in [k]}
  + o_p(n^{-1/2}),   \nonumber
\end{align}
where the $o_p(n^{-1/2})$ term above converges to zero in probability uniformly over $t \in \mathcal{T}$. 
Let $\sigma^2_{\mathbf{w}_{a},0}(t,t^{\prime}|z)$ and $\vartheta_{\mathbf{w}_{a}\mathbf{w}_{a'}, 0}(t, t^{\prime}|z)$ be as defined in \eqref{eq:notations}, for $a,a'\in [k]$ and $a \not= a'$.
Therefore Slutsky's lemma, along with Lemma \ref{lemma:joint_distribution_CHFs}, implies that
\begin{align*}
  \bigg\{\bigg(\frac{\Lambda_n(t|a,\mathbf{w}_{a},z)}{\Lambda_n(t|0,\mathbf{w}_{a},z)}-\frac{\Lambda(t|a,\mathbf{w}_{a},z)}{\Lambda(t|0,\mathbf{w}_{a},z)}\bigg)_{a \in [k]}: t \in \mathcal{T}\bigg\} \rightsquigarrow \mathcal{G}_k(\bs{0}_k, \bs{\Sigma}_{\Lambda, R}(t,t^{\prime}|z)).
\end{align*}
\end{proof}

The next lemma shows that $S_n(t|0,\mathbf{w}_{a},z)$ is asymptotically linear with influence function $\xi$, where $\xi$ is as defined in \eqref{eq:IF_suvfunc}.
\begin{lemma} \label{lemma:asymp_surv_functions}
  Fix $z\in\mathcal{Z}$ and suppose that \ref{assump:censoring} holds. For any $t\in\mathcal{T}$,
  \begin{align*}
   &n^{1/2}\big(S_n(t|0,\mathbf{w}_{a},z)-S(t|0,\mathbf{w}_{a},z)\big)_{a \in [k]} = n^{-1/2}\Big(\sum_{i=1}^n\xi(U_i|t,0,\mathbf{w}_{a},z) \Big)_{a \in [k]} + o_p(1), \\
   &n^{1/2}\big(S_n(t|a,\mathbf{w}_{a},z)-S(t|a,\mathbf{w}_{a},z)\big)_{a\in [k]}
   =n^{-1/2}\Big(\sum_{i=1}^n\xi(U_i|t,a,\mathbf{w}_{a},z) \Big)_{a \in [k]} + o_p(1).
\end{align*}
Moreover, $\{n^{1/2}\big(S_n(t|0,\mathbf{w}_{a},z)-S(t|0,\mathbf{w}_{a},z)\big)_{a\in [k]} : t \in \mathcal{T}\}$ and $\{n^{1/2}\big(S_n(t|a,\mathbf{w}_{a},z)-S(t|a,\mathbf{w}_{a},z)\big)_{a\in [k]} : t \in \mathcal{T}\}$ weakly converge to $k$-variate Gaussian processes with mean zero and covariance functions
\begin{align*}
  \bs{\Sigma}_{S_0}(t,t^{\prime}|z) = E\big[(\xi(U|t,0,\mathbf{w}_{a},z))_{a \in [k]} \otimes (\xi(U|t',0,\mathbf{w}_{a},z))^\top_{a \in [k]}\big]\ \textnormal{ and }\ \\
  \bs{\Sigma}_{S}(t,t^{\prime}|z) = E\big[(\xi(U|t,a,\mathbf{w}_{a},z))_{a \in [k]} \otimes (\xi(U|t',a,\mathbf{w}_{a},z))^\top_{a \in [k]}\big],
\end{align*}
respectively.
\end{lemma}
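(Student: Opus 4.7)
The plan is to derive Lemma~\ref{lemma:asymp_surv_functions} directly from Lemma~\ref{lemma:joint_distribution_CHFs} by applying the functional delta method to the map $\Lambda \mapsto \exp(-\Lambda)$. Since $S_n(t|a,\mathbf{w}_a,z) = \exp[-\Lambda_n(t|a,\mathbf{w}_a,z)]$ and $S(t|a,\mathbf{w}_a,z) = \exp[-\Lambda(t|a,\mathbf{w}_a,z)]$, the pointwise linearization will follow immediately once the influence function of the composition has been identified. Concretely, the exponential map $\phi \colon D \mapsto \exp(-D)$ is Hadamard differentiable with respect to the supremum norm, uniformly on bounded subsets of $\ell^\infty(\mathcal{T})$, with derivative $\phi_D^{\prime}(h)(t) = -\exp[-D(t)] h(t)$; on compact intervals on which $\Lambda(\,\cdot\,|a,\mathbf{w}_a,z)$ is bounded, this derivative is continuous in $D$.

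First, I would invoke Lemma~\ref{lemma:joint_distribution_CHFs} to get, uniformly in $t\in\mathcal{T}$, the asymptotic linear expansion of $\Lambda_n(\,\cdot\,|a,\mathbf{w}_a,z)-\Lambda(\,\cdot\,|a,\mathbf{w}_a,z)$ (and likewise for the control-arm version) with influence function $f_{\Lambda}$. Second, I would apply the functional delta method coordinatewise to each entry $a\in[k]$. Chasing the computation yields, uniformly in $t\in\mathcal{T}$,
\begin{align*}
n^{1/2}\big(S_n(t|a,\mathbf{w}_a,z)-S(t|a,\mathbf{w}_a,z)\big)
 &= -S(t|a,\mathbf{w}_a,z)\cdot n^{1/2}\big(\Lambda_n(t|a,\mathbf{w}_a,z)-\Lambda(t|a,\mathbf{w}_a,z)\big) + o_p(1) \\
 &= n^{-1/2}\sum_{i=1}^n \bigl[-S(t|a,\mathbf{w}_a,z)\,f_{\Lambda}(U_i|t,a,\mathbf{w}_a,z)\bigr] + o_p(1).
\end{align*}
Comparing the bracketed term with the definition of $\xi$ in \eqref{eq:IF_suvfunc}, one sees that $-S(t|a,\mathbf{w}_a,z)\,f_{\Lambda}(u|t,a,\mathbf{w}_a,z) = \xi(u|t,a,\mathbf{w}_a,z)$, which delivers the stated asymptotic linear expansions for both the control-arm and active-arm survival functions.

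For the weak convergence claim, I would combine the weak convergence of the processes $\{n^{1/2}(\Lambda_n(t|\cdot,\mathbf{w}_\cdot,z)-\Lambda(t|\cdot,\mathbf{w}_\cdot,z)):t\in\mathcal{T}\}$ from Lemma~\ref{lemma:joint_distribution_CHFs} with the functional delta method for Hadamard differentiable maps (e.g.\ Theorem 20.8 in \citet{Vaart2000}) applied componentwise. This yields weak convergence of the $k$-variate processes to the Gaussian processes with the claimed covariance kernels, since for mean-zero influence functions the Kronecker-product covariance formulas $\bs{\Sigma}_{S_0}$ and $\bs{\Sigma}_S$ are just the covariances of the limiting linearized terms. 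Alternatively, one can note that the finite collection of influence functions $\{\xi(\,\cdot\,|t,a,\mathbf{w}_a,z):t\in\mathcal{T},\,a\in[k]\}$ is $P$-Donsker (since it is a bounded rescaling of the $P$-Donsker class exhibited in the proof of Lemma~\ref{lemma:joint_distribution_CHFs}), so weak convergence follows directly from Donsker's theorem together with the asymptotic linearization.

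The only nontrivial point is making sure the Hadamard differentiability and the $o_p(1)$ remainder are uniform in $t$; this is standard since $\Lambda(\,\cdot\,|a,\mathbf{w}_a,z)$ is bounded on $\mathcal{T}$ under the positive-cumulative-incidence assumption and the exponential map is Lipschitz on bounded sets under the uniform norm. No new ingredient beyond Lemma~\ref{lemma:joint_distribution_CHFs} and the delta method is needed.
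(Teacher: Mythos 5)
Your proposal is correct and follows essentially the same route as the paper: the paper's proof likewise obtains the result by applying the functional delta method for the map $\Lambda\mapsto\exp(-\Lambda)$ to Lemma~\ref{lemma:joint_distribution_CHFs}, which yields the influence function $\xi(\,\cdot\,|t,a,\mathbf{w}_a,z)=-S(t|a,\mathbf{w}_a,z)f_{\Lambda}(\,\cdot\,|t,a,\mathbf{w}_a,z)$ and the stated covariance kernels. Your additional remarks on the Hadamard differentiability of the exponential map, the uniformity of the remainder, and the Donsker property of the rescaled influence-function class merely flesh out details the paper leaves implicit.
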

\begin{proof}[Proof of Lemma~\ref{lemma:asymp_surv_functions}]
Applying the functional delta method to the results in Lemma \ref{lemma:joint_distribution_CHFs} yields the first display in the statement,
where, with $\sigma^2_{\mathbf{w}_{a},0}(t,t^{\prime}|z)$ and $\vartheta_{\mathbf{w}_{a}\mathbf{w}_{a'}, 0}(t, t^{\prime}|z)$ as defined in \eqref{eq:notations}, for $a,a' \in [k]$ and $a \not= a'$,
$\bs{\Sigma}_{S_0}(t,t^{\prime}|z)$ has
the $a$-th diagonal element $S(t|0,\mathbf{w}_{a},z)S(t^{\prime}|0,\mathbf{w}_{a},z)\sigma^2_{\mathbf{w}_{a},0}(t,t^{\prime}|z)$, and the off-diagonal element at the $(a, a^{\prime})$-th entry
$S(t|0,\mathbf{w}_{a},z)S(t^{\prime}|0,\mathbf{w}_{a'},z)\vartheta_{\mathbf{w}_{a}\mathbf{w}_{a'},0}(t,t^{\prime}|z)$. Similarly, the second display holds,
where $\bs{\Sigma}_{S}(t,t^{\prime}|z)$ has zero off-diagonal elements and
for $a\in [k]$, the $a$-th diagonal element $S(t|a,\mathbf{w}_{a},z)S(t^{\prime}|a,\mathbf{w}_{a}, z)\sigma^2_{\mathbf{w}_{a}, a}(t,t^{\prime}|z)$ with $\sigma^2_{\mathbf{w}_{a},a}(t,t^{\prime}|z)$ as defined in \eqref{eq:notations}.
\end{proof}

Below we introduce relevant notations for the development of the asymptotic linearity of $\gamma_n(t|a,v)$. Let $u=(x, \delta, a, w, z, v)$. 
Using notations $p(z'|\mathbf{w}',v')$, $\xi(\cdot|t',a',\mathbf{w}',z')$ and $h(\cdot|\mathbf{w}',z',v')$ as defined in \eqref{eq:notations_probs}, \eqref{eq:IF_suvfunc} and \eqref{eq:IF_z} for $(t',a',\mathbf{w}',z',v') \in {\cal T} \times \mathcal{A} \times \mathbf{W} \times {\cal Z} \times g({\cal Z})$, we define $Q(t|a',\mathbf{w}', v') = 1/[1-\sum_{z' \in \mathcal{Z}}S(t|a',\mathbf{w}',z')p(z'|\mathbf{w}',v')]$, and
\begin{align}\label{eq:IF_gamma}
  &f_{\gamma}(\cdot|t',a',\mathbf{w}',v') \colon u \mapsto Q(t'|0,\mathbf{w}',v')\sum_{z' \in \mathcal{Z}}\Big\{-p(z'|\mathbf{w}',v')\xi(u|t',a',\mathbf{w}',z')\\
  &\quad + [1-S(t'|a',\mathbf{w}', z')]h(u|\mathbf{w}',z',v')
  -\gamma(t'|a',v')\Big[ -p(z'|\mathbf{w}',v')\xi(u|t',0,\mathbf{w}',z')\nonumber \\
  &\quad + [1-S(t'|0,\mathbf{w}',z')]h(u|\mathbf{w}',z',v') \Big]\Big\}. \nonumber
\end{align}
The theoretical results for the asymptotic linearity of $(\gamma_n(t|a,v))_{a \in [k]}$ and its joint limiting distribution are provided as follows.
\begin{lemma} \label{lemma:joint_distribution_RRs}
Given \ref{assump:indep_AZ_on_W}-\ref{assump:censoring} and the stratum $v$, we establish the asymptotic linearity of the estimated relative risks of the active interventions compared to the control at time $t$, $(\gamma_n(t|a,v))_{a \in [k]} \colon$
\begin{align} \label{eq:asymp_linear_RRs_covariate_adj}
  &n^{1/2}\big(\gamma_n(t|a,v)-\gamma(t|a,v)\big)_{a \in [k]}
  = \Big(n^{-1/2}\sum_{i=1}^n f_{\gamma}(U_i|t,a,\mathbf{w}_{a},v)\Big)_{a \in [k]} + o_p(1).
  \end{align}
Moreover, $\{n^{1/2}\big(\gamma_n(t|a,v)-\gamma(t|a,v)\big)_{a \in [k]} : t \in \mathcal{T}\}$ weakly converges to a $k$-variate Gaussian process with mean zero and covariance function 
\begin{align*}
  \bs{\Sigma}_{\gamma}(t,t^{\prime}|v) = E\big[(f_{\gamma}(U|t,a,\mathbf{w}_{a},v))_{a \in [k]} \otimes (f_{\gamma}(U|t',a,\mathbf{w}_{a},v))^\top_{a \in [k]}\big].
\end{align*}
\end{lemma}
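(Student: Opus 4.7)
The plan is to view $\gamma_n(t|a,v)$ as a smooth functional of the estimators $\{S_n(t|a,\mathbf{w}_a,z)\}_z$, $\{S_n(t|0,\mathbf{w}_a,z)\}_z$, and $\{P_n(z|\mathbf{w}_a,v)\}_z$, whose joint asymptotic linearity is either already available or can be derived directly, and then apply the functional delta method followed by Donsker's theorem.

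First I would establish the asymptotic linearity of $P_n(z|\mathbf{w}_a,v)$ with influence function $h(\,\cdot\,|\mathbf{w}_a,z,v)$ from \eqref{eq:IF_z}. This is a standard empirical-ratio calculation: write $P_n(z|\mathbf{w}_a,v)$ as a ratio of empirical means of bounded indicators, apply the identity $\alpha_n\beta_n - \alpha\beta = (\alpha_n-\alpha)\beta + \alpha_n(\beta_n-\beta)$ and a one-term Taylor expansion of $1/\beta_n$, and identify the first-order term as $n^{-1}\sum_i h(U_i|\mathbf{w}_a,z,v) + o_p(n^{-1/2})$. The underlying class of indicator-plus-constant functions is trivially Donsker, so the approximation is uniform in $(z,v)$.

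Next I would combine this with Lemma \ref{lemma:asymp_surv_functions}, which already gives joint asymptotic linearity (with influence function $\xi$) of both $S_n(t|a,\mathbf{w}_a,z)$ and $S_n(t|0,\mathbf{w}_a,z)$, uniformly in $t\in\mathcal{T}$. By concatenating the influence functions across $z\in\mathcal{Z}$ (finite support, so concatenation is automatic) and across $a\in[k]$, I obtain a joint asymptotically linear expansion, in $\ell^\infty(\mathcal{T})^{k(2|\mathcal{Z}|+|\mathcal{Z}|)}$, for the vector
\[
\bigl(S_n(t|a,\mathbf{w}_a,z),\,S_n(t|0,\mathbf{w}_a,z),\,P_n(z|\mathbf{w}_a,v)\bigr)_{a\in[k],\,z\in\mathcal{Z}}.
\]
Then I would apply the functional delta method to the map
\[
\Phi\colon (s_1,s_0,p)\mapsto \frac{1-\sum_{z\in\mathcal{Z}} s_1(z)p(z)}{1-\sum_{z\in\mathcal{Z}} s_0(z)p(z)},
\]
which is Hadamard differentiable at the true parameters because the denominator is bounded away from $0$ (by the assumption following \eqref{eq:notations_probs} that $Q$ is well-defined). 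The resulting chain-rule expression for the influence function is
\[
Q(t|0,\mathbf{w}_a,v)\Bigl\{-\sum_{z}p(z|\mathbf{w}_a,v)\xi(\,\cdot\,|t,a,\mathbf{w}_a,z)-\sum_z S(t|a,\mathbf{w}_a,z)h(\,\cdot\,|\mathbf{w}_a,z,v)-\gamma(t|a,v)[\text{analogous $a=0$ terms}]\Bigr\},
\]
and since $\sum_z h(\,\cdot\,|\mathbf{w}_a,z,v)\equiv 0$, we may replace $S(t|a',\mathbf{w}_a,z)$ by $-(1-S(t|a',\mathbf{w}_a,z))$ in each $h$ term, which produces exactly the formula $f_\gamma$ in \eqref{eq:IF_gamma}. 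This yields \eqref{eq:asymp_linear_RRs_covariate_adj}.

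Finally, to upgrade pointwise asymptotic linearity to a functional central limit theorem, I would verify that the class $\{f_\gamma(\,\cdot\,|t,a,\mathbf{w}_a,v):t\in\mathcal{T}\}$ is $P$-Donsker: it is a finite sum of products of bounded functions in $(z,v,a,w)$ with the $\xi$ integrals appearing in $f_\Lambda$, and the latter were already shown to lie in a Donsker class in the proof of Lemma \ref{lemma:joint_distribution_CHFs} (Example 19.11 of \citet{Vaart2000}). Stability of the Donsker property under Lipschitz transformations of bounded classes then gives weak convergence in $\ell^\infty(\mathcal{T})^k$ to a mean-zero Gaussian process with the stated covariance $\bs{\Sigma}_\gamma(t,t'|v)$. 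The main obstacle I anticipate is the bookkeeping for the delta-method chain rule (matching the form $f_\gamma$ in \eqref{eq:IF_gamma} term-by-term, using the mean-zero property of $h$), rather than any substantive analytic difficulty; everything else reduces to standard Donsker and delta-method machinery applied to objects already handled in Lemmas \ref{lemma:joint_distribution_CHFs}--\ref{lemma:asymp_surv_functions}.
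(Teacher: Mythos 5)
Your proposal is correct and follows essentially the same route as the paper's proof: establish asymptotic linearity of $P_n(z|\mathbf{w}_a,v)$ with influence function $h$, combine with Lemma~\ref{lemma:asymp_surv_functions} for the survival estimators, and apply the functional delta method plus a Donsker argument to obtain \eqref{eq:asymp_linear_RRs_covariate_adj} and the weak convergence. The only (correct and welcome) extra detail you supply is the explicit use of $\sum_{z}h(\,\cdot\,|\mathbf{w}_a,z,v)\equiv 0$ to reconcile the raw chain-rule output with the $[1-S]$ form of $f_{\gamma}$ in \eqref{eq:IF_gamma}, a step the paper passes over silently.
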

\begin{proof}[Proof of Lemma~\ref{lemma:joint_distribution_RRs}]
To develop the asymptotic distribution of $(\gamma_n(t|a,v))_{a \in [k]}$, we will establish the asymptotic linearity of
\[
  \sum_{z \in \mathcal{Z}}\big\{[1-S_n(t|a,\mathbf{w},z)]P_n(z|\mathbf{w},v)-[1-S(t|a,\mathbf{w},z)]p(z|\mathbf{w},v)\big\},
\]
with the corresponding influence function for arbitrary $(a,\mathbf{w},z) \in {\cal A} \times \mathbf{W} \times \mathcal{Z}$ at given $v$. Applying the functional delta method will then yield the result.

The first step is to show that $P_n(z|\mathbf{w},v)$ is an asymptotically linear estimator of $p(z|\mathbf{w},v)$. Note that $V_i = g(Z_i)$ for some many-to-one function $g$, $i=1,\ldots,n$. 
Let $P_n(z,\mathbf{w},v)$ and $P_n(\mathbf{w},v)$ denote the numerator and the denominator of $P_n(z|\mathbf{w},v)$; namely the empirical estimators of $p(z, \mathbf{w}, v)$ and $p(\mathbf{w}, v)$, respectively. By definition, 
\begin{align*}
  n^{1/2}\big[P_n(z,\mathbf{w},v)-p(z, \mathbf{w}, v)\big] &= n^{-1/2}\sum_{i=1}^n\big[1(Z_i=z, W_i \in \mathbf{w}, V_i=v)-p(z, \mathbf{w}, v)\big],\\
  n^{1/2}\big[P_n(\mathbf{w},v)-p(\mathbf{w}, v)\big] &= n^{-1/2}\sum_{i=1}^n\big[1(W_i \in \mathbf{w}, V_i=v)-p(\mathbf{w},v)\big].
\end{align*}
This gives that $\{n^{1/2}[P_n(z|\mathbf{w},v) - p(z|\mathbf{w},v)] : (z,\mathbf{w}) \in \mathcal{Z} \times \mathbf{W}\}=\{n^{-1/2}\sum_{i=1}^n h(U_i|\mathbf{w},z,v) : (\mathbf{w},z) \in \mathbf{W} \times \mathcal{Z}\}
\rightsquigarrow \mathcal{G}(0, \sigma^2_p)$,
where the covariance function
\begin{align*}
  \sigma^2_p\big((\mathbf{w},z),(\mathbf{w}',z')\big)=E[h(U|\mathbf{w},z,v)h(U|\mathbf{w}',z',v)] 
\end{align*}
with $h$ as defined in \eqref{eq:IF_z}.
Now we derive the influence functions of
\[
  \sum_{z \in \mathcal{Z}}\Big\{\big[1-S_n(t|a,\mathbf{w},z)\big]P_n(z|\mathbf{w},v)-\big[1-S(t|a,\mathbf{w},z)\big]p(z|\mathbf{w},v)\Big\}
\]
based on those of $P_n(z|\mathbf{w},v)$ and $S_n(t|a,\mathbf{w},z)$, for arbitrary $(a,\mathbf{w},z) \in {\cal A} \times \mathbf{W} \times {\cal Z}$.
Combining the above result with Lemma \ref{lemma:asymp_surv_functions}, we have
the below asymptotic linear approximations:
\begin{align*}
    &\left(n^{1/2}\sum_{z\in\mathcal{Z}}\left\{[1-S_n(t|0,\mathbf{w}_a,z)]P_n(z|\mathbf{w}_a,v)-[1-S(t|0,\mathbf{w}_a,z)]p(z|\mathbf{w}_a,v)\right\}\right)_{a\in [k]} \\
    &= \left(n^{-1/2}\sum_{i=1}^n\sum_{z \in \mathcal{Z}}\big\{-p(z|\mathbf{w}_{a},v)\xi(U_i|t,0,\mathbf{w}_{a},z) + \left[1-S(t|0,\mathbf{w}_{a},z)\right]h(U_i|\mathbf{w}_{a},z,v)\big\} \right)_{a \in [k]} + o_p(1)
\end{align*}
and 
\begin{align*}
    &\left(n^{1/2}\sum_{z\in\mathcal{Z}}\left\{[1-S_n(t|a,\mathbf{w}_a,z)]P_n(z|\mathbf{w}_a,v)-[1-S(t|a,\mathbf{w}_a,z)]p(z|\mathbf{w}_a,v)\right\}\right)_{a\in [k]} \\
    &= \left(n^{-1/2}\sum_{i=1}^n\sum_{z \in \mathcal{Z}}\big\{-p(z|\mathbf{w}_{a},v)\xi(U_i|t,a,\mathbf{w}_{a},z)  +\left[1-S(t|a,\mathbf{w}_{a},z)\right]h(U_i|\mathbf{w}_{a},z,v)\big\} \right)_{a \in [k]} + o_p(1).
\end{align*}
Together with the above two displays, applying the delta method and the functional central limit theorem yields the asymptotic normality of $(\gamma_n(t|a,v))_{a \in [k]}$:
\begin{align*} 
  &\Big\{n^{1/2}\big(\gamma_n(t|a,v)-\gamma(t|a,v)\big)_{a \in [k]} : t \in \mathcal{T} \Big\}
  = \Big\{\Big(n^{-1/2}\sum_{i=1}^n f_{\gamma}(U_i|t,a,\mathbf{w}_{a},v)\Big)_{a \in [k]}: t \in \mathcal{T} \Big\}
  + o_p(1)\\
  & \rightsquigarrow \mathcal{G}_k(\bs{0}_k, \bs{\Sigma}_{\gamma}(t,t^{\prime}|v)),
\end{align*}
where $f_{\gamma}$ is as defined in \eqref{eq:IF_gamma} and
$\bs{\Sigma}_{\gamma}(t,t^{\prime}|v) = E\big[(f_{\gamma}(U|t,a,\mathbf{w}_{a},v))_{a \in [k]} \otimes (f_{\gamma}(U|t',a,\mathbf{w}_{a},v))^\top_{a \in [k]}\big]$.
\end{proof}

\subsection{Preliminaries for the proof of Theorem \ref{Thm:improve_efficiency_sharing_control}}
\label{app-sec:appendix_sec_efficiency_gain}

In what follows we let $\eta(t|a,\mathbf{w},z,v) := 1-S(t|a,\mathbf{w},z)-\gamma(t|a,v)[1-S(t|0,\mathbf{w},z)]$, and for active interventions $a_1 \not= a_2$,
\begin{align} \label{eq:covariance_VE_p}
  & \sigma_{a_1,a_2}(t|v)^2 = \sum_{j=1}^2\Theta_j^2Q^2(t|0,\mathbf{w}_{a_j},v)E\bigg\{\bigg[\sum_{z \in \mathcal{Z}}
  \Big[-p(z|\mathbf{w}_{a_j},v)\xi(U|t,a_j,\mathbf{w}_{a_j},z)\\ 
  &\hspace{3cm} +\gamma(t|a_j,v)p(z|\mathbf{w}_{a_j},v)\xi(U|t,0,\mathbf{w}_{a_j},z) + \eta(t|a_j,\mathbf{w}_{a_j},z,v)h(U|\mathbf{w}_{a_j},z,v) \Big]\bigg]^2\bigg\} \nonumber\\
  &\quad + 2\prod_{j=1}^2\Big\{\Theta_jQ(t|0,\mathbf{w}_{a_j},v)\Big\}E\bigg\{\prod_{j=1}^2\bigg[\sum_{z \in \mathcal{Z}}\Big[-p(z|\mathbf{w}_{a_j},v)\xi(U|t,a_j,\mathbf{w}_{a_j},z) \nonumber\\ 
  &\hspace{3cm} +\gamma(t|a_j,v)p(z|\mathbf{w}_{a_j},v)\xi(U|t,0,\mathbf{w}_{a_j},z) +\eta(t|a_j,\mathbf{w}_{a_j},z,v)h(U|\mathbf{w}_{a_j}, z, v) \Big]\bigg]\bigg\}. \nonumber
\end{align}

We begin by developing the asymptotic normality of $\Theta\big(\gamma_{n}(t|a_1,v), \gamma_{n}(t|a_2,v)\big)$, the relative efficacy estimated based on data from a platform trial.
\begin{lemma}\label{lemma:limiting_dist_platform_trials}
Suppose that \ref{assump:indep_AZ_on_W}-\ref{assump:negative_product_firstderivatives} hold. For any given time $t$, arbitrary interventions $a_1 \not= a_2$, $n^{1/2}\big\{\Theta\big(\gamma_{n}(t|a_1,v), \gamma_{n}(t|a_2,v)\big)-\Theta\big(\gamma(t|a_1,v), \gamma(t|a_2,v)\big)\big\}$ converges in distribution to
 $\mathcal{N}(0, \sigma_{a_1,a_2}(t|v)^2)$.
\end{lemma}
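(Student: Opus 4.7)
The plan is to combine Lemma \ref{lemma:joint_distribution_RRs} with a standard delta-method argument applied to the contrast function $\Theta$. First, since $\Theta$ is differentiable by assumption, a first-order Taylor expansion about $(\gamma(t|a_1,v),\gamma(t|a_2,v))$ gives
\begin{align*}
\Theta(\gamma_n(t|a_1,v),\gamma_n(t|a_2,v)) - \Theta(\gamma(t|a_1,v),\gamma(t|a_2,v)) = \sum_{j=1}^2 \Theta_j \left[\gamma_n(t|a_j,v) - \gamma(t|a_j,v)\right] + R_n,
\end{align*}
where $\Theta_j$ denotes the partial derivative of $\Theta$ with respect to its $j$-th argument evaluated at $(\gamma(t|a_1,v),\gamma(t|a_2,v))$ (this is the $\Theta_j$ that already appears in \eqref{eq:covariance_VE_p}), and $R_n$ is a remainder of order $o_p(n^{-1/2})$ by the consistency of $\gamma_n(t|a_j,v)$ for $j=1,2$ that follows from Lemma \ref{lemma:joint_distribution_RRs}.

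Next, I would substitute the asymptotic linear expansion \eqref{eq:asymp_linear_RRs_covariate_adj} for each $\gamma_n(t|a_j,v)$, which produces
\begin{align*}
n^{1/2}\left[\Theta(\gamma_n(t|a_1,v),\gamma_n(t|a_2,v)) - \Theta(\gamma(t|a_1,v),\gamma(t|a_2,v))\right] = n^{-1/2}\sum_{i=1}^n \psi(U_i) + o_p(1),
\end{align*}
with $\psi(u):= \Theta_1 f_\gamma(u|t,a_1,\mathbf{w}_{a_1},v) + \Theta_2 f_\gamma(u|t,a_2,\mathbf{w}_{a_2},v)$. Since $\psi$ is a mean-zero function with finite variance (both properties are inherited from $f_\gamma$), the central limit theorem and Slutsky's lemma yield convergence in distribution to $\mathcal{N}(0, E[\psi(U)^2])$.

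The remaining step, which is purely algebraic, is to verify that $E[\psi(U)^2]$ matches the stated form $\sigma_{a_1,a_2}(t|v)^2$ in \eqref{eq:covariance_VE_p}. Expanding the square produces three terms: the two diagonal expectations $\Theta_j^2 E[f_\gamma(U|t,a_j,\mathbf{w}_{a_j},v)^2]$ for $j=1,2$, which together give the first line of \eqref{eq:covariance_VE_p}, and the cross term $2\Theta_1\Theta_2 E[f_\gamma(U|t,a_1,\mathbf{w}_{a_1},v)f_\gamma(U|t,a_2,\mathbf{w}_{a_2},v)]$, which gives the second. The only bookkeeping required is to rewrite $f_\gamma(u|t,a,\mathbf{w}_a,v)$ from \eqref{eq:IF_gamma} by grouping the two $h$-terms, which collapses to $\eta(t|a,\mathbf{w}_a,z,v)\,h(u|\mathbf{w}_a,z,v)$ using the definition $\eta(t|a,\mathbf{w},z,v) = 1-S(t|a,\mathbf{w},z) - \gamma(t|a,v)[1-S(t|0,\mathbf{w},z)]$; the $\xi$-terms then match those in \eqref{eq:covariance_VE_p} directly.

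There is no real obstacle here: the heavy lifting (asymptotic linearity, weak convergence, identification of the influence function) has already been carried out in Lemmas \ref{lemma:asymp_surv_functions}--\ref{lemma:joint_distribution_RRs}, and the only novelty is the delta-method reduction for $\Theta$, which is immediate given the differentiability assumption. If anything, the mildly tedious part is the careful algebraic verification that the closed-form expression $\sigma_{a_1,a_2}(t|v)^2$ in \eqref{eq:covariance_VE_p} is indeed $E[\psi(U)^2]$; this is a routine rearrangement. For that reason, I would write the proof compactly, invoking Lemma \ref{lemma:joint_distribution_RRs} and the delta method, and only spell out the $\eta$-regrouping in enough detail to make the match with \eqref{eq:covariance_VE_p} transparent.
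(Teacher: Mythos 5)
Your proposal is correct and follows essentially the same route as the paper: the paper's proof also applies the delta method to Lemma~\ref{lemma:joint_distribution_RRs}, identifies the limiting variance as $E\{[\Theta_1 f_{\gamma}(U|t,a_1,\mathbf{w}_{a_1},v)+\Theta_2 f_{\gamma}(U|t,a_2,\mathbf{w}_{a_2},v)]^2\}$, and matches it to \eqref{eq:covariance_VE_p}. Your explicit $\eta$-regrouping of the $h$-terms is exactly the bookkeeping the paper leaves implicit (and you correctly trace the influence function to \eqref{eq:IF_gamma}, whereas the paper's citation of \eqref{eq:IF_logRR} at that point appears to be a slip).
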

\begin{proof}[Proof of Lemma~\ref{lemma:limiting_dist_platform_trials}]
The asymptotic normality is an immediate consequence of the delta method and Lemma \ref{lemma:joint_distribution_RRs}. The limiting variance of $\Theta(\gamma_n(t|a_1), \gamma_n(t|a_2))$ takes the form $$E\{[\Theta_1f_{\gamma}(U|t,a_1,\mathbf{w}_{a_1})+\Theta_2f_{\gamma}(U|t,a_2,\mathbf{w}_{a_2})]^2\},$$
and, by \eqref{eq:IF_logRR}, we have the form of $\sigma_{a_1,a_2}(t|v)^2$ as in \eqref{eq:covariance_VE_p}.
\end{proof}

We now develop the asymptotic normality of $\Theta\big(\gamma_{m}^\dagger(t|a_1,v), \gamma_{m}^\dagger(t|a_2,v)\big)$, the relative efficacy estimated based on data from separate, intervention-specific trials. When doing this, we will use
\ref{assump:identical_distribution} in several ways. First, this condition enables the limiting variance $\sigma^\dagger_{a_1,a_2}(t|v)$ to be expressed in terms of $f_{\varphi}$ as defined in \eqref{eq:IF_logRR}, with the argument $\mathbf{w}'$ equal to $\mathbf{w}_{a_1}$ or $\mathbf{w}_{a_2}$. Second, \ref{assump:identical_distribution} implies that, for all $a$,
\begin{align*}
  &P(T>t|A=a, W \in \mathbf{w}_{a}, Z=z) = \frac{P(T>t, A=a, W \in \mathbf{w}_{a}, Z=z, A\in\{0,a\})}{P(A=a, W \in \mathbf{w}_{a}, Z=z, A\in\{0,a\})}\\
  & = \frac{P(T>t, A=a, Z=z|W \in \mathbf{w}_{a}, A\in\{0,a\})}{P(A=a, Z=z | W \in \mathbf{w}_{a}, A\in\{0,a\})}
  = \frac{P_{a}^\dagger(T_a>t, A_a=a, Z_a=z)}{P_{a}^\dagger(A_a=a, Z_a=z)}\\
  & = P_{a}^\dagger(T_a>t|A_a=a, Z_a=z)
\end{align*}
and $P(T>t|A=0, W \in \mathbf{w}_{a}, Z=z) = P_{a}^\dagger(T_a>t|A_a=0, Z_a=z)$, the above displays ensure the two trials have identical conditional survival functions $S(\cdot|a,\mathbf{w}_{a},z)$ and $S(\cdot|0,\mathbf{w}_{a},z)$, given the covariate subgroup $Z=z$ and the corresponding enrollment windows in $\mathbf{w}_{a}$. Therefore, it implies that the two trials pursue the same estimands.

The following identity, which is implied by \ref{assump:identical_ratios}, will be used later:
\begin{align} \label{eq:identical_ratios_in_prob} 
  \frac{m}{m_a} = \frac{\sum_{a'=1}^k  P(A\in\{0,a'\},W\in\mathbf{w}_{a'})}{P(A\in\{0,a\},W\in\mathbf{w}_a)}.
\end{align} 
To see why the above is implied by \ref{assump:identical_ratios}, note that, under \ref{assump:identical_ratios}, $m_{a'}/m_a=n_{a'}/n_a$ for all pairs $(a,a')$. Hence, $\sum_{a'=1}^k m_{a'}/m_a = \sum_{a'=1}^k n_{a'}/n_a$. Combining this with 
the fact that $n_a := nP(A\in\{0,a\}, W\in\mathbf{w}_a)$, this gives the desired condition.

The upcoming lemma makes use of the a limiting variance $\sigma^\dagger_{a_1,a_2}(t|v)$, which is defined as follows for an arbitrary two active intervention $a_1 \not= a_2$:
\begin{align} \label{eq:covariance_VE_s}
  &\sigma^\dagger_{a_1,a_2}(t|v)^2 = \sum_{j=1}^2 \frac{m}{m_{a_j}}\Theta_j^2Q^2(t|0,\mathbf{w}_{a_j},v)E\bigg\{\bigg[\sum_{z \in \mathcal{Z}}\Big[-p(z|\mathbf{w}_{a_j},v)\xi(U|t,a_j,\mathbf{w}_{a_j},z) \\ 
  &\hspace{3cm} +\gamma(t|a_j,v)p(z|\mathbf{w}_{a_j},v)\xi(U|t,0,\mathbf{w}_{a_j},z) + \eta(t|a_j,\mathbf{w}_{a_j},z,v)h(U|\mathbf{w}_{a_j},z,v) \Big]\bigg]^2\bigg\}. \nonumber
\end{align}
We note that $\sigma^\dagger_{a_1,a_2}(t|v)^2$ does not depend on the sizes of $(m_a)_{a\in [k]}$ of the $k$ separate trials when \eqref{eq:identical_ratios_in_prob} holds since \eqref{eq:identical_ratios_in_prob} implies $m/m_{a_1}$ and $m/m_{a_2}$ are fixed numbers.

\begin{lemma}\label{lemma:limiting_dist_separate_trials}
 Suppose that \ref{assump:indep_AZ_on_W}-\ref{assump:identical_ratios} hold. For any given time $t$, arbitrary active interventions $a_1 \not= a_2$, $m^{1/2}\big\{\Theta\big(\gamma^\dagger_{m}(t|a_1,v), \gamma^\dagger_{m}(t|a_2,v)\big)-\Theta\big(\gamma(t|a_1,v), \gamma(t|a_2,v)\big)\big\}\overset{d}{\rightarrow}\mathcal{N}(0, \sigma^\dagger_{a_1,a_2}(t|v)^2)$.
\end{lemma}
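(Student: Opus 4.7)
The plan is to adapt the argument for Lemma~\ref{lemma:limiting_dist_platform_trials}, exploiting the fact that, unlike in the platform trial, the estimators $\gamma_m^\dagger(t|a_1,v)$ and $\gamma_m^\dagger(t|a_2,v)$ come from disjoint, independently-sampled datasets. First, I would establish, for each fixed active intervention $a\in[k]$, an asymptotic linear expansion at rate $m_a^{1/2}$ for $\gamma_m^\dagger(t|a,v)-\gamma(t|a,v)$, by re-running the chain of arguments in Lemmas~\ref{lemma:joint_distribution_CHFs}, \ref{lemma:asymp_surv_functions}, and \ref{lemma:joint_distribution_RRs} applied to the single separate trial for intervention $a$. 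The inputs are the stratified Nelson--Aalen representation of $S_{m}^\dagger(\,\cdot\,|a,z)$ and $S_{m}^\dagger(\,\cdot\,|0_a,z)$, the empirical estimator $P_{ma}^\dagger(z|v)$, the conditionally independent censoring imposed in the separate trial, and a delta-method/functional-delta-method step.

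Next, I would use \ref{assump:identical_distribution} to align the resulting influence function with the restriction of $f_\gamma$ from \eqref{eq:IF_gamma} (with argument $\mathbf{w}'=\mathbf{w}_{a}$) to the conditional law $P(\,\cdot\,|A\in\{0,a\},W\in\mathbf{w}_a)$. Concretely, \ref{assump:identical_distribution} ensures that $S(t|a,\mathbf{w}_a,z)$, $S(t|0,\mathbf{w}_a,z)$, $\gamma(t|a,v)$, $p(z|\mathbf{w}_a,v)$, and the $p_1,p_2$ factors appearing in $\xi$ coincide (up to the overall mass factor $P(A\in\{0,a\},W\in\mathbf{w}_a)$) with their separate-trial counterparts under $P_a^\dagger$. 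Hence the variance of the separate-trial influence function for $\gamma(t|a,v)$, when evaluated at rate $m_a^{1/2}$, equals
\begin{align*}
Q^2(t|0,\mathbf{w}_{a},v)\,E\!\left\{\!\left[\sum_{z\in\mathcal{Z}}\!\bigl(-p(z|\mathbf{w}_a,v)\xi(U|t,a,\mathbf{w}_a,z) + \gamma(t|a,v)p(z|\mathbf{w}_a,v)\xi(U|t,0,\mathbf{w}_a,z)+\eta(t|a,\mathbf{w}_a,z,v)h(U|\mathbf{w}_a,z,v)\bigr)\right]^{2}\!\right\}\!,
\end{align*}
which is exactly the $a_j$-th summand in \eqref{eq:covariance_VE_s} up to the factor $m/m_{a_j}$.

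Third, I would combine these single-arm expansions with the mutual independence of the $k$ separate trials. This gives joint asymptotic normality of $(m_a^{1/2}(\gamma_m^\dagger(t|a,v)-\gamma(t|a,v)))_{a\in[k]}$ with a diagonal covariance matrix. Rescaling by $(m/m_a)^{1/2}$, and invoking \eqref{eq:identical_ratios_in_prob} (implied by \ref{assump:identical_ratios}) so that each $m/m_a$ is a deterministic constant expressible in terms of the platform trial probabilities, yields joint asymptotic normality of $m^{1/2}(\gamma_m^\dagger(t|a,v)-\gamma(t|a,v))_{a\in[k]}$ with diagonal covariance whose $a$-th entry is $(m/m_a)$ times the variance above. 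A standard delta-method application to $\Theta$ at $(\gamma(t|a_1,v),\gamma(t|a_2,v))$ with derivatives $\Theta_1,\Theta_2$ then yields asymptotic normality of $m^{1/2}\{\Theta(\gamma_m^\dagger(t|a_1,v),\gamma_m^\dagger(t|a_2,v))-\Theta(\gamma(t|a_1,v),\gamma(t|a_2,v))\}$ with variance $\sum_{j=1}^{2}\Theta_j^{2}(m/m_{a_j})\cdot\{\text{arm-}a_j\text{ variance}\}$; the absence of a cross term is forced by the block-diagonal structure, matching \eqref{eq:covariance_VE_s} exactly.

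The main obstacle I anticipate is simply the bookkeeping needed to justify that the influence-function expressions derived natively in the separate-trial model, where the observed tuple lacks a window variable and the control is arm $0_a$, can be rewritten in terms of quantities defined under the platform-trial distribution $P$. This is purely a distributional-equivalence argument powered by \ref{assump:identical_distribution}; once those identifications are made carefully, the remainder of the proof is a routine combination of independence across arms, Slutsky's lemma, and the delta method, with no additional empirical-process machinery beyond what was already used in Lemmas~\ref{lemma:joint_distribution_CHFs}--\ref{lemma:joint_distribution_RRs}.
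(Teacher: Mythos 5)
Your proposal is correct and follows essentially the same route as the paper's proof: arm-specific asymptotic linear expansions obtained by rerunning the arguments of Lemmas~\ref{lemma:joint_distribution_CHFs}--\ref{lemma:joint_distribution_RRs} within each separate trial, identification of the resulting influence functions with $f_\gamma$ via \ref{assump:identical_distribution}, rescaling by $(m/m_a)^{1/2}$ using \eqref{eq:identical_ratios_in_prob}, and a final delta-method step in which independence across trials kills the cross term. In fact you supply more detail than the paper, which omits the arm-specific expansion arguments ``for brevity'' and does not spell out the distributional-equivalence bookkeeping you correctly flag.
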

\begin{proof}[Proof of Lemma~\ref{lemma:limiting_dist_separate_trials}]
Recall that, for $a \in \{a_1,a_2\}$, $P_{ma}^{\dagger}(z|v) = \sum_{i=1}^{m_{a}}1(Z_{ai}=z,V_{ai}=v)/\sum_{i=1}^{m_{a}}1(V_{ai}=v)$ and
\begin{align*}
  \gamma_{m}^\dagger(t|a,v) = \frac{1-\sum_{z \in \mathcal{Z}}S^\dagger_m(t|a,z)P^\dagger_{ma}(z|v)}{1-\sum_{z \in \mathcal{Z}}S^\dagger_m(t|0_a,z)P^\dagger_{ma}(z|v)},
\end{align*}
where $S^\dagger_m(t|a,z)$ and $S^\dagger_m(t|0_a,z)$ are the Kaplan-Meier estimators of $P(T>t)$, respectively, within the strata $(A=a, Z=z)$ and $(A=0_a, Z=z)$ based on data from the separate trial for intervention $a$.
Following similar arguments as were used to establish Lemma~\ref{lemma:joint_distribution_RRs}, it is possible to show that
\begin{align*}
  &\begin{pmatrix}
  m_{a_1}^{1/2}\big[\gamma_{m}^\dagger(t|a_1,v)-\gamma(t|a_1,v)\big]\\
  m_{a_2}^{1/2}\big[\gamma_{m}^\dagger(t|a_2,v)-\gamma(t|a_2,v)\big]
  \end{pmatrix}
  = \begin{pmatrix}
  m_{a_1}^{-1/2}\sum_{i=1}^{m_{a_1}} f_{\gamma}(U_{a_1i}|t,a_1,\mathbf{w}_{a_1},v)\\
  m_{a_2}^{-1/2}\sum_{i=1}^{m_{a_2}} f_{\gamma}(U_{a_2i}|t,a_2,\mathbf{w}_{a_2},v)
  \end{pmatrix}
  + o_p(1), 
\end{align*}
where $f_{\gamma}$ is defined in \eqref{eq:IF_gamma}. These arguments are omitted for brevity. By the delta method and the central limit theorem, the above implies that
\begin{align*}
  &\sqrt{m}\big\{\Theta\big(\gamma_{m}^\dagger(t|a_1,v), \gamma_{m}^\dagger(t|a_2,v)\big)-\Theta\big(\gamma(t|a_1,v), \gamma(t|a_2,v)\big)\big\}\\
  &= \frac{\sqrt{m}}{m_{a_1}}\sum_{i=1}^{m_{a_1}}\Theta_1f_{\gamma}(U_{a_1i}|t,a_1,\mathbf{w}_{a_1},v)
  + \frac{\sqrt{m}}{m_{a_2}}\sum_{i=1}^{m_{a_2}}\Theta_2f_{\gamma}(U_{a_2i}|t,a_2,\mathbf{w}_{a_2},v) + o_p(1)
  \rightsquigarrow \mathcal{N}(0, \sigma^\dagger_{a_1,a_2}(t|v)^2).
\end{align*}
\end{proof}

The following lemma makes use of the function $h$, which is defined in \eqref{eq:IF_z}.
\begin{lemma} \label{lemma:property_h}
For any stratum $v$, covariate levels $z$ and $\tilde{z}$ such that $g(z)=v$ and $g(\tilde{z})=v$, and active interventions $a$ and $a'$,
$$E\big[h(U|\mathbf{w}_{a}, z, v)h(U|\mathbf{w}_{a'}, \tilde{z}, v)\big] = K(a,a',\tilde{z})p(z|\mathbf{w}_a,v) + K'(a,a',\tilde{z})p(z|\mathbf{w}_a \cap \mathbf{w}_{a'},v),$$ 
where 
$K(a,a',\tilde{z})$ and $K'(a,a',\tilde{z})$ are constants that depend on $(a,a')$ and $\tilde{z}$ but do not depend on $z$.
\end{lemma}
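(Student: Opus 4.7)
The plan is a direct computation: substitute the definition of $h$ from \eqref{eq:IF_z} into the left-hand side and expand the resulting bilinear form termwise. The first key observation is that the two window-indicator factors collapse via $1(W\in\mathbf{w}_a,V=v)\cdot 1(W\in\mathbf{w}_{a'},V=v) = 1(W\in\mathbf{w}_a\cap\mathbf{w}_{a'},V=v)$. This is precisely why $\mathbf{w}_a\cap\mathbf{w}_{a'}$ appears in the conclusion and is the origin of the $p(z|\mathbf{w}_a\cap\mathbf{w}_{a'},v)$ term. Pulling the deterministic factor $1/[p(\mathbf{w}_a,v)p(\mathbf{w}_{a'},v)]$ outside reduces the problem to evaluating
\begin{align*}
 E\bigl\{1(W\in\mathbf{w}_a\cap\mathbf{w}_{a'},V=v)[1(Z=z)-p(z|\mathbf{w}_a,v)][1(Z=\tilde z)-p(\tilde z|\mathbf{w}_{a'},v)]\bigr\}.
\end{align*}

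Next, I would expand the bracketed product into four summands and compute their expectations separately. Each expectation reduces to a joint probability of the form $p(z',\mathbf{w}_a\cap\mathbf{w}_{a'},v)$ for $z'\in\{z,\tilde z\}$ (or the marginal $p(\mathbf{w}_a\cap\mathbf{w}_{a'},v)$ in one case), possibly multiplied by one of the conditional probabilities already appearing in the expansion. Applying the product rule $p(z',\mathbf{w}_a\cap\mathbf{w}_{a'},v)= p(z'|\mathbf{w}_a\cap\mathbf{w}_{a'},v)\,p(\mathbf{w}_a\cap\mathbf{w}_{a'},v)$ allows the common prefactor $p(\mathbf{w}_a\cap\mathbf{w}_{a'},v)/[p(\mathbf{w}_a,v)p(\mathbf{w}_{a'},v)]$ to be factored out, leaving inside a sum of four products of conditional probabilities whose dependence on $z$ is only through $p(z|\mathbf{w}_a,v)$ and $p(z|\mathbf{w}_a\cap\mathbf{w}_{a'},v)$.

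Finally, I would regroup by the $z$-dependent factors. The coefficient of $p(z|\mathbf{w}_a,v)$ collapses, after one cancellation, to
\begin{align*}
 K(a,a',\tilde z)=\frac{p(\mathbf{w}_a\cap\mathbf{w}_{a'},v)}{p(\mathbf{w}_a,v)p(\mathbf{w}_{a'},v)}\bigl[p(\tilde z|\mathbf{w}_{a'},v)-p(\tilde z|\mathbf{w}_a\cap\mathbf{w}_{a'},v)\bigr],
\end{align*}
which manifestly depends only on $(a,a',\tilde z)$ and $v$. The remaining terms aggregate into $K'(a,a',\tilde z)\,p(z|\mathbf{w}_a\cap\mathbf{w}_{a'},v)$ in an analogous fashion, with $K'(a,a',\tilde z)= -p(\tilde z|\mathbf{w}_{a'},v)\,p(\mathbf{w}_a\cap\mathbf{w}_{a'},v)/[p(\mathbf{w}_a,v)p(\mathbf{w}_{a'},v)]$.

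The one step that requires care is the cross term $E[1(Z=z)\,1(Z=\tilde z)\cdot 1(W\in\mathbf{w}_a\cap\mathbf{w}_{a'},V=v)] = 1(z=\tilde z)\,p(z,\mathbf{w}_a\cap\mathbf{w}_{a'},v)$, where the Kronecker factor $1(z=\tilde z)$ is the only potential source of residual $z$-dependence in the coefficient of $p(z|\mathbf{w}_a\cap\mathbf{w}_{a'},v)$. Reconciling this with the claim uses the elementary identity $1(z=\tilde z)\,p(z|\mathbf{w}_a\cap\mathbf{w}_{a'},v) = 1(z=\tilde z)\,p(\tilde z|\mathbf{w}_a\cap\mathbf{w}_{a'},v)$ to treat this contribution as a diagonal correction absorbed into $K'$. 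Apart from this single bookkeeping subtlety, every remaining manipulation is routine conditional-probability algebra.
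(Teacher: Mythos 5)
Your computation follows the same route as the paper's own proof: substitute the definition of $h$ from \eqref{eq:IF_z}, collapse the window indicators to $1(W\in\mathbf{w}_a\cap\mathbf{w}_{a'},V=v)$, expand the bilinear form into four expectations, and regroup by the $z$-dependent factors. The coefficients you obtain, $K(a,a',\tilde z)=\frac{p(\mathbf{w}_a\cap\mathbf{w}_{a'},v)}{p(\mathbf{w}_a,v)p(\mathbf{w}_{a'},v)}\bigl[p(\tilde z|\mathbf{w}_{a'},v)-p(\tilde z|\mathbf{w}_a\cap\mathbf{w}_{a'},v)\bigr]$ and $K'(a,a',\tilde z)=-\frac{p(\tilde z|\mathbf{w}_{a'},v)\,p(\mathbf{w}_a\cap\mathbf{w}_{a'},v)}{p(\mathbf{w}_a,v)p(\mathbf{w}_{a'},v)}$, agree exactly with those displayed in the paper.

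The genuine gap is your handling of the cross term $1(z=\tilde z)\,p(z,\mathbf{w}_a\cap\mathbf{w}_{a'},v)/[p(\mathbf{w}_a,v)p(\mathbf{w}_{a'},v)]$. The identity $1(z=\tilde z)p(z|\mathbf{w}_a\cap\mathbf{w}_{a'},v)=1(z=\tilde z)p(\tilde z|\mathbf{w}_a\cap\mathbf{w}_{a'},v)$ is true but does not let you absorb this term into $K'$: the coefficient multiplying $p(z|\mathbf{w}_a\cap\mathbf{w}_{a'},v)$ would still carry the factor $1(z=\tilde z)$, which depends on $z$, whereas the lemma requires $K'$ to depend only on $(a,a',\tilde z)$. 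Concretely, if $z_1\ne z_2=\tilde z$ with both $p(z_i,\mathbf{w}_a\cap\mathbf{w}_{a'},v)>0$, no single constant $K'(a,a',\tilde z)$ can reproduce the value of this term at both $z=z_1$ (where it is zero) and $z=z_2$ (where it is positive), and the $p(z|\mathbf{w}_a,v)$ term cannot rescue this in general either. So the stated decomposition is exact only when $z\ne\tilde z$, where the indicator annihilates the term; for $z=\tilde z$ with $\mathbf{w}_a\cap\mathbf{w}_{a'}\ne\emptyset$ it fails by precisely this diagonal contribution. You have in fact surfaced a real subtlety that the paper's one-line proof silently drops, but your proposed reconciliation does not close it: the honest fixes are to restrict the lemma to $z\ne\tilde z$ (the off-diagonal case for which it is invoked in the proof of Theorem \ref{Thm:improve_efficiency_sharing_control}) or to carry the additional term $1(z=\tilde z)\,p(\tilde z,\mathbf{w}_a\cap\mathbf{w}_{a'},v)/[p(\mathbf{w}_a,v)p(\mathbf{w}_{a'},v)]$ explicitly in the statement.
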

\begin{proof}[Proof of Lemma~\ref{lemma:property_h}]
The result follows immediately from the fact that
\begin{align*}
  &E\big[h(U|\mathbf{w}_{a}, z, v)h(U|\mathbf{w}_{a'}, \tilde{z}, v)\big]
  = \bigg[ p(\mathbf{w}_{a} \cap \mathbf{w}_{a'}, v)\frac{p(\tilde{z}|\mathbf{w}_{a'},v)}{p(\mathbf{w}_{a},v)p(\mathbf{w}_{a'},v)}-\frac{p(\tilde{z},\mathbf{w}_{a} \cap \mathbf{w}_{a'},v)}{p(\mathbf{w}_{a},v)p(\mathbf{w}_{a'},v)}\bigg]p(z|\mathbf{w}_{a},v)\\
  &\quad + \bigg[\frac{-p(\tilde{z}|\mathbf{w}_{a'},v)}{p(\mathbf{w}_{a},v)p(\mathbf{w}_{a'},v)}p(\mathbf{w}_{a} \cap \mathbf{w}_{a'},v)\bigg]p(z|\mathbf{w}_{a} \cap \mathbf{w}_{a'},v)\\
  & \equiv K(a,a',\tilde{z})p(z|\mathbf{w}_a,v) + K'(a,a',\tilde{z})p(z|\mathbf{w}_a \cap \mathbf{w}_{a'},v).
\end{align*}
\end{proof}

\subsection{Proof of Theorem~\ref{Thm:improve_efficiency_sharing_control}}
Recalling the definition of $\sigma^\dagger_{a_1, a_2}(t|v)^2$ from \eqref{eq:covariance_VE_s} and multiplying each term $j\in \{1,2\}$ in that expression by $P(A \in \{0, a_j\}, W \in \mathbf{w}_{a_j})$ gives that
\begin{align} \label{eq:covariance_VE_s_lb} 
  &\sigma^\dagger_{a_1, a_2}(t|v)^2 \ge \sum_{j=1}^2\frac{m}{m_{a_j}}P(A \in \{0, a_j\}, W \in \mathbf{w}_{a_j}) \Theta_j^2Q^2(t|0,\mathbf{w}_{a_j},v) E\bigg\{\bigg[\sum_{z \in \mathcal{Z}}\Big[-p(z|\mathbf{w}_{a_j},v)\\
  &\quad \times \xi(U|t,a_j,\mathbf{w}_{a_j},z)
  +\gamma(t|a_j,v)p(z|\mathbf{w}_{a_j},v)\xi(U|t,0,\mathbf{w}_{a_j},z) + \eta(t|a_j,\mathbf{w}_{a_j},z,v)h(U|\mathbf{w}_{a_j},z,v) \Big]\bigg]^2\bigg\}. \nonumber
\end{align}  
Plugging the identity for $m/m_{a_j}$ from \eqref{eq:identical_ratios_in_prob} into the right-hand side above gives that
\begin{align*}  
  & \sigma^\dagger_{a_1, a_2}(t|v)^2 \ge \bigg[\sum_{a=1}^k P(A\in\{0,a\}, W\in \mathbf{w}_{a})\bigg]\sum_{j=1}^2\Theta_j^2Q^2(t|0,\mathbf{w}_{a_j},v)E\bigg\{\bigg[
  \sum_{z \in \mathcal{Z}}\Big[-p(z|\mathbf{w}_{a_j},v) \\
  &\quad \times \xi(U|t,a_j,\mathbf{w}_{a_j},z)
  +\gamma(t|a_j,v)p(z|\mathbf{w}_{a_j},v)\xi(U|t,0,\mathbf{w}_{a_j},z) +\eta(t|a_j,\mathbf{w}_{a_j},z,v) h(U|\mathbf{w}_{a_j},z,v) \Big]\bigg]^2\bigg\}. \nonumber
\end{align*}
Shortly, we will show that the third term of $\sigma_{a_1,a_2}(t|v)^2$ given in \eqref{eq:covariance_VE_p} is nonpositive. Before we provide the somewhat lengthy arguments required to do this, we show that establishing that this term is nonpositive will give the desired result. Once we show that that term is nonpositive, the above will give that
\begin{align*}
  \frac{\sigma^\dagger_{a_1,a_2}(t|v)^2}{\sigma_{a_1,a_2}(t|v)^2} \ge \sum_{a'=1}^k P(A\in\{0,a'\}, W\in \mathbf{w}_{a'}).   
\end{align*}
Also, because $\sigma_n$ and $\sigma^\dagger_m$ are consistent estimators of $\sigma_{a_1,a_2}(t|v)$ and $\sigma^\dagger_{a_1,a_2}(t|v)$ and $n/m\rightarrow \kappa$, we have that
\begin{align*}
    \left|\frac{\omega_n}{\omega_m^\dagger}- \Big[\frac{\sigma^\dagger_{a_1,a_2}(t|v)}{m^{1/2}}\Big]^{-1}\frac{\sigma_{a_1,a_2}(t|v)}{n^{1/2}}\right|&\overset{p}{\rightarrow} 0.
\end{align*}
Using that $n/m\rightarrow \kappa$, this shows that $\omega_n^2/(\omega_m^\dagger)^2\overset{p}{\rightarrow} \kappa^{-1} \sigma_{a_1,a_2}(t|v)^2/\sigma^\dagger_{a_1,a_2}(t|v)^2\le 1/\kappa\sum_{a'=1}^k P(A\in\{0,a'\}, W\in \mathbf{w}_{a'})=\rho/\kappa$. Hence, ${\rm plim}_{m,n}\,\omega_n/\omega_m^\dagger \le 1$ if $\kappa\ge \rho$, and this inequality is strict if $\kappa>\rho$.

The above argument was predicated on showing that the last term in \eqref{eq:covariance_VE_p} is nonpositive. We will do this in what follows. The expected cross-product in this last term can be decomposed as  
\begin{align} \label{eq:decomp_expect_crossprod}
  &E\bigg\{\sum_{z \in \mathcal{Z}}\prod_{j=1}^2\Big[-p(z|\mathbf{w}_{a_j},v)\xi(U|t,a_j,\mathbf{w}_{a_j},z)
  +\gamma(t|a_j,v)p(z|\mathbf{w}_{a_j},v)\xi(U|t,0,\mathbf{w}_{a_j},z)\\
  & \hspace{1.8cm} + \eta(t|a_j,\mathbf{w}_{a_j},z,v) h(U|\mathbf{w}_{a_j},z,v) \Big]\bigg\} \nonumber \\
  & + E\bigg\{\sum_{z \in \mathcal{Z}}\Big\{\Big[-p(z|\mathbf{w}_{a_1},v)\xi(U|t,a_1,\mathbf{w}_{a_1},z)
  +\gamma(t|a_1,v)p(z|\mathbf{w}_{a_1},v)\xi(U|t,0,\mathbf{w}_{a_1},z) \nonumber \\
  & \hspace{1.8cm} + \eta(t|a_1,\mathbf{w}_{a_1},z,v) h(U|\mathbf{w}_{a_1},z,v) \Big] 
  \sum_{\tilde{z} \in \mathcal{Z}\backslash\{z\}}\Big[-p(\tilde{z}|\mathbf{w}_{a_2},v)\xi(U|t,a_2,\mathbf{w}_{a_2},\tilde{z}) \nonumber\\ 
  &\hspace{1.8cm} +\gamma(t|a_2,v)p(\tilde{z}|\mathbf{w}_{a_2},v)\xi(U|t,0,\mathbf{w}_{a_2},\tilde{z}) + \eta(t|a_2,\mathbf{w}_{a_2},\tilde{z},v) h(U|\mathbf{w}_{a_2},\tilde{z},v) \Big]\Big\}\bigg\}. \nonumber
\end{align} 
Because $\xi(U|t,0,\mathbf{w}_{a_j},z)$ is zero if $Z\not=z$, and similarly for $\xi(U|t,0,\mathbf{w}_{a_j},\tilde{z})$, the fact that $z\not=\tilde{z}$ implies that $\xi(U|t,0,\mathbf{w}_{a_1},z)\xi(U|t,0,\mathbf{w}_{a_2},\tilde{z})=0$. Moreover, by the definitions of $h$ in \eqref{eq:IF_z}, 
\begin{align*}
  &E\Big\{\xi(U|t,0,\mathbf{w}_{a_j},z)h(U|\mathbf{w}_{a_k},\tilde{z},v)\Big\}\\
  &= E\Big\{\xi(U|t,0,\mathbf{w}_{a_j},z)\frac{1(W \in \mathbf{w}_{a_k}, V=v)}{p(\mathbf{w}_{a_k},v)}\left[1(Z=\tilde{z}) - p(\tilde{z}|\mathbf{w}_{a_k},v)\right]\Big\}\\
  & = -\frac{p(\tilde{z}|\mathbf{w}_{a_k},v)}{p(\mathbf{w}_{a_k},v)}E\Big\{\xi(U|t,0,\mathbf{w}_{a_j} \cap \mathbf{w}_{a_k},z)\Big\}=0,
\end{align*}
where the second equality holds because $\xi(U|t,0,\mathbf{w}_{a_j},z)=0$ whenever $Z\not=z$ and the third equality follows by direct calculation. Similarly,
$E\big\{\xi(U|t,0,\mathbf{w}_{a_j},\tilde{z})h(U|\mathbf{w}_{a_k},z,v)\big\}=0$.

Returning to \eqref{eq:decomp_expect_crossprod}, we see that the 
the second term in that display can be simplified as
\begin{align*}  
  \sum_{z \in \mathcal{Z}}\eta(t|a_1,\mathbf{w}_{a_1},z,v)\bigg\{\sum_{\tilde{z} \in \mathcal{Z}\backslash\{z\}}
  \eta(t|a_2,\mathbf{w}_{a_2},\tilde{z},v) E\big[h(U|\mathbf{w}_{a_1},z,v)h(U|\mathbf{w}_{a_2},\tilde{z},v)\big]\bigg\},
\end{align*}  
and by Lemma \ref{lemma:property_h}, written as
\begin{align} \label{eq:reduced_term}
  & \sum_{z \in \mathcal{Z}}\eta(t|a_1,\mathbf{w}_{a_1},z,v)\bigg\{\sum_{\tilde{z} \in \mathcal{Z}}
  \eta(t|a_2,\mathbf{w}_{a_2},\tilde{z},v)\big[K(a_1,a_2,\tilde{z})p(z|\mathbf{w}_{a_1},v) + K'(a_1,a_2,\tilde{z})p(z|\mathbf{w}_{a_1} \cap \mathbf{w}_{a_2},v)\big]\bigg\} \nonumber \\
  & \quad - \sum_{z \in \mathcal{Z}}\eta(t|a_1,\mathbf{w}_{a_1},z,v)\eta(t|a_2,\mathbf{w}_{a_2},z,v)E\big[h(U|\mathbf{w}_{a_1},z,v)h(U|\mathbf{w}_{a_2},z,v)\big].
\end{align}
The first line of \eqref{eq:reduced_term} above is zero, because the constancy assumption \ref{assump:constancy} implies that 
\begin{align*}
  \sum_{z}\eta(t|a_1,\mathbf{w}_{a_1},z,v)p(z|\mathbf{w}_{a_1},v) = 0 \;\mbox{ and }\;  \sum_{z}\eta(t|a_1,\mathbf{w}_{a_1},z,v)p(z|\mathbf{w}_{a_1} \cap \mathbf{w}_{a_2},v) = 0.  
\end{align*}
Let $K = 2\big[\prod_{j=1}^2\Theta_jQ(t|0,w_{a_j},v)\big]$.
Plugging the remaining quantity in \eqref{eq:reduced_term} into the decomposition in \eqref{eq:decomp_expect_crossprod}, the last term of $\sigma_{a_1,a_2}(t|v)^2$ in \eqref{eq:covariance_VE_p} reduces to
\begin{align}\label{eq:last_term_of_sigma}
  K\sum_{z} &\bigg\{ - \sum_{j,k\in [2]:j \not= k}\bigg[
  p(z|\mathbf{w}_{a_{k}},v)\eta(t|a_j,\mathbf{w}_{a_j},z,v) E\Big[\xi(U|t,a_{k},\mathbf{w}_{a_{k}},z)h(U|\mathbf{w}_{a_j},z,v)\Big] \bigg] \nonumber \\
  & + \sum_{j,k\in [2]:j \not= k}\bigg[\gamma(t|a_{k},v)p(z|\mathbf{w}_{a_{k}},v)\eta(t|a_j,\mathbf{w}_{a_j},v) E\Big[\xi(U|t,0,\mathbf{w}_{a_k},z)h(U|\mathbf{w}_{a_j},z,v) \Big]\bigg] \nonumber\\
  & + \bigg[\prod_{j=1}^2\gamma(t|a_j,v)p(z|\mathbf{w}_{a_j},v)\bigg]E\bigg[\prod_{j=1}^2\xi(U|t,0,\mathbf{w}_{a_j},z)\bigg] \bigg\}.
\end{align}
Our objective is to show that \eqref{eq:last_term_of_sigma} is nonpositive.
For each $z\in\mathcal{Z}$ and $(j,k) \in [2], j \not= k$, we first show that
\begin{align} \label{eq:zero_expectations}
E\big[\xi(U|t,a_k,\mathbf{w}_{a_k},z)h(U|\mathbf{w}_{a_j},z,v)\big] = 0;\:\: E\big[\xi(U|t,0,\mathbf{w}_{a_k},z)h(U|\mathbf{w}_{a_j},z,v)\big] = 0.
\end{align}
From the definition of $h$ in \eqref{eq:IF_z},
we can easily see that
$$E\big[\xi(U|t,a_k,\mathbf{w}_{a_k},z)h(U|\mathbf{w}_{a_j},z,v)\big] = S(t|a_k,\mathbf{w}_{a_k},z)\frac{1-p(z|\mathbf{w}_{a_j},v)}{p(\mathbf{w}_{a_j},v)}E\big[\xi(U|t,a_k,\mathbf{w}_{a_j} \cap \mathbf{w}_{a_k},z)\big] = 0$$
and similarly $E\big[\xi(U|t,0,\mathbf{w}_{a_k},z)h(U|\mathbf{w}_{a_j},z,v)\big] = 0$, so we prove \eqref{eq:zero_expectations}.

By \ref{assump:negative_product_firstderivatives}, we know
that $\Theta_1\Theta_2 <0$; along with each of $Q(\cdot)$, $\gamma(\cdot)$, $S(\cdot)$ and $p(\cdot)$ being non-negative,
$K$ in \eqref{eq:last_term_of_sigma} is non-positive.
To show the desired result, it therefore suffices to show that $E\big[\xi(U|t,0,\mathbf{w}_{a_1},z)\xi(U|t,0,\mathbf{w}_{a_2},z)\big] \geq 0$, for each $z$.
From \eqref{eq:IF_suvfunc}, define the stratified basic counting process and the risk process for an arbitrary participant on the control arm at time $v$ and within the covariate subgroup $Z=z$, respectively:
\begin{align*}
\tilde{N}(s;0,\mathbf{w},z)&=1(X \leq s, \Delta=1, A=0, W \in \mathbf{w}, Z=z);\\
\tilde{Y}(s;0,\mathbf{w},z)&=1(X \geq s, A=0, W \in \mathbf{w}, Z=z),
\end{align*}
and the martingale
\begin{align*}
  \tilde{M}(ds;0,\mathbf{w},z)=\tilde{N}(ds;0,\mathbf{w},z)-\tilde{Y}(s;0,\mathbf{w},z)\frac{p_1(ds,0,\mathbf{w},z)}{p_2(s,0,\mathbf{w},z)}
\end{align*}
with respect to the filtration
$\tilde{\mathcal{F}}_{s} = \sigma\{\tilde{N}(s';\cdot), \tilde{Y}(s';\cdot),\, A,\, W,\, Z: s' \leq s \in \mathcal{T}\}$. 
We have that
\begin{align*}
  & E\big[\xi(U|t,0,\mathbf{w}_{a_1},z)\xi(U|t,0,\mathbf{w}_{a_2},z)\big]\\
  & = \prod_{j=1}^2 \big\{S(t|0,\mathbf{w}_{a_j},z)\big\}E\Big[\int_0^t\frac{1}{p_2(s,0,\mathbf{w}_{a_1},z)}\tilde{M}(ds;0,\mathbf{w}_{a_1},z)\int_0^t\frac{1}{p_2(s,0,\mathbf{w}_{a_2},z)}\tilde{M}(ds;0,\mathbf{w}_{a_2},z) \Big]\\
  & = \prod_{j=1}^2 \big\{S(t|0,\mathbf{w}_{a_j},z)\big\}\int_0^t\frac{P(X \geq s, A=0, W \in \mathbf{w}_{a_1} \cap \mathbf{w}_{a_2}, Z=z)}{p_2(s,0,\mathbf{w}_{a_1},z)p_2(s,0,\mathbf{w}_{a_2},z)}\frac{p_1(ds,0,\mathbf{w}_{a_1} \cap \mathbf{w}_{a_2},z)}{p_2(s,0,\mathbf{w}_{a_1} \cap \mathbf{w}_{a_2},z)}.
\end{align*}
Together with $p_2>0$ and $p_1(ds,\cdot)/p_2(s,\cdot) > 0$, this quantity is positive
when interventions $a_1$ and $a_2$ are under randomization in some common windows, namely, $\mathbf{w}_{a_1} \cap \mathbf{w}_{a_2} \not= \varnothing$, and zero otherwise. This completes the proof.


\section{Conditions on relative efficacy measures to be used in superiority or noninferiority analyses}\label{app:partialOrder}


When discussing the relative efficacy measure $\Theta$, we mostly have in mind scenarios where $a_1$ is declared superior to $a_2$ when $\theta:=\Theta(\gamma(t|a_1,v),\gamma(t|a_2,v)))<0$ and is declared noninferior when $\theta< \varepsilon$ for some specified noninferiority margin $\varepsilon>0$. When the event with occurrence time $T$ is undesirable (e.g., death), a natural choice of $\Theta$ in such scenarios is the additive contrast $\Theta(r_1,r_2)=r_1-r_2$. When the event is desirable (e.g., recovery), $\Theta(r_1,r_2)=r_2-r_1$ could instead be used. Henceforth we assume that the event is harmful so that smaller relative risks are considered preferable.

In the above described superiority and noninferiority analyses, it is natural to require that $\Theta$ should respect a partial order on the relative risk product space $(0,\infty)^2$. This partial order makes it possible to compare two (intervention $a_1$, intervention $a_2$) relative risk pairs, denoted by $(r_1, r_2)$ and $(s_1, s_2)$,
and identify whether there is a pair where intervention $a_1$ looks indisputably more favorable as compared to intervention $a_2$. Specifically, this is the case for $(r_1,r_2)$ if at least one of the following conditions holds:
\begin{align}
  \mbox{(i)}\; r_1< s_1\;\mbox{and}\;r_2\ge s_2\,;\: 
  \mbox{(ii)}\;r_1\le s_1\;\mbox{and}\;r_2>s_2. 
                                \label{eq:partial_ordering_conditions}
\end{align}
If (i) or (ii) holds, then we say that $(r_1,r_2)\prec (s_1,s_2)$. Stated in words, $(r_1,r_2)\prec (s_1,s_2)$ indicates that intervention $a_1$ in the $(r_1,r_2)$ scenario has relative risk that is at least as low as in the $(s_1,s_2)$ scenario, intervention $a_2$ has relative risk at least as high, and at least one of these two inequalities is strict. Therefore, if $(r_1,r_2)\prec (s_1,s_2)$ and $a_1$ is declared superior (noninferior) under $(s_1,s_2)$, it is natural to require that the same declaration should be made under $(r_1,r_2)$. This suggests that $\Theta$ should be such that $\Theta(r_1,r_2)\le \Theta(s_1,s_2)$ whenever $(r_1,r_2)\prec (s_1,s_1)$. Moreover, because $\Theta(s_1,s_2)$ is used as a continuous measure of how favorable $a_1$ looks as compared to $a_2$, it is natural to ask for something slightly stronger, namely that $\Theta(r_1,r_2)< \Theta(s_1,s_2)$ whenever $(r_1,r_2)\prec (s_1,s_1)$. This condition states that the relative efficacy of $a_1$ versus $a_2$ necessarily improves if $a_1$ looks strictly more favorable as compared to $a_2$ in one scenario than another. If $\Theta(r_1,r_2)< \Theta(s_1,s_2)$ whenever $(r_1,r_2)\prec (s_1,s_1)$, then we say that $\Theta$ respects the partial order.

Many natural relative efficacy measures respect the partial order. For example, the additive contrast $\Theta(r_1,r_2)=r_1-r_2$ and the multiplicative contrast $\Theta(r_1,r_2)=r_1/r_2$ both satisfy this condition. Monotonically increasing transformations of these contrasts, such as $\Theta(r_1,r_2)=\log(r_1/r_2)$, also respect the partial order.

The below result shows that, when a differentiable relative efficacy measure $\Theta$ respects the partial order, it must be the case that \ref{assump:negative_product_firstderivatives} holds. Hence, our arguments that $\Theta$ should satisfy the partial order imply that $\Theta$ should satisfy \ref{assump:negative_product_firstderivatives}.
\begin{lemma}\label{lemma:partial_ordering}
Suppose that $\Theta$ is differentiable and that $\Theta$ respects the partial order $\prec$ in the sense that $\Theta(r_1,r_2)< \Theta(s_1,s_2)$ whenever $(r_1, r_2) \prec (s_1, s_2)$. Under these conditions, \ref{assump:negative_product_firstderivatives} holds.
\end{lemma}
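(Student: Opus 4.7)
The plan is to separately extract the sign of $\partial_{r_1}\Theta$ and of $\partial_{r_2}\Theta$ from the partial-order hypothesis at every point in $(0,\infty)^2$, and then multiply. The only tools I would use are the definition of $\prec$ (equation \eqref{eq:partial_ordering_conditions}), the partial-order-respecting assumption, and the differentiability of $\Theta$ to convert strict inequalities of function values into one-sided limits on the partial derivatives.

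Fix an arbitrary $(r_1, r_2) \in (0, \infty)^2$ and take $h > 0$ small enough that $r_1 - h > 0$. Applying condition (i) of the partial order gives $(r_1 - h, r_2) \prec (r_1, r_2) \prec (r_1 + h, r_2)$, since each step strictly decreases the first coordinate while leaving the second fixed. The partial-order-respecting property then yields $\Theta(r_1 - h, r_2) < \Theta(r_1, r_2) < \Theta(r_1 + h, r_2)$. Dividing the right inequality by $h$ and the left by $-h$, then sending $h \downarrow 0$, differentiability of $\Theta$ in its first argument gives $\partial_{r_1}\Theta(r_1, r_2) \geq 0$. An entirely symmetric argument using condition (ii) of $\prec$, now via the chain $(r_1, r_2 + h) \prec (r_1, r_2) \prec (r_1, r_2 - h)$, produces the complementary bound $\partial_{r_2}\Theta(r_1, r_2) \leq 0$. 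Multiplying the two displays then establishes that $[\partial_{r_1}\Theta(r_1, r_2)][\partial_{r_2}\Theta(r_1, r_2)] \leq 0$ at every $(r_1, r_2) \in (0, \infty)^2$.

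The main obstacle is upgrading this weak product inequality to the strict version required by \ref{assump:negative_product_firstderivatives}. In general, strict monotonicity of a differentiable univariate function does not rule out isolated critical points --- witness $x \mapsto x^3$ at the origin --- so the argument above does not by itself force either partial derivative to be strictly nonzero at a given point. In the full writeup I would address this either by invoking an additional regularity hypothesis (e.g., that $\Theta$ has nowhere-vanishing partial derivatives, or is real-analytic) or, more modestly, by checking the strict inequality through direct computation for the natural relative efficacy measures the paper highlights --- namely $r_1 - r_2$, $r_1/r_2$, and $\log(r_1/r_2)$ --- whose partial derivatives are manifestly nonzero throughout $(0,\infty)^2$. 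This gap aside, the two sign determinations above are the heart of the proof, and they depend only on evaluating $\Theta$ along axis-parallel increments.
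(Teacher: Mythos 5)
Your argument is, up to notation, exactly the paper's: fix a point, perturb along each axis, use the partial-order hypothesis to sign the difference quotient, and pass to the limit. The one place you and the paper part ways is the step you flag as a gap --- and you are right to flag it. The paper's proof simply writes
\begin{align*}
\frac{\partial}{\partial r_1}\Theta(r_1,r_2) = \lim_{s_1 \downarrow r_1} \frac{\Theta(r_1,r_2)-\Theta(s_1,r_2)}{r_1-s_1} > 0,
\end{align*}
and symmetrically for the second coordinate; that is, it asserts that a limit of strictly positive difference quotients is strictly positive. As you observe, this inference is not valid: strict monotonicity of a differentiable function does not preclude a vanishing derivative. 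Indeed, $\Theta(r_1,r_2) = (r_1-1)^3 - r_2$ is differentiable and respects $\prec$ (strictly increasing in $r_1$, strictly decreasing in $r_2$), yet $\frac{\partial}{\partial r_1}\Theta(1,r_2)=0$, so the product of the partial derivatives is $0$ rather than negative and \ref{assump:negative_product_firstderivatives} fails; the lemma as stated therefore needs a further hypothesis. The weak inequality you derive is the honest conclusion from the stated assumptions, and your suggested remedies (assume nowhere-vanishing partials, or verify \ref{assump:negative_product_firstderivatives} directly for $r_1-r_2$, $r_1/r_2$, and $\log(r_1/r_2)$) are the right way to close the hole. In short, your proof is not missing anything the paper supplies; the paper's own proof glosses over precisely the point you isolate.
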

\begin{proof}[Proof of Lemma~\ref{lemma:partial_ordering}]
The first part of \ref{assump:negative_product_firstderivatives}, namely that $\Theta$ is differentiable, holds by the assumption. Hence, it remains to show the latter part, namely that $\left[\frac{\partial}{\partial r_1}\Theta(r_1,r_2)\right]\left[\frac{\partial}{\partial r_2}\Theta(r_1,r_2)\right]<0$ for all $(r_1,r_2)\in (0,\infty)^2$. 

Fix $(r_1,r_2)\in (0,\infty)^2$. By the definition of the partial derivative and the fact that $(r_1, r_2) \prec (s_1, r_2)$ whenever $s_1>r_1$,
\begin{align*}
    \frac{\partial}{\partial r_1}\Theta(r_1,r_2)&= \lim_{s_1\downarrow r_1}\frac{\Theta(r_1,r_2)-\Theta(s_1,r_2)}{r_1-s_1} > 0.
\end{align*}
Similarly, since $(r_1, r_2) \prec (r_1, s_2)$ whenever $s_2<r_2$,
\begin{align*}
    \frac{\partial}{\partial r_2}\Theta(r_1,r_2)&= \lim_{s_2\uparrow r_2}\frac{\Theta(r_1,r_2)-\Theta(r_1,s_2)}{r_2-s_2} < 0.
\end{align*}
Hence, the latter part of \ref{assump:negative_product_firstderivatives} holds.
\end{proof}

\section{Likelihood ratio test} \label{app-sec:LRT}
We consider an asymptotic variant of the likelihood ratio test for establishing the noninferiority of intervention 1 as compared to the most efficacious of the other interventions, instead of the intersection test introduced in Section \ref{sec:intersection_test}. The test that we propose builds on the fact that
\begin{align}
\sqrt{n}\Sigma_n^{-1/2}(\boldsymbol{\gamma}_n-\boldsymbol{\gamma})\rightsquigarrow \mathcal{N}(0,{\rm Id}), \label{eq:gammanWeakConv}
\end{align}
where $\boldsymbol{\gamma}:= (\gamma(t|a,v))_{a=1}^k$, $\boldsymbol{\gamma}_n:= (\gamma_n(t|a,v))_{a=1}^k$, and $\Sigma_n$ is a consistent estimate of the asymptotic covariance matrix of $\boldsymbol{\gamma}_n$. In particular, when defining the form of the proposed test, we make the working assumption that the above distributional result is exact when $n$ is finite, that is, we suppose that
\begin{align}
\boldsymbol{\gamma}_n\sim \mathcal{N}(\boldsymbol{\gamma},\Sigma_n/n). \label{eq:gammanExactFinite}
\end{align}
We then show that, in fact, the proposed test provides asymptotically valid type I error control even when the above does not hold but \eqref{eq:gammanWeakConv} does.


Let $\Gamma_0:= \{\boldsymbol{\gamma}'=(\gamma_a')_{a=1}^k\in [0,\infty)^k : \gamma_1'\ge \min\{\delta,\min_{a\not=1} \gamma_a'+\epsilon\}\}$. Under \eqref{eq:gammanExactFinite} and treating $\Sigma_n$ as fixed and known, the likelihood ratio test statistic for a test of $H_0 : \boldsymbol{\gamma}\in \Gamma_0$ versus the complementary alternative takes the form
\begin{align*}
  T_n(\delta, \epsilon) 
  &= -2\log\frac{\sup_{\boldsymbol{\gamma}'\in \Gamma_0} \phi_{\boldsymbol{\gamma}',\Sigma_n/n}\left(\boldsymbol{\gamma}_n\right)}{\sup_{\boldsymbol{\gamma}'\in [0,\infty)^k} \phi_{\boldsymbol{\gamma}',\Sigma_n/n}\left(\boldsymbol{\gamma}_n\right)}= -2\log\frac{\sup_{\boldsymbol{\gamma}'\in \Gamma_0} \phi_{\boldsymbol{\gamma}',\Sigma_n/n}\left(\boldsymbol{\gamma}_n\right)}{\phi_{\boldsymbol{\gamma}_n,\Sigma_n/n}\left(\boldsymbol{\gamma}_n\right)},
\end{align*}
where $\phi_{\boldsymbol{\gamma}',\Sigma_n/n}(\boldsymbol{\gamma}_n)$ denotes the density function of a $k$-variate normal random variable with mean $\boldsymbol{\gamma}'$ and covariance matrix $\Sigma_n/n$, evaluated at $\boldsymbol{\gamma}_n$. The likelihood ratio test rejects when $T_n(\delta, \epsilon)$ exceeds the $1-\alpha$ quantile of a $\chi^2(k)$ distribution, which we denote by $q_{1-\alpha}$. Though \eqref{eq:gammanExactFinite} does not necessarily hold, this test will still maintain asymptotic type I error. To see why, note that \eqref{eq:gammanWeakConv}, together with the continuous mapping theorem, imply that $(\boldsymbol{\gamma}_n-\boldsymbol{\gamma})^\top\big(\Sigma_n/n\big)^{-1}(\boldsymbol{\gamma}_n-\boldsymbol{\gamma})\rightsquigarrow \chi^2(k)$. Combining this with the fact that under the null,
\begin{align*}
    T_n(\delta, \epsilon) &\le -2\log\frac{\phi_{\boldsymbol{\gamma},\Sigma_n/n}\left(\boldsymbol{\gamma}_n\right)}{\phi_{\boldsymbol{\gamma}_n,\Sigma_n/n}\left(\boldsymbol{\gamma}_n\right)} = (\boldsymbol{\gamma}_n-\boldsymbol{\gamma})^\top\big(\Sigma_n/n\big)^{-1}(\boldsymbol{\gamma}_n-\boldsymbol{\gamma}), 
\end{align*}
we find that $\limsup_n P\{T_n(\delta,\epsilon)\ge q_{1-\alpha}\}\le \alpha$, showing proper type I error control.

In our simulation, we use the R package \texttt{nloptr} \citep{NLoptR2020} with method=\texttt{cobyla} to solve the constrained optimization problem in the numerator of $T_n(\delta, \epsilon)$, for given $(\delta, \epsilon)$.

\newpage
\section{Additional simulation results} \label{app-sec:supp_figures}
\begin{figure}[hbt!]
\begin{center}
 \includegraphics[trim={0cm 0.1cm 0.2cm 0.1cm}, clip, width=0.8\textwidth]
 {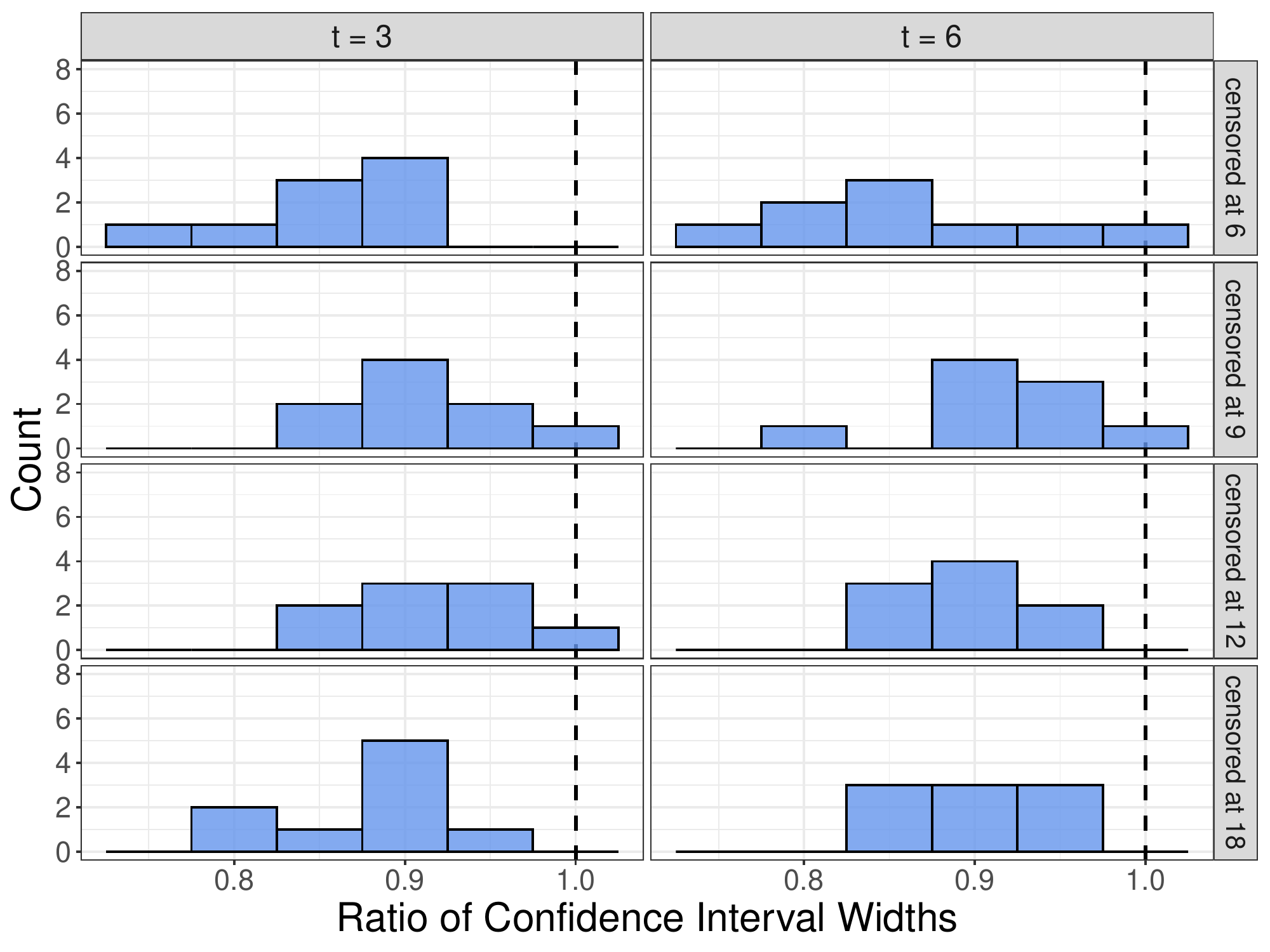}
 \caption{The confidence interval width ratios of using data from the platform trial versus using those from separate independent trials for conditional relative risk ratios, evaluated at $t=3, 6$ and subject to different administrative censoring times.}
 \label{fig:hist_cRRR_ciwidth_ratios}
\end{center}
\end{figure}

\begin{figure}[hbt!]
\begin{center}
 \includegraphics[trim={0cm 0.1cm 0.2cm 0.1cm}, clip, width=0.8\textwidth]
 {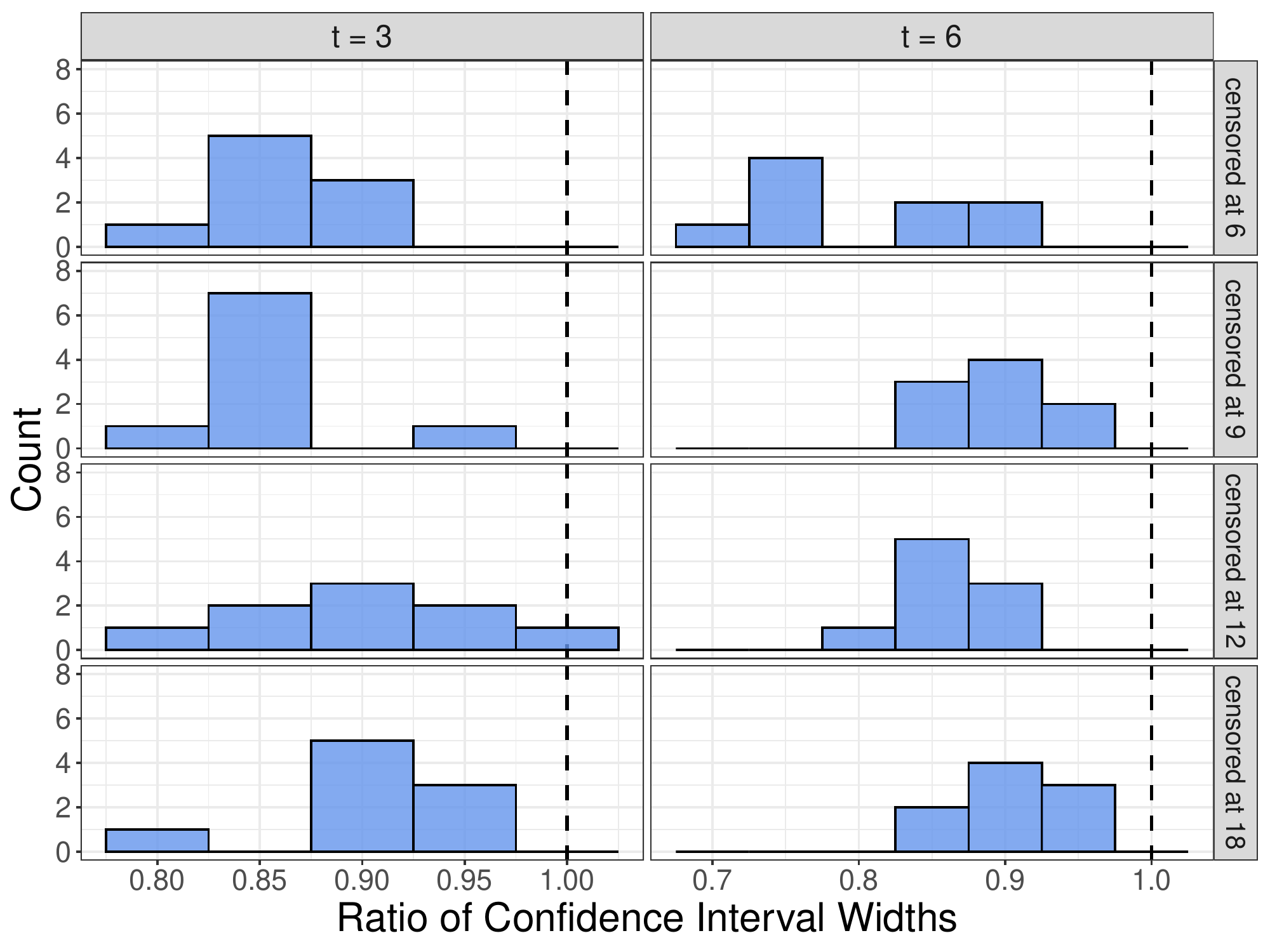}
 \caption{As in Figure \ref{fig:hist_cRRR_ciwidth_ratios}, except for marginal relative risk ratios.}
 \label{fig:hist_mRRR_ciwidth_ratios}
\end{center}
\end{figure}
\end{appendix}

\newpage
\section{Further details on data illustration}\label{app:dataIllustration}
We resampled from HVTN 703/704 data without replacement to create synthetic platform trial datasets. To do this, active intervention arms were withdrawn from randomization in some windows, so that the low dose group was under randomization in an arbitrary three windows and the high dose group in another three. 
More concretely, there were four types of windows. The first type of window, termed L, only had the low dose arm under randomization. The second, termed H, only had the high-dose arm. The third and fourth, termed B$_{\textnormal{all}}$ and B$_{\textnormal{sub}}$, had both the low- and high-dose arms under randomization. 
All low-dose, high-dose, and control arm participants enrolled in B$_{\textnormal{all}}$ were enrolled into the platform trial. A subset of participants was enrolled from the other windows to achieve a desired proportion of controls shared (25-50\%), which is defined as the number of controls that are shared by the two arms divided by the total number of controls enrolled in the platform trial. In particular, we varied the number of low-dose participants enrolled in windows L and B$_{\textnormal{sub}}$ and high-dose participants enrolled in windows H and B$_{\textnormal{sub}}$ to achieve the desired total number of participants on each active arm and proportion of controls shared. For each proportion of controls shared, the allocation ratio in H and L were fixed at 1:1 and in B$_{\textnormal{sub}}$ was fixed at 1:1:1. 
We used the data from each trial to generated 12,000 total platform trial datasets, where 500 datasets were generated for each of the 24 possible assignments of the four window types (L, H, B$_{\textnormal{all}}$, B$_{\textnormal{sub}}$) to the four chronologically defined windows mentioned earlier. 
The rationale for averaging across these 24 scenarios is that the number of events differs across the 24 possible allocations of (L, H, B$_{\textnormal{all}}$, B$_{\textnormal{sub}}$), and so otherwise misleadingly favorable (or unfavorable) results for the platform trial can be observed, depending on whether B$_{\textnormal{sub}}$ has more (or fewer) events than do L and H.

\begin{figure}[hbt!]
\begin{center}
 \includegraphics[trim={0.1cm 0.1cm 0.1cm 0.7cm}, clip, width=0.7\textwidth]{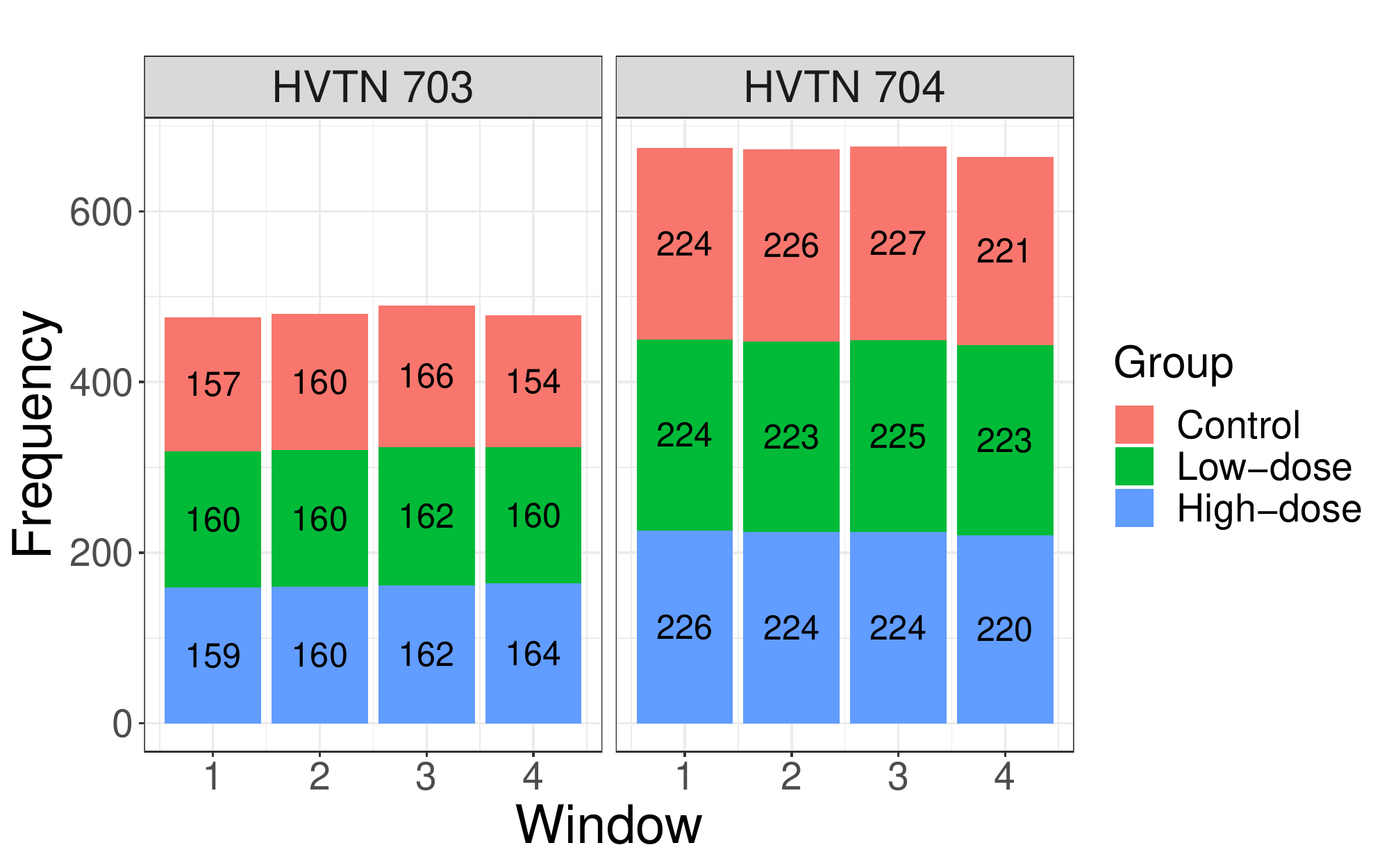}
 \caption{The numbers of participants enrolled for various groups in different windows across the two trials.}
 \label{fig:sample_dist_arms_windows_protocols}
\end{center}
\end{figure}


\newpage
\vskip 2in
\renewcommand\bibname{}
\vspace{-1in}
\bibliographystyle{abbrvnat}

\bibliography{Ref_library}

\end{document}